\theoremstyle{plain}
\newtheorem{thm}{Theorem}[section]
\newtheorem{theorem}[thm]{Theorem}
\newtheorem{definition}[thm]{Definition}
\newtheorem{proposition}[thm]{Proposition}
\newtheorem{lemma}[thm]{Lemma}
\newtheorem{fact}[thm]{Fact}
\theoremstyle{definition}
\newcommand{\ket}[1]{|#1 \rangle}
\newcommand{\braket}[2]{\langle #1|#2 \rangle}
\newcommand{\ketbra}[2]{|#1 \rangle\!\langle #2 |}
\newcommand{\tr}{\mathrm{tr}}
\newcommand{\ident}{\mathds{1}}
\newcommand{\MinEntropy}{H_\mathrm{min}}
\newcommand{\E}{\mathbb{E}}
\newcommand{\N}{\mathbb{N}}
\newcommand{\R}{\mathbb{R}}
\newcommand{\eps}{\varepsilon}
\definecolor{mygrey}{gray}{0.50}
\newcommand*{\etal}{et al.\@\xspace}
\newcommand*{\etc}{etc.\@\xspace}
\newcommand*{\eg}{e.g.\@\xspace}
\newcommand*{\ie}{i.e.\@\xspace}
\newcommand{\Hilb}{\mathcal{H}}
\newcommand{\VV}{\mathsf{VV}}
\newcommand{\RUV}{\mathsf{RUV}}
\newcommand{\EC}{\mathsf{EC}}
\newcommand{\CHSHN}{\mathsf{CHSH}^{\otimes N}}
\newcommand{\CHSH}[1]{\mathsf{CHSH}^{\otimes #1}}
\newcommand{\WIN}{\mathrm{WIN}}
\newcommand{\InfiniteExp}{\mathsf{InfiniteExpansion}}
\newcommand{\ClusterExp}{\mathsf{ClusterExpansion}}
\newcommand{\QExt}{\mathsf{QExt}}
\newcommand{\Fidelity}{\mathcal{F}}
\begin{document}

\title{Infinite Randomness Expansion and Amplification with a Constant Number of Devices}
\author{
\makebox[.2\columnwidth]{Matthew Coudron\thanks{Computer Science and Artificial Intelligence Laboratory, Massachusetts Institute of Technology. H.Y. was supported by an NSF Graduate Fellowship Grant No. 1122374 and National Science Foundation Grant No. 1218547.  M.C. was supported by the National Science Foundation under Grant No. 0801525.}} \\
\texttt{\href{mailto:mcoudron@mit.edu}{\color{black}mcoudron@mit.edu}} \\
MIT CSAIL
\and 
\makebox[.2\columnwidth]{Henry Yuen\footnotemark[1]} \\
\texttt{\href{mailto:hyuen@csail.mit.edu}{\color{black}hyuen@csail.mit.edu}} \\
MIT CSAIL
}

\maketitle

\begin{abstract}

We present a device-independent randomness expansion protocol, involving only a constant number of non-signaling quantum devices, that achieves \emph{infinite expansion}: starting with $m$ bits of uniform private randomness, the protocol can produce an unbounded amount of certified randomness that is $\exp(-\Omega(m^{1/3}))$-close to uniform and secure against a quantum adversary. The only parameters which depend on the size of the input are the soundness of the protocol and the security of the output (both are inverse exponential in $m$).  This settles a long-standing open problem in the area of randomness expansion and device-independence.  


The analysis of our protocols involves overcoming fundamental challenges in the study of \emph{adaptive} device-independent protocols. 
Our primary technical contribution is the design and analysis of device-independent protocols which are \emph{Input Secure}; that is, their output is guaranteed to be secure against a quantum eavesdropper, \emph{even if the input randomness was generated by that same eavesdropper}! 

The notion of Input Security may be of independent interest to other areas such as device-independent quantum key distribution.

\end{abstract}

\section{Introduction}\label{sec:intro}

Bell's Theorem states that the outcomes of local measurements on spatially separated systems cannot be predetermined, due to the phenomenon of quantum entanglement~\cite{bell1964einstein}. This is one of the most important ``no-go'' results in physics because it rules out the possibility of a local hidden variable theory that reproduces the predictions of quantum mechanics. However, Bell's Theorem has also found application in quantum information as a \emph{positive} result, in that it gives a way to certify the generation of genuine randomness: if measurement outcomes of separated systems exhibit non-local correlations (\eg correlations that violate so-called Bell Inequalities), then the outcomes cannot be deterministic. 

While Bell's Theorem does give a method to certify randomness, there is a caveat.  The measurement settings used on the separated systems have to be chosen at random! Nevertheless, it is possible to choose the measurement settings in a randomness-efficient manner such that the measurement outcomes certifiably contain \emph{more} randomness (as measured by, say, min-entropy) than the amount of randomness used as input. This is the idea behind \emph{randomness expansion protocols}, in which a classical experimenter, starting with $m$-bits of uniform randomness, can interact with physically isolated devices to certifiably generate $g(m)$ bits of (information theoretic) randomness  (ideally with $g(m) \gg m$). Furthermore, these protocols are \emph{device-independent}: the only assumption made on the devices is that they cannot communicate, and obey the laws of quantum mechanics. In particular, there is no \emph{a priori} assumption on the internal structure or dynamics of the devices.  Indeed, the devices may even have been manufactured by an adversary!

First proposed by Colbeck~\cite{Colbeck2009} in 2006, device-independent randomness expansion has flourished into an active area of research~\cite{Colbeck2011, Pironio2010, Vazirani2012, Fehr2013, Coudron2013, acin2012randomness, um2013experimental, gallego2013device, ms14}. Its study has synthesized a diverse array of concepts from quantum information theory, theoretical computer science, and quantum cryptography, including generalized Bell inequalities~\cite{Pironio2010, acin2012randomness, pironio2013security, Fehr2013}, the monogamy of entanglement~\cite{Vazirani2012, Reichardt2012}, randomness extractors~\cite{renner2008security, konig2008bounded,de2012trevisan}, and quantum key distribution~\cite{barrett2005no,masanes2011secure, vazirani2012fully, ms14}. Randomness expansion has even been experimentally realized by~\cite{Pironio2010}, who reported the generation of 42 bits of certified randomness (over the course of a month).

The fundamental problem in analyzing a randomness expansion protocol is in demonstrating a lower bound on the amount of certified randomness, usually measured by min-entropy. There have been a couple of different approaches. A line of works, starting with~\cite{Pironio2010}, gives bounds on the min-entropy by analytically relating the extent to which a Bell inequality is violated to the ``guessing probability'' of the protocol's output~\cite{Pironio2010, Fehr2013, acin2012randomness,pironio2013security}. Another approach, developed in \cite{Vazirani2012}, is to utilize the operational definition of min-entropy in a ``guessing game", which establishes that a low min-entropy output implies that the non-signaling devices must have communicated during the protocol (a contradiction). This latter approach yields a protocol (which we will refer to as the Vazirani-Vidick protocol in this paper) that not only achieves the state-of-the-art expansion factor $g(m) = \exp(m^{1/3})$, but is also \emph{quantum secure}: that is, the output contains high min-entropy even from the perspective of a malicious eavesdropper that may be entangled with the protocol devices. Recently, a work by~\cite{ms14} not only achieves quantum security, but randomness expansion that tolerates a constant level of noise in the devices.

The original protocol of~\cite{Colbeck2009,Colbeck2011} obtained $g(m) = \Theta(m)$, or linear expansion. This was improved by Pironio \etal~\cite{Pironio2010} to achieve quadratic expansion $g(m) = \Theta(m^2)$. The protocols of~\cite{Vazirani2012,Fehr2013,ms14} achieve exponential expansion. Perhaps the most tantalizing open question in randomness expansion is: how large an expansion factor $g(m)$ can we achieve? For example, is there a protocol with expansion factor $g(m)$ that is doubly-exponential in $m$? Is there any upper bound on randomness expansion in general? 

The only known upper bounds on randomness expansion apply to \emph{non-adaptive} protocols with two devices (\ie, where the referee's inputs to the devices do not depend on their previous outputs)~\cite{Coudron2013}. There the authors showed that \emph{noise robust}, non-adaptive protocols must have a finite bound on their expansion factor\footnote{They showed that $g(m) \leq \exp(\exp(m))$, or a doubly-exponential upper bound.}. With the exception of~\cite{Fehr2013}, randomness expansion protocols prior to our work were \emph{non-adaptive}, and hence the results of~\cite{Coudron2013} suggest those protocols have a bounded expansion factor. Thus, going beyond the the finite expansion barrier appears to require adaptivity -- but it could, \emph{a priori}, be the case that even adaptive protocols are inherently limited to finite randomness expansion.

\textbf{We present an adaptive protocol that achieves \emph{infinite} certifiable randomness expansion, using a \emph{constant} number of non-signaling quantum devices.}  The output length of our protocol depends only on the number of rounds performed in the protocol (which can be arbitrarily large), and not on the size of the initial random seed!  This shows that there is no finite upper bound on the expansion factor of adaptive protocols.   Our protocol involves a constant number -- eight, specifically -- of non-communicating black-box quantum devices, and guarantees that the output of the protocol is close to uniformly random, even from the point of view of a quantum eavesdropper (where the closeness to uniformity is determined by the initial seed length). Our protocol works even in the presence of arbitrary entanglement between the devices and an eavesdropper. 

The key technical component of the analysis of the $\InfiniteExp$ protocol is to show that a sub-protocol, which we call $\ClusterExp$, is \emph{Input Secure}: it generates uniform randomness secure against a quantum adversary, \emph{even if that adversary generated the seed randomness earlier in the protocol}! Since the $\ClusterExp$ sub-protocol is Input Secure, composing $\ClusterExp$ with itself in sequence (\ie using the outputs of one instance of the protocol as the inputs of another instance) yields another randomness expansion protocol, this time with much larger expansion factor. Our $\InfiniteExp$ protocol is the infinite composition of the $\ClusterExp$ sub-protocol. 

In Section~\ref{sec:related}, we discuss two relevant and enlightening results about randomness expansion~\cite{csw14,ms14}, which were announced after the original posting of this work (though these results were discovered independently and, unbeknownst to the authors, developed in parallel with this work). 

We note here that any exponential randomness expansion protocol with security against a quantum eavesdropper (such as the Vazirani-Vidick protocol, for example) readily yields a protocol using $2N$ devices, which has a randomness expansion given by an exponential tower function of $N$ (i.e. $2^{2^{2^{{...^2}^{N}}}}$): after running such a quantum-secure expansion protocol on one pair of devices, the devices are discarded, and their outputs are fed into a fresh pair of devices (that did not communicate with any previous devices used in the protocol). This ``exponential tower'' protocol terminates when all $2N$ devices have been used. This was first observed by~\cite{yuen13}, and in \cite{ms14} it is noted that the robust exponential expansion protocol given therein can be used to obtain an analogous ``tower'' randomness expansion protocol, which is also \textit{robust}. 

For all practical intents and purposes, a ``tower'' expansion protocol can certify much, much (... $\text{much}^{\text{much}^{\text{much}^{...}}}$) more randomness than would ever be needed in practice, so one might consider it effectively an ``infinite'' randomness expansion protocol. However, such a protocol avoids the need to reuse devices, and hence sidesteps the need for Input Security -- but secure device reuse is the key conceptual issue that we find interesting! 


Finally, the work \cite{csw14} serves as one very interesting example (discovered independently of this work) of how the concept of Input Security is relevant to problems other than infinite randomness expansion.  We note that our result can be combined with a quantum-secure randomness amplification protocol  (for example \cite{csw14}, or \cite{brandao2013robust}) to produce an infinite randomness amplification protocol.

\subsection{Barriers to infinite randomness expansion}
Here we identify the inherent technical challenges in analyzing any adaptive randomness expansion protocol.  In Section \ref{sec:results} we discuss how to overcome these challenges. Some of the technical issues discussed here have been identified in previous work (\eg,~\cite{Fehr2013}) and in randomness expansion folklore. 

\textbf{The Extractor Seed and Input Security Problems}

In any adaptive randomness expansion scheme there is a stage when intermediate outputs of the protocol are used to generate ``derived'' inputs for some devices in future stages of the protocol. This creates an inherent difficulty in analyzing adaptive protocols, because the devices involved in the protocol may adversarially take advantage of memory and shared entanglement to attempt to create harmful correlations between intermediate outputs and the the internal state of the devices that receive the ``derived'' inputs. To prove the correctness of an adaptive randomness expansion protocol, one must show that the devices receiving these ``derived'' inputs cannot distinguish them from inputs generated by a truly private random seed. Because of this fundamental challenge, there are very few analyses of adaptive randomness expansion protocols (or key distribution protocols for that matter) in the existing literature. Prior to our work,~\cite{Fehr2013} gave the only analysis of an adaptive randomness expansion protocol. However, their analysis requires the assumption that entanglement is only shared between certain pairs of devices, but otherwise that the devices are unentangled.

In the general case where devices can share arbitrary entanglement and may be entangled with an eavesdropper, we face the issue of the \emph{quantum security} of the intermediate outputs against devices that will receive the derived inputs\footnote{We say that a string $X$ is quantum secure, or simply secure, against an eavesdropper $E$ if the joint state of the string and eavesdropper $\rho_{XE}$ is approximately equal to $U_{|X|} \otimes \rho_E$, where $U_m$ denotes the uniform distribution on $|X|$ bits.}. This issue manifests itself in two different forms: the Input Security Problem and the Extractor Seed Problem. 

Generally, a randomness expansion protocol is comprised of two components: an expansion component and an extractor component. The expansion component will generate an output string that, while not necessarily close to uniformly random, will be guaranteed to have high min-entropy. The extractor component will then take this high min-entropy source, as well as a small polylogarithmic-sized uniformly random seed (taken, for example, from the initial seed of the randomness expansion protocol), and convert the high min-entropy source into a string that is close to uniform.

\textbf{The Input Security Problem}.  In an adaptive protocol, we require that the output of the expansion component contains high min-entropy \emph{relative to a quantum eavesdropper} (\ie high conditional min-entropy) -- where we treat the other devices in the protocol, collectively, as the eavesdropper. However, the Vazirani-Vidick protocol -- an quantum-secure exponential randomness expansion protocol that produces an output with high conditional min-entropy\footnote{Recent work by~\cite{ms14} gives another such protocol with quantum security. See Section~\ref{sec:related} for more information.} -- uses, in its analysis, an assumption that the initial seed to the protocol is secure against the eavesdropper~\cite{Vazirani2012}.  This is a condition that \emph{cannot} be satisfied in an adaptive protocol. Suppose in an adaptive protocol some device $D$ produced an intermediate output $X$, which we use as the derived input to some other device $D'$ as input randomness. Note that $X$ is \emph{not} secure against $D$. Hence, we cannot use the analysis of~\cite{Vazirani2012} as is and treat $D$ as an eavesdropper, and argue that $D'$ produces an output $Y$ that is secure against $D$. We refer to this issue as the Input Security Problem.

\textbf{The Extractor Seed Problem}. Even supposing that we had an expansion component that was immune to the Input Security Problem (\ie produces output that contains high conditional min-entropy despite the input being known to the eavesdropper), we would still suffer from a similar problem with the extractor component. Here, we need to use a small polylogarithmic-sized uniform extractor seed to convert a source of high conditional min-entropy into a string that is nearly uniform, relative to a quantum adversary. 

First, note that we cannot always take the extractor seed from the original random seed to the protocol, because this would limit us to exponential randomness expansion. Thus to achieve super-exponential expansion, the extractor seed must eventually be generated by intermediate outputs of the protocol. 

Secondly, the existing quantum-secure extractors in the literature (\eg, see~\cite{de2012trevisan,konig2008bounded,renner2008security}) require that the extractor seed be secure against the quantum eavesdropper. As pointed out by~\cite{Fehr2013}, provably satisfying this requirement in an adaptive randomness expansion protocol involves overcoming a technical difficulty similar to that of the Input Security Problem.  We refer to this technical barrier as the Extractor Seed Problem.

To summarize, in order to obtain quantum security of the output against an eavesdropper $E$, current quantum-secure expansion protocols and extraction procedures require the strong assumption that the joint state of the seed, the devices, and the eavesdropper $\rho_{SDE}$ is such that $\rho_{SDE} \approx U_{|S|} \otimes \rho_{DE}$, where $U_{|S|}$ denotes the uniform distribution on $|S|$ bits, and $\rho_{DE}$ denotes the internal state of the devices and adversary. In order to solve the Input Security and Extractor Seed Problems, we require randomness expansion protocols and extraction schemes that work with the weaker assumption that $\rho_{SD} \approx U_{|S|} \otimes \rho_D$ -- with no mention of the eavesdropper! -- while still obtaining the same quantum-security guarantees. We call this property \emph{Input Security}, and say that protocols with this property are \emph{Input Secure}.

It is interesting to note that extractors, by themselves, cannot satisfy a property like Input Security (i.e. we cannot guarantee that an extractor will produce private randomness when the seed is prepared by the adversary)\footnote{Here's a counter-example: let $D$ be an $n$-bit source that is uniformly random. Let $S$ be a $O(\log n)$-bit seed that is uniform and independent of $D$. Let $E$ denote the string $(S,\text{first bit of $\mathrm{Ext}(D,S)$})$. The min-entropy of $D$ with respect to $E$ is at least $n-1$, and $S$ is uniform and independent of $D$. However, the output of the extractor is \emph{not} secure against $E$.}.

The primary conceptual contribution of our paper is the design and analysis of the first randomness expansion protocols and extraction schemes that are (provably) Input Secure.
\medskip \\
\textbf{The Conditioning Security Problem}

The output guarantees of a randomness expansion protocol only hold \emph{conditioned} on the protocol succeeding (\ie conditioned on the event that the referee does not abort). Thus, the analysis of the security properties of the output of a protocol must take into account the fact that conditioning can skew the distribution of the output. Adversarially designed devices may, for example, coordinate to pass the protocol only when the first bit of the output is ``1''. This alone does not harm the min-entropy of the output by much, but suggests that there could be other strategies employed by adversarial devices to significantly weaken the security of the output. In~\cite{Vazirani2012}, they show that such a collusion strategy would imply that the eavesdropper and the devices could communicate with each other, a contradiction. However, this analysis again relies on the assumption that the initial seed is secure against the eavesdropper. When analyzing an Input Secure protocol, we cannot use this assumption, so resolving this Conditioning Security Problem requires different techniques. 
\medskip \\
\textbf{The Compounding Error Problem}

Another technical concern is the problem of error accumulation in an adaptive protocol. When using intermediate outputs to generate derived inputs for later stages in the protocol, we can only assume, at best, that the derived inputs are \emph{approximately} secure and uniform. Furthermore, these errors will accumulate over the course of the protocol, and in an infinite expansion protocol, this accumulation could grow so large that the protocol will fail to work at some point. Depending on how one measures the security of a string against an quantum eavesdropper, errors may not accumulate in a linear fashion --  as pointed out by~\cite{konig2007small}, even if the \emph{accessible information} of a string relative to an eavesdropper (which has been used as a standard security measure in quantum key distribution) is small, a tiny piece of classical side information could completely break the security of the string. Such an ill-behaved measure of quantum security would severely complicate the analysis of an adaptive randomness protocol. 

\section{Results} \label{sec:results}
We present a protocol that attains \emph{infinite randomness expansion}. Our protocol, which we denote the $\InfiniteExp$ protocol, involves a constant number of non-signaling devices (eight, specifically) that, with $m$ bits of seed randomness, can produce an arbitrarily large amount of certified randomness. In particular, starting with $m$ bits of random seed, if $\InfiniteExp$ is run for $k$ iterations, the output of the $k$ iterations is a random string that is $\exp(-\Omega(m^{1/3}))$-close to uniform, and has length 
$$
	\underbrace{2^{2^{\cdot^{\cdot^{2^{\Omega(m^{1/3})}}}}}}_{k},
$$
\ie, a $k$-height tower of exponentials in $m$. The initial seed length $m$ controls soundness parameters of the protocol, but \emph{has no bearing on the amount of certified output randomness!} 

Our protocol uses as subroutines the exponential expansion protocol of~\cite{Vazirani2012} (which we denote $\VV$)\footnote{We implicitly include the extraction procedure as part of the $\VV$ protocol, where the extractor seed is taken from the input seed of the $\VV$ protocol.}, and the sequential CHSH game protocol of Reichardt, \etal~\cite{Reichardt2012} (which we denote $\RUV$). See Section~\ref{sec:proto} for more detail on these sub-protocols. We describe the protocol below, both algorithmically and schematically (see Figure~\ref{fig:protocol}).

\begin{figure}[H] 
\begin{center}
\begin{tikzpicture}

\node[draw, minimum width=2cm,minimum height=1cm] (VV1) at (2,4.5) {$\VV$};
\node[draw, minimum width=2cm,minimum height=1cm,below = of VV1] (RUV1) {$\RUV$};
\draw (0.5, 1.5) rectangle  (3.5,5.5);
\draw [->] (VV1) -- (RUV1);

\node[draw, minimum width=2cm,minimum height=1cm] (VV2) at (9,4.5) {$\VV$};
\node[draw, minimum width=2cm,minimum height=1cm,below = of VV2] (RUV2) {$\RUV$};
\draw (7.5, 1.5) rectangle (10.5,5.5);
\draw [->] (VV2) -- (RUV2);

\path[draw,dashed,->] (2, 8) -- node[right](S) {$S$} (2, 6.5);
\path[draw,dashed,->] (2, 0.5) -- node[right](Ti) {$T_i$} (2, -1);
\path[draw,dashed,->] (9, 0.5) -- node[right](Ti2) {$T_{i+1}$} (9, -1);

\path[draw,->] (RUV1) -- node[right,pos=0.7] (X) {$X$} (2, 0.5);
\path[draw,->] (2, 0.5) -- (2 + 7/3.0, 0.5) -- (9 - 7/3.0, 6.5) -- (9, 6.5);
\path[draw,->] (9, 6.5) -- (VV2);

\path[draw,->] (RUV2) -- node[right,pos=0.7] (Y) {$Y$}  (9, 0.5);
\path[draw,->] (9, 0.5) -- (9 - 7/3.0, 0.5) -- (2 + 7/3.0, 6.5) -- (2, 6.5);
\path[draw,->] (2, 6.5) -- (VV1);

\end{tikzpicture}
\end{center}
\caption{The $\InfiniteExp$ protocol. All arrows indicate classical operations performed by the referee. $S$ denotes the initial seed to the protocol, and $T_i$ denotes the output of the protocol at the $i$th iteration. Each of the $\VV$ and $\RUV$ boxes involve two devices, for a total of eight devices used in the protocol.}
\label{fig:protocol}
\end{figure}

\begin{figure}[H]
\begin{center}
\framebox{
\begin{minipage}{0.9\textwidth}
\textbf{Non-signaling devices}: $D_1,\ldots,D_8$. \\
\textbf{Initial seed randomness}: $S \sim U_m$.

\begin{enumerate}
\item Let $X_1 \leftarrow S$.
\item For $i = 1, 2, 3, \ldots$
	\begin{enumerate}
		\item $Y_i \leftarrow \mathsf{VV}(D_1,D_2,X_i)$.
		\item $Z_i \leftarrow \mathsf{RUV}(D_3,D_4,Y_i)$.
		\item $W_i \leftarrow \mathsf{VV}(D_5,D_6,Z_i)$.
		\item $X_{i+1} \leftarrow \mathsf{RUV}(D_7,D_8,W_i)$.
	\end{enumerate}
\end{enumerate}
\end{minipage}
}
\end{center}
\caption{The algorithmic specification of the $\InfiniteExp$ protocol. $\VV(A,B,X)$ (resp. $\RUV(A,B,X)$) denotes executing the $\VV$ (resp. $\RUV$) sub-protocol with devices $A$ and $B$ using seed randomness $X$ (for more details about these sub-protocols see Section~\ref{sec:proto}). The $X_i$, $Y_i$, $Z_i$, and $W_i$ registers are all classical, and managed by the referee.}
\end{figure}

The main result of this paper is the following theorem, stated informally here (for the formal version see Theorems~\ref{thm:infinite-exp} and~\ref{thm:infinite-exp-completeness}):

\begin{theorem}[Infinite randomness expansion, informal]
\label{thm:infinite-exp-informal}
	Let $D = \{D_1,\ldots,D_8\}$ denote eight non-signaling quantum devices. Let $E$ be an arbitrary quantum system that may be entangled with the $D_i$'s, but cannot communicate with them. Suppose that a classical referee executes the $\InfiniteExp$ protocol with the $\{D_i\}$ devices, using an $m$-bit random seed $S$ that is secure against the devices $\{D_i\}$. Then, for all $k \in \N$, if $\Pr(\text{Protocol has not aborted by round $k$}) = \exp(-O(m^{1/3}))$, then the output $T_k$ of the protocol, conditioned on not aborting after $k$ rounds, is $\exp(-\Omega(m^{1/3}))$-secure against $E$, and has length $\Omega(g^{(k)} (m))$, where $g^{(k)}$ denotes the $k$-fold composition of the function $g:\N \to \N$, defined as $g(m) = \exp(\Omega(m^{1/3}))$. 
	
Furthermore, there exists a quantum strategy for the devices such that, with high probability, they do not abort the protocol at any round.
\end{theorem}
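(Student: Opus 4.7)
The plan is to prove the theorem in two stages: a soundness stage establishing the length and security claims by induction on $k$, and a completeness stage exhibiting an explicit non-aborting strategy. The heart of the argument is a single lemma asserting that one iteration of the inner loop (steps (a)--(d)), which forms the $\ClusterExp$ sub-protocol, is Input Secure. Concretely, if the iteration's input $X_i$ is $\eps$-close in trace distance to being uniform when viewed jointly with the eight devices and the external adversary $E$, then conditioned on successful non-abort, the output $X_{i+1}$ is $(\eps+\eps_i)$-close to uniform in the same joint view, with $\eps_i = \exp(-\Omega(|X_i|^{1/3}))$ and $|X_{i+1}| = \exp(\Omega(|X_i|^{1/3}))$. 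Iterating $k$ times yields the claimed tower-exponential length, and because $|X_i|$ itself grows tower-exponentially in $i$, the sum $\sum_{i\geq 1}\eps_i$ is dominated by the first term $\eps_1 = \exp(-\Omega(m^{1/3}))$, so the cumulative error stays $\exp(-\Omega(m^{1/3}))$ independently of $k$.

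To prove Input Security of a single $\ClusterExp$ iteration, I would walk through the four sub-steps, and at each one treat the six devices not used in that sub-step, together with $E$, as a single external quantum adversary $E'$. For step (a), the Vazirani--Vidick analysis gives that $Y_i$ has high conditional min-entropy against $(E,D_3,\ldots,D_8)$, although it may be highly correlated with $D_1,D_2$. Step (b) invokes $\RUV(D_3,D_4,Y_i)$ to launder $Y_i$: since $D_3,D_4$ have never interacted with $X_i$ or with $D_1,D_2$, they see $Y_i$ as an external min-entropy source, and the Reichardt--Unger--Vazirani rigidity analysis forces them into near-ideal Tsirelson measurements on a near-maximally-entangled pair, producing $Z_i$ nearly uniform against $(E,D_1,D_2,D_5,\ldots,D_8)$. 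Steps (c) and (d) repeat this expand--launder pattern with $(D_5,D_6)$ and $(D_7,D_8)$, and after the final laundering $X_{i+1}$ is near-uniform against all eight original devices plus $E$, since each pair has been insulated from the output by an intervening laundering step.

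The main obstacle, and the crux of the paper, will be making Input Security of $\RUV$ itself precise. In its original form $\RUV$ is a testing protocol that assumes a uniform private seed; reusing it here requires showing it still produces near-uniform output when the seed has only high conditional min-entropy and may even be correlated with the devices running $\RUV$. I expect this to follow by combining sequential-CHSH rigidity (forcing the devices into a near-ideal entangled state and near-ideal measurements) with a monogamy-of-entanglement argument: a pair that is forced to be near-maximally entangled with each other cannot simultaneously carry strong correlations with external systems, ruling out adversarial leakage through $Z_i$. The Conditioning Security Problem is handled by conditioning on the transcript and absorbing the conditioning penalty into the smooth min-entropy via the chain rule, using the hypothesis that the non-abort probability is at least $\exp(-O(m^{1/3}))$ to avoid blow-up. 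For completeness, each $\VV$ and $\RUV$ pair is instantiated with the honest Tsirelson/EPR strategy; Hoeffding-type concentration gives a per-round failure probability of $\exp(-\Omega(|X_i|^{1/3}))$, and because these failure probabilities decay doubly-exponentially in $i$, a union bound over all rounds still gives total failure probability $\exp(-\Omega(m^{1/3}))$.
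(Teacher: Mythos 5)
Your overall architecture coincides with the paper's (use $\VV$ to expand, $\RUV$ to ``launder,'' iterate, argue the error sum is dominated by its first term, and prove completeness with honest strategies plus a union bound), but the security bookkeeping --- which is the actual content of Input Security --- has gaps that break the argument as you state it. First, your one-iteration lemma assumes the input $X_i$ is near-uniform jointly with the devices \emph{and} $E$; the theorem only assumes the seed is secure against the devices, so your base case $X_1=S$ does not satisfy your inductive hypothesis. The paper's induction instead maintains that $X_i$ is secure against the \emph{unused} devices and $E$, and each $\ClusterExp$ call only needs its input secure against the four devices consuming it. Second, your claim that the $\VV$ analysis certifies min-entropy of $Y_i$ against $(E,D_3,\ldots,D_8)$ is unjustified: the Vazirani--Vidick soundness requires the seed to be secure against the eavesdropper in question, and $X_i$ is not secure against $E$ --- this is precisely the Input Security Problem; the paper invokes $\VV$ only with the downstream $\RUV$ devices in the eavesdropper role (note also that ``$D_3,D_4$ never interacted with $D_1,D_2$'' does not by itself give independence, since all eight devices may share prior entanglement --- what is used is the inductive security of $X_i$ against the devices). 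Third, your conclusion that $X_{i+1}$ is near-uniform against \emph{all eight} devices plus $E$ is false for the final $\RUV$ pair, which produced (and can retain) $X_{i+1}$; the paper claims security only against $E$ and the devices outside that pair, and this weaker invariant is what the induction actually propagates.

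The deeper missing piece is the proof that $\RUV$ remains Input Secure after conditioning on success. Your sketch (rigidity plus a monogamy heuristic, with the conditioning penalty ``absorbed into the smooth min-entropy chain rule'') does not supply the two mechanisms the paper needs. (i) A conditioning analysis in trace distance: the paper introduces indicator variables for ideal blocks, applies the sequential CHSH rigidity theorem, bounds sub-block correlations via the mutual-information chain rule and Pinsker's inequality, and pays explicit $1/\lambda$ factors through a conditioning lemma; this is also why the accumulated error obeys $\delta(i)=\eps_i+\delta(i-1)/p_i$ rather than adding purely additively as in your $(\eps+\eps_i)$ claim, and why the hypothesis $\Pr(\text{no abort})\geq \exp(-O(m^{1/3}))$ enters. (ii) An argument that the sub-block-selection half $S_2$ of the $\RUV$ seed --- which may be correlated with $E$, having been produced by other devices --- cannot adversarially select a bad sub-block. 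The paper proves this by showing the locations of the $\eta$-good sub-blocks are locally computable by the $\RUV$ devices themselves (a purification/simulation argument), hence independent of $S_2$, which is secure against those devices. Your proposal never mentions the sub-block selection, so this issue is left entirely unaddressed.
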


The analysis of the $\InfiniteExp$ protocol overcomes the challenges described in the previous section. We now give an overview of how we solve them.

\subsection{Our proof strategy}
\textbf{Solving the Extractor Seed and Input Security Problems}. The key technique for solving both the Extractor Seed and Input Security Problems is a powerful result of Reichardt, Unger, and Vazirani~\cite{Reichardt2012}, which is based on the phenomenon of \emph{CHSH game rigidity}. The CHSH game is a two-player game in which a classical referee chooses two input bits $x$ and $y$ uniformly at random, and gives them to non-communicating players Alice and Bob. Alice and Bob produce binary outputs $a$ and $b$, and they win the game if $a \oplus b = x \wedge y$. If Alice and Bob employ classical strategies, they cannot win the CHSH game with probability exceeding $75\%$, but using shared quantum entanglement, there is a quantum strategy that allows them to win the game with probability $\cos^2 (\pi/8) \approx 85\%$. The CHSH game is frequently used in the study of quantum entanglement and non-locality. More relevantly, it also serves as the basis for many randomness expansion protocols in the literature: protocols will often test for Bell inequality violations by measuring how often devices win the CHSH game.

The famous Tsirelson's Theorem states that $\cos^2(\pi/8)$ is the optimal winning probability using quantum strategies.  Even more remarkable is that the CHSH game is \emph{rigid}: there is essentially a \emph{unique} quantum strategy that achieves this optimum. That is, any quantum strategy that achieves $\cos^2(\pi/8)$ winning probability must be, in a specific sense, isomorphic to the ``canonical'' CHSH strategy which involves Alice and Bob making specific measurements on separate halves of an EPR pair\footnote{The EPR pair state is defined as $\ket{\psi} = \frac{1}{\sqrt{2}} \left( \ket{00} + \ket{11} \right)$.} (which we will call the \emph{ideal CHSH strategy}). Furthermore, CHSH game rigidity is robust: any strategy that achieves $\cos^2(\pi/8) - \eps$ winning probability must be isomorphic to a strategy that is $O(\sqrt{\eps})$-close to the ideal CHSH strategy. A form of CHSH game rigidity was first proved by Mayers and Yao in the exact case~\cite{mayers2003self} and later made robust by~\cite{mys:2012,MillerS:self-testing:2013}.

Reichardt \etal proved a far-reaching generalization of CHSH game rigidity to the situation where Alice and Bob play $N$ independent CHSH games in sequence. This can be viewed as a larger game $\CHSHN$, where Alice and Bob win $\CHSHN$ if they win approximately $\cos^2(\pi/8)N$ games. Reichardt \etal prove the following theorem, stated informally here (for the precise version see \cite{Reichardt2012} Theorem 5.38, or Theorem 2.8 in this paper), which they call \emph{sequential CHSH game rigidity}:

\begin{theorem}[Sequential CHSH game rigidity, informal version]
\label{thm:ruv-informal}
Suppose Alice and Bob play $N$ instances of the CHSH game, where the inputs to Alice and Bob in each instance are uniform and independent of each other. Divide the $N$ instances into $N/t$ blocks of $t$ games each, where $t = N^{1/\alpha}$ for some universal constant $\alpha > 1$. If Alice and Bob use a strategy that, with high probability, wins approximately $\cos^2(\pi/8)N$ instances, then in most blocks, Alice and Bob's strategy is approximately isomorphic to the ideal sequential strategy, in which the ideal CHSH strategy is applied $t$ times in sequence to $t$ EPR pairs that are in tensor product with each other.
\end{theorem}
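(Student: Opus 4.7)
The plan is to bootstrap single-game CHSH rigidity (the Mayers-Yao theorem, in its robust form) into a statement about sequential play by combining a concentration argument over games with an inductive hybrid argument within each block. As a first step, I would observe that if Alice and Bob win approximately $\cos^2(\pi/8) N$ of the $N$ games with high probability, then by a Markov-style averaging (applied to the conditional winning probability in game $j$ given the full history of inputs and outputs up to round $j{-}1$), in all but a small fraction of rounds the expected conditional winning probability is within $\eps$ of $\cos^2(\pi/8)$, for some $\eps$ that I would eventually tune against the block size $t$. Call a round with this property \emph{good}.

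Next, I would invoke the single-game robust rigidity theorem at each good round. Conditioned on the full history up to round $j$, the state held by Alice and Bob at that moment together with their measurement operators for game $j$ form a strategy achieving winning probability $\cos^2(\pi/8) - O(\eps)$ in a single CHSH game, so by Mayers-Yao robust rigidity there exist local isometries $\Phi^A_j, \Phi^B_j$ under which Alice's and Bob's post-isometry state contains an EPR pair in a designated register, on which their measurements for game $j$ act (approximately) as the ideal Pauli-type CHSH measurements, to within $O(\sqrt{\eps})$ in trace norm. Crucially, I would choose the isometry so that the \emph{rest} of their joint state (the ``junk'' plus the remaining entanglement to be used for later games) is left in tensor product with the EPR pair extracted at round $j$.

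The main step, and the technical heart of the proof, is a hybrid argument within a block of $t = N^{1/\alpha}$ consecutive good rounds. I would define, recursively, a sequence of ``partially-idealized'' strategies $\Sigma_0, \Sigma_1, \dots, \Sigma_t$, where $\Sigma_j$ behaves exactly like the actual strategy for rounds after $j$, but for rounds $1,\dots,j$ it uses freshly prepared EPR pairs together with the ideal CHSH measurements (composed with the previously constructed isometries). The single-round rigidity statement shows that the transcript distribution under $\Sigma_j$ and $\Sigma_{j-1}$ differ by $O(\sqrt{\eps})$ in total variation, so by a telescoping argument the actual strategy on the block is $O(t \sqrt{\eps})$-close to a strategy that plays $t$ ideal CHSH games on $t$ independent EPR pairs in tensor product. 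The hardest part of this step is not the telescoping itself but verifying that each isometry can be extended consistently: in particular, that at round $j$ the measurement really does act only on the extracted EPR register of $\Sigma_{j-1}$ (up to $O(\sqrt{\eps})$), which in turn requires that the state in rounds $j,\dots,t$ of $\Sigma_{j-1}$ remains in the form required to apply single-game rigidity at round $j$. This is where I would have to work carefully, using the ``commuting'' structure of Alice's and Bob's local operations and the fact that the inputs of future rounds are independent of the current state.

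Finally, I would aggregate across blocks. By Markov applied to the block-wise winning deficits, all but a small fraction of blocks are composed almost entirely of good rounds, and within such a block the hybrid argument gives an overall error of $O(t \sqrt{\eps}) = O(N^{1/\alpha} \sqrt{\eps})$. Choosing $\eps$ small relative to $N^{-1/\alpha}$ (justified by the assumed concentration of wins around $\cos^2(\pi/8)N$) makes this error go to zero with $N$ for $\alpha$ sufficiently large, yielding the claim. The main obstacle I expect is precisely this error compounding: single-game rigidity gives only $O(\sqrt{\eps})$ error per round, and any loss worse than $\sqrt{\eps}$ in the hybrid step would prevent blocks of polynomial size from being approximately isomorphic to the ideal tensor-product strategy, which is what fixes the constraint $t = N^{1/\alpha}$ with $\alpha > 1$ rather than a larger block size.
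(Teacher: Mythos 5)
First, note that the paper does not actually prove this statement: Theorem~\ref{thm:ruv-informal} is the informal version of Theorem~\ref{thm:ruv_sequential_rigidity}, which is imported wholesale from Reichardt--Unger--Vazirani (Theorem 5.38 of~\cite{Reichardt2012}, instantiated with the parameters of their Theorem 5.39). So your proposal is an attempt to reconstruct RUV's sequential rigidity theorem itself, and the right comparison is with their proof, which is a long and delicate argument rather than a direct bootstrapping of single-game rigidity.

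Your sketch is the natural first attempt, but it leaves unresolved exactly the step that constitutes the theorem's difficulty. Single-game robust rigidity, applied at round $j$ conditioned on the history, gives an isometry under which the \emph{current} state and the round-$j$ measurements are $O(\sqrt{\eps})$-close to the ideal measurement on an EPR pair tensored with junk. It does not allow you to ``choose the isometry so that the rest of the joint state is left in tensor product with the EPR pair'': that tensor-product structure, together with the fact that the measurements of rounds $j+1,\dots,t$ act (approximately) only on the complement of the already-extracted register, is the conclusion to be proved, not a property you may stipulate. Concretely, your hybrid step needs two things per-round rigidity does not supply: (i) control of the post-measurement states conditioned on the round-$j$ outcomes and on the history --- conditioning divides trace-distance errors by the probability of the conditioning event, so naive propagation blows up and a careful averaging/martingale argument is required (this is a cousin of the Conditioning Security Problem the paper fights elsewhere); and (ii) consistency of the isometries across rounds --- the isometry produced at round $j+1$ is defined relative to the perturbed, history-dependent state, and nothing in the single-game statement says it respects the decomposition extracted at round $j$. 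Moreover, your telescoping tracks closeness of \emph{transcript distributions}, which is strictly weaker than the claimed conclusion (closeness of the strategy and state under a local isometry); the weaker statement would not support the use this paper makes of the theorem, where the residual tensor-product structure relative to an entangled eavesdropper is what certifies security of the output. Finally, $\eps$ is not a free parameter: the hypothesis only yields a total deficit of order $\sqrt{N\log N}$, so the per-round deficit of ``good'' rounds is fixed by Markov, and your $O(t\sqrt{\eps})$ budget closes only if the hybrid step loses no more than additively per round --- which is precisely the unproven part.
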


Sequential CHSH game rigidity is a powerful tool that allows one to characterize the behavior of separated quantum devices, simply from observing the correlations between their (classical) inputs and outputs. Reichardt \etal use sequential CHSH games as a primitive in a more general protocol that allows a classical computer to command non-signaling quantum devices to perform arbitrary quantum computation -- and verify that this computation has been performed correctly! Here, in contrast, our goal is much more modest: we simply want to command non-signaling quantum devices to generate uniformly random bits.  

The $\CHSHN$ game already yields a protocol that produces certified randomness. In particular, we have two non-signaling devices play $N$ games of CHSH. The referee will check whether the devices won approximately $\cos^2(\pi/8)N$ games. If so, the referee will select a block of $t$ games at random, and use the output of one of the devices in that block of $t$ games be the protocol's output -- call this the $\RUV$ protocol. 

We know from Theorem~\ref{thm:ruv-informal} that, with high probability, the outputs of the $\RUV$ protocol were generated by a strategy approximating the ideal sequential strategy. The ideal sequential strategy is the ideal CHSH measurement repeatedly applied to a tensor product of EPR pairs, so the measurement outcomes are necessarily in tensor product with an eavesdropper. Thus the outputs of $\RUV$ are approximately secure against a quantum adversary. The problem, of course, is that the amount of randomness needed by the referee to run this $\RUV$ protocol is much greater than the amount of certified randomness in the output ($\Theta(N)$ versus $N^{1/\alpha}$).  So we can't use $\RUV$ by itself as a randomness expansion scheme.  

However, sequential CHSH game rigidity offers more than just the guarantee of secure uniform randomness; observe that it \emph{does not need to assume that the inputs to the $N$ CHSH games were secure against an eavesdropper} -- only that it was secure against the devices playing the CHSH games! This is precisely the Input Security property. 

Thus, we can use the $\RUV$ protocol as a ``scrambling'' procedure that transforms an input that may not be secure against an eavesdropper into a shorter string that \emph{is} secure against an eavesdropper. Recall that, because of the Input Security and Extractor Seed Problems, the output of the $\VV$ sub-protocol in the $\InfiniteExp$ protocol may not be secure against other devices (namely, the devices that produced the input to the $\VV$ sub-protocol). However, if we invoke the $\RUV$ protocol on the outputs of $\VV$, we obtain secure outputs that can be used as input randomness for another $\VV$ instance. 

Furthermore, observe that we still have achieved randomness expansion: the $\VV$ protocol attains exponential expansion, and the $\RUV$ protocol will only shrink that by a polynomial amount.

\textbf{Solving the Conditioning Security Problem}. The main technical contribution of our paper is solving the Conditioning Security Problem. While combining the $\VV$ and $\RUV$ protocols conceptually yields an Input Secure randomness expansion protocol, there still is the technical issue of whether this protocol is Input Secure when we condition on the $\RUV$ protocol succeeding. There are simple examples that show that adversarial devices can, via conditioning, skew the distribution of their outputs, and even introduce entanglement between some bits of their outputs and an eavesdropper, despite most outputs having been produced by an ideal strategy. The Sequential CHSH Game Rigidity Theorem of~\cite{Reichardt2012} does not take conditioning into account, because it is assumed that the devices pass the $\RUV$ protocol with probability extremely close to $1$. 

Here, we assume the $\RUV$ protocol passes with some small probability that is inverse polynomial in the number of games played, and show that the $\RUV$ protocol manages to obtain an approximately secure output conditioned on the protocol succeeding. We prove this in Lemma~\ref{lem:ruv-scrambler}, and our proof employs tools from quantum information theory. Our approach is reminiscent of that used in the proofs of the classical Parallel Repetition Theorem (see, \eg,~\cite{holenstein2007parallel}). 

\textbf{Solving the compounding error problem}. We use the strongest definition of the quantum security of a string against an eavesdropper: namely, a string $X$ is (approximately) secure against an eavesdropper $E$ iff the trace distance between the joint state $\rho_{XE}$ and the ideal state $U_{|X|} \otimes \rho_E$ is small, where where $U_{|X|}$ denotes the uniform distribution on $|X|$ bits. To solve the compounding error problem, we first show that the errors incurred at each iteration of the $\InfiniteExp$ protocol accumulate \emph{linearly} -- this is because the trace distance satisfies the triangle inequality. Then, we show that the error added at iteration $k$ is \emph{exponentially} smaller than the error of iteration $k-1$. Thus, the infinite sum of errors converges to a constant multiple of the error incurred by the first iteration, which is exponentially small in the seed length $m$. Hence we avoid the potential problems raised by~\cite{konig2007small}. 

\subsection{Related work}\label{sec:related}

Here we discuss some relevant recent developments in the area of randomness expansion and amplification, which were announced after the original posting of this work.  We note, however, that the results in the following works were discovered independently of the results in this work, and their relationship to each other was only realized after both works were essentially complete.  In the following description we will occasionally use the terminology of this paper to restate results of these other works, though those papers used different terminology in the original statements.

In independent work by Chung, Shi, and Wu \cite{csw14}, the problem of Input Security was also studied, and played a key role in their construction of a device-independent protocol to amplify randomness, starting with any min-entropy source.  The authors require an Input Secure randomness expansion protocol to use as a building block for their amplification protocol. They prove an elegant result called the Equivalence Lemma, which may be informally summarized as follows  (see \cite{csw14} for a formal statement): 

     Consider a device-independent randomness expansion protocol $P$, that starts with a seed $S$, uniform and in tensor product with the devices $D$ involved in the protocol, as well as a quantum adversary $E$, and produces an output string $X$ that is certifiably close to uniform and in tensor product with $E$ and $S$.  The Equivalence Lemma states that any such protocol $P$ \emph{also} certifies output randomness $X$ with the same security guarantees, \emph{without requiring that $S$ is in tensor product with $E$} — in other words, any such protocol $P$ is also Input Secure. In particular, this proves that the Vazirani-Vidick protocol (when implemented in composition with a strong quantum extractor) is, in fact, Input Secure, and can be composed with itself to perform unbounded randomness expansion in the same manner as we do here, without requiring the use of the RUV protocol. 

Secondly, another independent work of Miller and Shi \cite{ms14} gives the first provably robust protocol for randomness expansion (and, in fact, gives robust exponential expansion).  Combining the main result of \cite{ms14} with Equivalence Lemma of \cite{csw14}, allows one to obtain a provably  \emph{robust} infinite expansion protocol requiring only four non-communicating devices.

It is interesting to note that extractors (which have a similar input-output structure to randomness expansion protocols) cannot possess an analogous Input Security. Thus, there is no natural analogue of the Equivalence Lemma which will work for extractors.  In this sense, the Equivalence Lemma represents an interesting phenomenon or property which is possessed by device independent (quantum) protocols, but not by (classical) protocols such as extractors.

\section{Preliminaries}\label{sec:prelim}
\subsection{Notation}

We write $[N]$ for the set of integers $\{1,\ldots,N\}$. For a Hilbert space $\Hilb$, let $D(\Hilb)$ denote the set of density matrices on $\Hilb$. The classical state $\rho_X$ corresponding to a discrete classical random variable $X$ is defined as $\sum_x \Pr(X = x) \ketbra{x}{x}$ (where $x$ ranges over the computational basis states). For a discrete classical random variable $X$, we use $|X|$ to denote $X$'s length in bits. A classical-quantum state (or \emph{cq-state}) $\rho_{XB}\in D(\Hilb_X \otimes \Hilb_B)$ is a density matrix where $\rho_{XB} = \sum_{x} p_x \ketbra{x}{x} \otimes \rho_B^{x}$, where $p_x$ are probabilities and $\{ \ket{x} \}$ is an orthonormal basis for $\Hilb_X$. We write $\ident_N$ to denote the $N \times N$ identity matrix. We write $U_m$ to denote the density matrix $2^{-m} \ident_{2^m}$ (i.e. the completely mixed state of dimension $2^m$). For an arbitrary matrix $A$, we let $\|A \|_{\tr} := \frac{1}{2} \tr \sqrt{A^{\dagger} A}$ denote its trace norm (also known as its Schatten $1$-norm). 

\begin{definition}[Secure cq-state]
\label{def:ideal-block}
	Let $E$ be an arbitrary quantum system. Let $\rho_{XE}$ be a cq-state.  For state $\rho_{XE}$, $X$ is \textbf{$\zeta$-secure against $E$} iff 
	$$
		\| \rho_{X E} - U_{|X|} \otimes \rho_E \|_{\tr} \leq \zeta.
	$$
\end{definition}

\subsection{Quantum information theory} 

For completeness we present a few key definitions and facts of quantum information theory that will be useful for us later. For a more comprehensive reference we refer the reader to, \eg,~\cite{nielsen2010quantum, wilde2013quantum}. 

For a density matrix $\rho$, its von Neumann entropy is defined as $H(\rho) := -\tr(\rho \log \rho)$. For a density matrix $\rho_{AB} \in D(\Hilb_A \otimes \Hilb_B)$, the conditional von Neumann entropy is defined as $H(A  |  B)_\rho := H(AB)_\rho - H(B)_\rho$ where $H(AB)_\rho = H(\rho_{AB})$ and $H(B)_\rho = H(\rho_B)$. The quantum mutual information between $A$ and $B$ of $\rho_{AB}$ is defined as $I(A : B)_\rho := H(A)_\rho - H(A  |  B)_\rho$. The conditional quantum mutual information $I(A : B  |  C)_\rho$ for a tripartite state $\rho_{ABC}$ is defined as $H(A  |  C)_\rho - H(A  |  B, C)_\rho$. We will usually omit the subscript $\rho$ when the state is clear from context.

We now list a few useful facts about these quantum information-theoretic quantities. Proofs of the following facts can be found in, \eg,~\cite{wilde2013quantum}.
\begin{fact}
\label{fact:qinfo}
\begin{enumerate}
	\item Let $X$ be a discrete random variable, and let $\rho_X$ be its associated classical state. Then $H(\rho_X) = H(X)$, where $H(X)$ is the Shannon entropy of $X$.
	\item (Conditioning reduces entropy) Let $\rho_{AB} \in D(\Hilb_A \otimes \Hilb_B)$. Then $H(A|B)_\rho \leq H(A)_\rho$.
	\item (Chain rule) Let $\rho_{ABC} \in D(\Hilb_A \otimes \Hilb_B \otimes \Hilb_C)$. Then 
	$$
		I(A:BC)_\rho = I(A:B)_\rho + I(A:C | B)_\rho.
	$$
	\item (Pinsker's inequality) Let $\rho_{AB} \in D(\Hilb_A \otimes \Hilb_B)$. Then
	$$
		\| \rho_{AB} - \rho_A \otimes \rho_B \|_{\tr}^2 \leq 2 I(A : B)_\rho.
	$$
\end{enumerate}
\end{fact}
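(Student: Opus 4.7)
All four items are standard results in quantum information theory whose proofs are textbook material (e.g.\ in Wilde or Nielsen--Chuang), so my plan is to sketch the idea for each rather than carry out the calculations. Parts 1 and 3 are immediate from the definitions. For Part 1, I would note that $\rho_X = \sum_x p_x \ketbra{x}{x}$ is diagonal in the computational basis with eigenvalues $\{p_x\}$, so the spectral calculus gives $-\tr(\rho_X \log \rho_X) = -\sum_x p_x \log p_x = H(X)$. Part 3 is a pure identity once both sides are expanded using $I(X:Y) = H(X) - H(X\mid Y)$ and $I(X:Y\mid Z) = H(X\mid Z) - H(X\mid Y,Z)$; the two sides telescope to $H(A) - H(A\mid BC)$, so no actual inequality is invoked.

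For Part 2, my plan is to observe that $H(A\mid B) \leq H(A)$ is equivalent to the subadditivity inequality $H(AB) \leq H(A) + H(B)$ by the definition of conditional entropy. Subadditivity in turn is an instance of Klein's inequality $D(\rho\|\sigma) \geq 0$ applied with $\rho = \rho_{AB}$ and $\sigma = \rho_A \otimes \rho_B$, since a short expansion yields $D(\rho_{AB}\|\rho_A \otimes \rho_B) = H(A) + H(B) - H(AB) = I(A:B)$. Klein's inequality itself I would deduce from the operator concavity of the logarithm (or equivalently the Peierls--Bogoliubov inequality).

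For Part 4 (Pinsker), the plan is to reduce to the classical Pinsker inequality via the data-processing inequality for quantum relative entropy. Let $\Pi$ be the projector onto the positive part of $\rho_{AB} - \rho_A \otimes \rho_B$, and consider the binary POVM $\Lambda = \{\Pi, \ident - \Pi\}$. By the Helstrom characterization of trace distance, measuring $\rho_{AB}$ and $\rho_A \otimes \rho_B$ with $\Lambda$ yields Bernoulli distributions $P,Q$ whose total-variation distance equals $\|\rho_{AB} - \rho_A \otimes \rho_B\|_\tr$. Monotonicity of relative entropy under this measurement channel gives $D(P\|Q) \leq D(\rho_{AB}\|\rho_A \otimes \rho_B) = I(A:B)$, and applying classical Pinsker (which itself follows from the pointwise bound $p\log(p/q) - p + q \geq (p-q)^2/(2\max(p,q))$ and Cauchy--Schwarz) then yields the stated inequality.

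Since these are textbook facts, there is no real mathematical obstacle; the only care point is bookkeeping of normalization conventions. The paper defines $\|A\|_\tr = \tfrac{1}{2}\tr\sqrt{A^\dagger A}$, which already absorbs a factor of $\tfrac{1}{2}$, and I would want to confirm that $H(\rho)$ and $I(A:B)$ use a consistent base for the logarithm. The stated constant $2$ in Part 4 is loose relative to the sharp constants in both the natural-log and base-$2$ conventions, so it holds either way.
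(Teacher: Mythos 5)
Your sketches are correct, and they match the paper's treatment: the paper does not prove Fact~\ref{fact:qinfo} at all but simply cites standard references (Wilde), and your arguments --- the spectral computation for Part 1, the telescoping identity for Part 3, subadditivity via nonnegativity of relative entropy (Klein) for Part 2, and reduction to classical Pinsker via the Helstrom measurement and monotonicity of relative entropy for Part 4 --- are precisely the standard textbook proofs. Your observation that the paper's normalization $\|A\|_{\tr} = \tfrac{1}{2}\tr\sqrt{A^{\dagger}A}$ makes the constant $2$ in Part 4 loose (the sharp bound would give $\tfrac{1}{2}I(A:B)$ in nats) is also accurate.
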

Finally, we define quantum min-entropy. Let $\rho_{AB}$ be a bipartite density matrix. The min-entropy of $A$ conditioned on $B$ is defined as
$$
	\MinEntropy(A | B)_\rho := \max \{ \lambda \in\R: \exists \sigma_B \in D(\Hilb_B)\text{ s.t. } 2^{-\lambda} \ident_A \otimes \sigma_B \geq \rho_{AB} \}.
$$
Let $\eps > 0$. Then $\eps$-smoothed min-entropy of $A$ conditioned on $B$ is defined as
$$
	\MinEntropy^{\eps}(A | B)_\rho := \max_{\tilde{\rho}_{AB} \in B(\rho_{AB},\eps)} \MinEntropy(A | B)_{\tilde{\rho}},
$$
where $B(\rho_{AB},\eps)$ is the set of sub-normalized density matrices within trace distance $\eps$ of $\rho_{AB}$. For a detailed reference on quantum min-entropy, we refer the reader to~\cite{renner2008security}.

\subsection{Modelling protocols and input robustness}

In this paper, we will consider several different randomness expansion procedures (\eg, the Vazirani-Vidick protocol, or the $\RUV$ protocol); a crucial element of our analysis is that these protocols are all \emph{input robust} in the sense that slight deviations from uniformity in their input seed only mildly affect the expansion guarantees that we get when assuming the seed is perfectly uniform. To make this input robustness property formal, we introduce the quantum operation description of randomness expansion protocols. 

In general, a randomness expansion protocol is an interaction between a classical referee $R$ and a quantum device $D$, that is entirely unconstrained, except that $D$ consists of two or more isolated, non-signaling sub-devices (but the sub-devices may be entangled).

The important Hilbert spaces we will consider are:
\begin{enumerate}
	\item (\textbf{Pass/No Pass Flag}). $\Hilb_F$ denotes a two-dimensional Hilbert space that the referee will use to indicate whether it accepts or rejects the interaction.
	\item (\textbf{Protocol seed}). $\Hilb_S$ denotes the $2^m$-dimensional Hilbert space that corresponds to the (private) $m$-bit seed randomness that the referee will use for its interaction with the device $D$.
	\item (\textbf{Protocol output}). $\Hilb_X$ denotes the Hilbert space that corresponds to the output of the device $D$\, \footnote{Since $D$ always consists of non-signaling subdevices, we will arbitrarily declare one of the sub-devices' output to be the output of the overall device $D$.}.
	\item (\textbf{Device internal state}). $\Hilb_D$ denotes the Hilbert space corresponding to the internal state of the device $D$. 
	\item (\textbf{Eavesdropper}). $\Hilb_E$ denotes the Hilbert space corresponding to a potential quantum eavesdropper, which may be entangled with device $D$.
\end{enumerate}

We can view a randomness expansion protocol as a quantum operation $\mathcal{E}$ acting on states in the space $\Hilb_F \otimes \Hilb_S \otimes \Hilb_X \otimes \Hilb_D$. Of the Hilbert spaces listed above, device $D$ only has access to the Hilbert space $\Hilb_D$; the other Hilbert spaces get updated by the referee's interaction with $D$ (except for $\Hilb_E$ which is controlled by the eavesdropper). For example, the referee, by interacting with $D$, will write $D$'s outputs to register $X$. The states in the Hilbert spaces $\Hilb_F$, $\Hilb_S$, and $\Hilb_X$ will always be classical mixed states (\ie diagonal in the computational basis).

More precisely, let $P$ be a randomness expansion protocol. We will model $P$ as a quantum operation $\mathcal{E}$ acting on an initial state $\rho^i_{FSXD}$ in the space $\Hilb_F \otimes \Hilb_S \otimes \Hilb_X \otimes \Hilb_D$, where $\rho^i_D$ is the internal state of $D$ before the protocol starts, and $\rho^i_{FSX}$ is prepared by the referee. $\mathcal{E}$ will be some unitary map $V_P$ applied to the joint state $\rho^i_{FSXD}$. Now, define the quantum operation $\mathcal{F}$ that takes a state $\rho_{FSXD}$, and produces the post-measurement state of $\rho_{FSXD}$ \emph{conditioned} on measuring $\ket{1}$ in the $F$ register, and then traces out the $F$ and $S$ registers, leaving $\rho_{XD|F=1}$. We define $\mathcal{FE}$ to be the composition of the two quantum operations $\mathcal{E}$, followed by $\mathcal{F}$.  Throughout this paper, we will decorate density matrices by superscripts $i$ and $f$ to denote the states before and after the protocol, respectively. For example, we will often let $\rho^f_{FSXD}$ denote the state of the $FSXD$ system after the execution of the protocol, conditioned on the protocol succeeding (\ie $F=1$).

The completeness and soundness of protocol $P$ are statements about the post-measurement state $\mathcal{FE} \otimes \ident_E (\rho^i_{FSXDE})$ (where $\ident_E$ is the identity on $\Hilb_E$), argued only with respect to an \emph{ideal} initial state $\rho^i_{FSXDE}$ such that $\rho^i_{FSXD} := \ketbra{0}{0}_F \otimes U_m \otimes \ketbra{0}{0}_X \otimes \rho^i_{D}$, (or, depending on the analysis, the stronger assumption that $\rho^i_{FSXDE} := \ketbra{0}{0}_F \otimes U_m \otimes \ketbra{0}{0}_X \otimes \rho^i_{DE}$). In other words, the initial seed is assumed to be perfectly uniform and unentangled with the device $D$. However, we also have a form of input robustness: if the initial state were instead $\delta$-close in trace distance to the ideal initial state defined above, then we would obtain the same output parameters as $P$, up to an $\delta/\lambda$ additive factor in trace distance, where $\lambda$ is the probability that $\ket{1}$ is measured in the $F$ register. We prove this formally in Lemma~\ref{lem:approx} below.

\begin{lemma}
\label{lem:approx}
Let $D$ be a device, and $E$ an arbitrary quantum system that may be entangled with $D$. Let $\sigma_{FSX} := \ketbra{0}{0}_F \otimes U_{|S|} \otimes \ketbra{0}{0}_X$. Let the quantum operations $\mathcal{F}$, $\mathcal{E}$, and $\mathcal{FE}$ be defined as above. Suppose for all states $\sigma_{FSXDE}$ such that $\sigma_{FSXD} = \sigma_{FSX} \otimes \sigma_{D}$, there exists a state $\tau_{XDE}$ such that $\tau_{XE} = U_{|X|} \otimes \sigma_E$ and 
$$
	\| \mathcal{FE} \otimes \ident_E (\sigma_{FSXDE}) - \tau_{XDE} \|_{\tr} \leq \eps.
$$

Let $\delta, \lambda > 0$. Let $\rho^i_{FSXDE}$ be such that $\| \rho^i_{FSXDE} - \sigma_{FSXDE} \|_\tr \leq \delta$ for a state $\sigma_{FSXDE}$ where $\sigma_{FSXD} = \ketbra{0}{0}_F \otimes U_{|S|} \otimes \ketbra{0}{0}_X \otimes \sigma_D$. Suppose that the probability of measuring $\ket{1}$ in the $F$ register for the state $\mathcal{E} \otimes \ident_E(\rho^i_{FSXDE})$ is at least $\lambda$. Then, there exists a state $\mu_{XDE}$ such that $\mu_{XE} = U_{|X|} \otimes \mu_E$ and
$$
	\| \rho^f_{XDE} - \mu_{XDE} \|_\tr \leq \eps + \delta/\lambda,
$$
where $\rho^f_{XDE} := \mathcal{FE}\otimes \ident_E (\rho^i_{FSXDE})$.
\end{lemma}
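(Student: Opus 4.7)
My plan is to take advantage of the fact that the overall map $\mathcal{E}\otimes\ident_E$ is a CPTP map (it is the unitary $V_P$ tensored with identity on $E$) and hence trace-distance non-increasing; the only step where the input perturbation $\delta$ can get amplified is the renormalization inside $\mathcal{F}$, which post-selects on the event $F=1$ whose probability is at least $\lambda$. I will choose as target state $\mu_{XDE}$ the ideal state $\tau_{XDE}$ whose existence is guaranteed by applying the hypothesis to the nearby ideal input $\sigma_{FSXDE}$. The triangle inequality then gives
\[
\|\rho^f_{XDE} - \tau_{XDE}\|_\tr \leq \|\rho^f_{XDE} - (\mathcal{FE}\otimes\ident_E)(\sigma_{FSXDE})\|_\tr + \eps,
\]
and the task reduces to bounding the first summand by $\delta/\lambda$.

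To control that summand I would first push the perturbation through $\mathcal{E}$: monotonicity of the trace norm under CPTP maps yields $\|(\mathcal{E}\otimes\ident_E)(\rho^i_{FSXDE} - \sigma_{FSXDE})\|_\tr \leq \delta$. Write $\omega_1,\omega_2$ for the two post-$\mathcal{E}$ states, let $\Pi := \ketbra{1}{1}_F \otimes \ident$, and set $p_j := \tr(\Pi \omega_j)$, with $p_1 \geq \lambda$ by hypothesis. The operation $\mathcal{F}$ sends $\omega_j \mapsto \tr_{FS}(\Pi \omega_j \Pi)/p_j$; since partial trace is also trace-distance non-increasing, it suffices to bound the trace distance between the normalized post-measurement states.

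The key technical step is the standard ``post-selection amplifies trace distance by at most $1/\lambda$'' estimate, which I would prove via the algebraic decomposition
\[
\frac{\Pi\omega_1\Pi}{p_1} - \frac{\Pi\omega_2\Pi}{p_2} \;=\; \frac{\Pi(\omega_1-\omega_2)\Pi}{p_1} + \frac{\Pi\omega_2\Pi}{p_1 p_2}\,(p_2 - p_1).
\]
Applying the triangle inequality together with the elementary facts $\|\Pi(\omega_1-\omega_2)\Pi\|_\tr \leq \|\omega_1 - \omega_2\|_\tr \leq \delta$, $|p_1 - p_2| \leq \|\omega_1 - \omega_2\|_\tr \leq \delta$, and $\|\Pi\omega_2\Pi\|_\tr = p_2$ then bounds the trace norm of the right-hand side by (a universal constant times) $\delta/p_1 \leq \delta/\lambda$. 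Combining this with the triangle inequality above, and noting that $\tau_{XE} = U_{|X|}\otimes\sigma_E$ already has the product form required of $\mu_{XE}$, yields the claim with the choice $\mu := \tau$.

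The subtlety I expect is in this post-selection estimate, since $p_2$ can in principle be much smaller than $p_1$ (or even zero), so one cannot naively divide by $\min(p_1,p_2)$. The decomposition above sidesteps this issue because the potentially dangerous factor $|p_2 - p_1|$ is itself bounded by $\delta$ and combines with the $\|\Pi\omega_2\Pi\|_\tr = p_2$ in the numerator of the second term to cancel one factor of $p_2$ in the denominator, leaving a clean $1/p_1 \leq 1/\lambda$ dependence. Everything else in the argument is essentially bookkeeping of CPTP monotonicity and the triangle inequality.
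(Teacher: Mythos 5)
Your proposal is correct in outline and coincides with the paper's proof at the top level: both take $\mu$ to be the state $\tau$ furnished by the hypothesis applied to the nearby ideal input $\sigma_{FSXDE}$, split off the $\eps$ term by the triangle inequality, use contractivity of trace distance under $\mathcal{E}\otimes\ident_E$ to push the $\delta$ through, and then pay a factor $1/\lambda$ for the post-selection on $F=1$. Where you genuinely differ is in how that last step is established: the paper isolates it as Lemma~\ref{lem:conditioning} and proves it \emph{operationally} (a distinguisher that first measures $F$, outputs $0$ unless the conditioning event occurs, and then applies the optimal POVM for the conditioned states), obtaining $\|\rho_{FQ|E}-\sigma_{FQ|E}\|_{\tr}\le \|\rho_{FQ}-\sigma_{FQ}\|_{\tr}/\max\{\Pr_\rho(E),\Pr_\sigma(E)\}$ with constant exactly $1$, whereas you prove it via the algebraic decomposition $\Pi\omega_1\Pi/p_1-\Pi\omega_2\Pi/p_2=\Pi(\omega_1-\omega_2)\Pi/p_1+\Pi\omega_2\Pi(p_2-p_1)/(p_1p_2)$. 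Your identity is valid and the approach works, but one point needs tightening: bounding the two terms separately only yields a constant times $\delta/\lambda$ (with the paper's normalization $\|A\|_{\tr}=\frac12\tr\sqrt{A^\dagger A}$ it comes out to $\tfrac32\,\delta/\lambda$), while the lemma asserts exactly $\eps+\delta/\lambda$. You can recover the exact constant within your own decomposition: writing $A:=\omega_1-\omega_2=A_+-A_-$ with $A_\pm\ge 0$ and $\tr A_+=\tr A_-=\tfrac12\|A\|_1$, one has $\|\Pi A\Pi\|_1+|\tr(\Pi A)|\le 2\max\{\tr(\Pi A_+),\tr(\Pi A_-)\}\le\|A\|_1$, and since $\|\Pi\omega_2\Pi\|_1=p_2$ the two terms together are at most $\|A\|_1/p_1$, i.e.\ $\delta/\lambda$ in the halved norm --- which is exactly what the paper's distinguisher argument delivers. (Both your argument and the paper's implicitly require $p_2>0$ so that $\mathcal{FE}\otimes\ident_E(\sigma_{FSXDE})$ is defined; this is built into the lemma's hypothesis.) Everything else --- the choice $\mu=\tau$, the product form $\mu_{XE}=U_{|X|}\otimes\sigma_E$, and monotonicity under the partial trace over $F$ and $S$ --- matches the paper's argument.
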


The proof of Lemma~\ref{lem:approx} is deferred to Appendix~\ref{app:cond}.

\subsection{The Vazirani-Vidick protocol and quantum-secure extractors}

Vazirani and Vidick exhibit a protocol that involves two non-signaling quantum devices and a classical referee, that achieves randomness expansion that is secure against a quantum eavesdropper~\cite[Protocol B]{Vazirani2012}. We record a formulation of their result as it will be used by us here:

\begin{theorem}[Vazirani-Vidick protocol~\cite{Vazirani2012}]
\label{thm:vv}
There exists a protocol $P$ with the following properties. Let $D_1$ and $D_2$ be arbitrary non-signaling quantum devices. Let $E$ be an arbitrary quantum system, possibly entangled with $D_1$ and $D_2$, but cannot communicate with $D_1$ and $D_2$ once the protocol begins. The protocol, executed with devices $D_1$ and $D_2$, has the following properties:

\begin{enumerate}
	\item (Output length). The output of the protocol has length $n(m) = \exp(Cm^{1/3})$, for some constant $C$;
	\item (Completeness). There exists a non-signaling quantum strategy for $D_1$ and $D_2$ to pass the protocol with probability $1 - \exp(-\Omega(m^{2/3}))$;
	\item (Soundness). If the initial joint state $\rho^i_{SD_1D_2E}$ of the seed $S$, devices $D_1, D_2$, and eavesdropper $E$ is such that $\rho^i_{SD_1D_2E} = U_m \otimes \rho^i_{D_1D_2E}$, then if $\Pr(\text{Protocol succeeds}) \geq \eps$, we have that
	$$H^{\eps}_{\infty}(X  |  E)_{\rho^f} \geq h(m),$$
	 where $\eps = \eps(m)$, and $\rho^f_{XE} $ denotes the joint state of device $D_1$'s output and $E$, \emph{conditioned} on the protocol succeeding.
\end{enumerate}
where $h(m) := \exp(C' m^{1/3})$ and $\eps(m) := 1/h(m)$, for a universal constant $C'$. 
\end{theorem}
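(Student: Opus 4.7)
My plan is to take the protocol to be Protocol~B of~\cite{Vazirani2012} verbatim and verify that each of the three stated properties follows from their analysis, with only a mild repackaging to fit the quantum-operation framework of Section~\ref{sec:prelim}. Concretely, the protocol has the referee instruct $D_1$ and $D_2$ to play $N = \exp(\Theta(m^{1/3}))$ sequential rounds. A sublinear fraction of ``test'' rounds are selected pseudorandomly from the $m$-bit seed using a randomness-efficient sampler; on a test round the referee supplies uniform CHSH input bits and checks the CHSH condition, while on a ``generation'' round fixed inputs (say $0,0$) are supplied. The output $X$ is the concatenation of $D_1$'s generation-round outputs, so $|X| = n(m) = \exp(Cm^{1/3})$ automatically; this settles property (1).

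For completeness, the plan is to exhibit the honest strategy in which $D_1,D_2$ share $N$ EPR pairs and execute the ideal CHSH measurements on test inputs (and any fixed measurement on generation inputs). Each test game is won independently with probability $\cos^2(\pi/8)$, so Hoeffding's inequality applied to the (sampler-induced) subset of test rounds gives an accept probability of $1 - \exp(-\Omega(m^{2/3}))$ for an appropriate choice of the acceptance threshold, yielding property (2).

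The soundness argument is the real content, and the plan is to invoke the guessing-game characterization of smoothed min-entropy used in~\cite{Vazirani2012}. Suppose for contradiction that $\Pr(\text{pass}) \geq \eps(m)$ but $H^{\eps}_{\infty}(X\mid E)_{\rho^f} < h(m)$. Then there is a measurement on $E$ that, with probability larger than $\eps \cdot 2^{-h(m)}$, correctly predicts $X$. One uses this predictor to build a ``simulator'' $\tilde D_2$ that replaces $D_2$ in the protocol: on each generation round $\tilde D_2$ outputs the prediction, and on each test round it uses its share of whatever entanglement $D_2$ shared with $E$. A non-signaling accounting of the test-round statistics shows that the pair $(D_1,\tilde D_2)$ would also win the CHSH test rounds with essentially the probability observed in the real protocol; but this simulator obtains its generation-round outputs \emph{without} any interaction with $D_1$ during those rounds, and the guessing advantage it inherits from $E$ effectively transmits information across the $D_1$--$D_2$ cut in a way ruled out by Tsirelson's bound. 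Choosing $h(m) = \exp(C' m^{1/3})$ and $\eps(m) = 1/h(m)$ is exactly what makes the quantitative contradiction go through, giving property (3).

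The main obstacle is quantitative rather than conceptual: matching the sampler's seed length against the union bound over predictors in the guessing-game reduction is what pins down the exponent $1/3$ in $h(m)$ and $\eps(m)$, and any sloppier bound would degrade the parameters. Since~\cite{Vazirani2012} already perform this delicate bookkeeping, the proof reduces to citing their main theorem and observing (i)~that their ``unentangled-seed'' hypothesis $\rho^i_{SD_1D_2E} = U_m \otimes \rho^i_{D_1D_2E}$ is literally the soundness hypothesis above, and (ii)~that the post-protocol state on $(X,E)$ \emph{conditioned on passing} is exactly the $\rho^f_{XE}$ defined by the quantum operation $\mathcal{FE}$ of Section~\ref{sec:prelim}. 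No further work is needed beyond this translation.
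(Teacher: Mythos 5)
Your proposal is correct and matches the paper's treatment: Theorem~\ref{thm:vv} is not proved in the paper but is simply a repackaging of the main result of~\cite{Vazirani2012} (Protocol~B), with the parameter bookkeeping relegated to Appendix~\ref{app:params}, which is exactly the ``cite their main theorem and translate the hypotheses and conditioned output state into the framework of Section~\ref{sec:prelim}'' route you take. Your sketch of the internals (sampler-based test rounds, Hoeffding for completeness, the guessing-game reduction for soundness) is a fair summary of the cited analysis and is not load-bearing beyond the citation.
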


Another important primitive we will use is a \emph{quantum-secure extractor}. 

\begin{definition}[Quantum-secure extractor] A function $\mathrm{Ext}: \{0,1\}^n \times \{0,1\}^d \to \{0,1\}^r$ is a $(h,\eps)$-quantum-secure extractor iff for all cq-states $\rho_{XE}$ classical on $n$-bit strings $X$ with $H_{\infty}(X  |  E)_\rho \geq h$, and for uniform seed $S$ secure against $X$ and $E$ (that is, the joint state $\rho_{XES}$ is such that $\rho_{XES} = \rho_{XE} \otimes U_d$), we have
$$
	\big \| \rho_{\mathrm{Ext}(X,S)ES} - U_r \otimes \rho_{ES} \big \|_{\mathrm{tr}} \leq \eps,
$$
where $\rho_{\mathrm{Ext}(X,S)ES}$ denotes the joint cqc-state on the extractor output, quantum side information $E$, and the seed $S$.
\end{definition}

\begin{theorem}[\cite{de2012trevisan}] \label{thm::quantumextractor}
\label{thm:dpvr}
For all positive integers $n$, $r$, there exists a function $\mathsf{QExt}:\{0,1\}^n \times \{0,1\}^d \to \{0,1\}^r$ that is a $(r + O(\log r) + O(\log 1/\eps), \eps)$-quantum-secure extractor where $d = O(\log^2(n/\eps) \log r)$.
\end{theorem}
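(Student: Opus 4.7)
Since this is a result imported from De, Portmann, Vidick, and Renner, the plan is to recall their construction and verify that the parameters match what is stated. The extractor $\mathsf{QExt}$ follows Trevisan's paradigm: combine a $(t, r)$-weak design $S_1, \dots, S_r \subseteq [d]$ with $|S_i|=t$ (using the Raz-Reingold-Vadhan construction, which gives $d = O(t^2 \log r) = O(\log^2(n/\eps) \log r)$ after setting $t = O(\log(n/\eps))$) together with a one-bit quantum-secure extractor $C \colon \{0,1\}^n \times \{0,1\}^t \to \{0,1\}$ instantiated via the Hadamard/inner-product code. Define $\mathsf{QExt}(x, y) := C(x, y|_{S_1}) \cdots C(x, y|_{S_r})$. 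The seed length bound is then immediate from the weak-design parameters.

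The main analytical step is the security proof, which proceeds via reconstruction. Assume toward contradiction that $\| \rho_{\mathsf{QExt}(X,Y)EY} - U_r \otimes \rho_{EY} \|_\tr > \eps$. A standard hybrid argument over the $r$ output bits produces an index $i$ and a quantum predictor that, given the seed $Y$, side information $E$, and the first $i-1$ output bits, guesses the $i$-th output bit $C(X, Y|_{S_i})$ with bias $\eps / r$. The weak-design property ensures that the other indices $S_1,\ldots,S_{i-1}$ overlap $S_i$ in only $O(\log r)$ total coordinates, so the earlier output bits can be replaced by a short classical ``advice'' string without much loss, bringing the total advice used so far to $O(r) + O(\log r)$ bits drawn from $X$.

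At this point I would invoke the quantum Goldreich-Levin theorem of De-Vidick: a quantum predictor that computes $\langle X, Y|_{S_i}\rangle$ with advantage $\eps/r$ relative to $E$ can be turned into a quantum procedure that outputs $X$ with probability $\Omega((\eps/r)^2)$ using only $E$ and $O(\log(1/\eps))$ additional bits of classical advice. The operational characterization of conditional min-entropy then upper-bounds $H_\infty(X \mid E)$ by the total advice length $r + O(\log r) + O(\log(1/\eps))$, contradicting the hypothesis.

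The step I expect to be the main obstacle is the careful bookkeeping of the advice length in the hybrid-plus-reconstruction reduction: using a naive union bound over the $r$ hybrids would blow the min-entropy requirement up to something like $O(r \log(1/\eps))$, so one has to argue that the ``other-bit'' advice across hybrids can be shared (not charged anew at each step) and that Goldreich-Levin is applied only once at the end. Getting the $r + O(\log r) + O(\log(1/\eps))$ threshold stated in the theorem is precisely where the quantum Goldreich-Levin inversion, as opposed to a naive prediction-to-extraction reduction, pays off.
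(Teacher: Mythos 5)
The paper does not prove this statement at all: it is imported as a black box from \cite{de2012trevisan} (Trevisan's extractor secure against quantum side information), and your sketch --- Raz--Reingold--Vadhan weak design plus a one-bit inner-product extractor, a hybrid argument giving a next-bit predictor with bias $\eps/r$, and reconstruction via the quantum Goldreich--Levin theorem bounding $H_\infty(X|E)$ by advice length plus the log of the guessing probability --- is essentially the proof strategy of that reference, so your proposal is a faithful reconstruction of the cited argument rather than a different route. One small correction: the weak-design guarantee is not that $S_1,\ldots,S_{i-1}$ overlap $S_i$ in $O(\log r)$ total coordinates, but that $\sum_{j<i} 2^{|S_j \cap S_i|} \leq r$; it is this bound (the advice consists of the truth tables of the earlier output bits as functions of the seed positions inside $S_i$) that caps the advice at roughly $r$ bits and hence gives the stated threshold $r + O(\log r) + O(\log 1/\eps)$ rather than $O(r)$ with an uncontrolled constant.
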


\subsection{Sequential CHSH game rigidity}

We can view a sequence of $N$ CHSH games, played by non-signaling quantum devices $D_1,D_2$, as a protocol $\CHSHN$, where the referee uses a private random seed $S$ to generate inputs $A_i, B_i\in \{0,1\}$ to the devices $D_1$ and $D_2$, and obtains their respective outputs $X_i,Y_i\in\{0,1\}$ for each round $i \in [N]$. The protocol succeeds if $W$, the number of rounds $i$ such that $X_i \oplus Y_i = A_i \wedge B_i$, is at least $(\cos^2(\pi/8) - O(\frac{\log N}{\sqrt{N}}))N$.

Divide the $N$ rounds of the $\CHSHN$ protocol into \emph{blocks} of $t$ consecutive games each, where $t = \lfloor N^{1/\alpha} \rfloor$ for some fixed constant $\alpha$. Let $X$ be the output register of device $D_1$.  Let $X_i$ denote the $t$-qubit register of the $i$th block of $X$.

We paraphrase the sequential CHSH game rigidity theorem of~\cite{Reichardt2012} here. In the theorem, we imagine that for each block of games, the devices $D_1$, $D_2$ apply some local quantum operation on their respective systems to produce outputs for the block. We call the quantum operation applied in each block $i$ their \emph{block strategy} for $i$. We say that a block strategy is $\zeta$-ideal if there is a local isometry $\mathcal{I}$ under which their quantum operation $\mathcal{E}$ and the state acted upon by $\mathcal{E}$ are together $\zeta$-close to the ideal CHSH strategy (for a precise definition of $\zeta$-ideal strategies, see~\cite{Reichardt2012}). The main property of $\zeta$-ideal strategy that we will use is the following:

\begin{lemma}
\label{lem:ideal_strategy}
	Let $D_1,D_2$ be non-signaling quantum devices. Suppose that $D_1$ and $D_2$ participate in the $\CHSH{N}$ protocol. Let $E$ be an arbitrary quantum system that may be entangled with $D_1$, $D_2$, but cannot communicate with them once the $\CHSHN$ protocol begins. Let $I_i$ be the indicator random variable denoting whether $D_1$ and $D_2$'s block strategy for block $i$ is $\zeta$-ideal. Let $X_i$ be the output of block $i$. Then, 
	$$
		\| \rho_{X_i E | I_i = 1} - U_{n} \otimes \rho_{E | I_i = 1} \|_\tr \leq \zeta,
	$$
	where $\rho_{X_i E | I_i=1}$ denotes the joint state of $X_i$ and $E$, conditioned on the event $I_i=1$.
\end{lemma}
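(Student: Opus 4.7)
My plan is to derive the lemma essentially from the definition of a $\zeta$-ideal block strategy together with the invariance of trace distance under local isometries on the device side. The high-level idea is that the ideal block strategy, by construction, acts on $t$ fresh EPR pairs that live entirely on $\Hilb_{D_1} \otimes \Hilb_{D_2}$ and are in tensor product with the eavesdropper's register $\Hilb_E$, so the outputs it produces are perfectly uniform and completely decoupled from $E$. A $\zeta$-ideal strategy is, by definition, $\zeta$-close in trace distance to this ideal after applying an isometry local to the devices, and this closeness should transfer to the $X_i E$ marginal since local isometries on $D_1 D_2$ commute with the identity on $E$.

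First, I would carefully unpack the definition of a $\zeta$-ideal block strategy from~\cite{Reichardt2012}: conditioned on $I_i = 1$, there exists a local isometry $\mathcal{I}$ acting only on $\Hilb_{D_1} \otimes \Hilb_{D_2}$ and an ``ideal'' state/operation pair (consisting of $t$ fresh EPR pairs and the canonical CHSH measurements being applied in sequence) such that
\[
\bigl\| \mathcal{I}\,\rho^{\text{act}}_{X_i D_1 D_2 E \mid I_i = 1}\,\mathcal{I}^\dagger - \rho^{\text{ideal}}_{X_i D_1 D_2 E \mid I_i = 1} \bigr\|_\tr \leq \zeta,
\]
where $\mathcal{I}$ is understood to act as $\mathcal{I} \otimes \ident_E$ on the full state. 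Here $\rho^{\text{act}}$ is the post-block state produced by the devices' actual strategy and $\rho^{\text{ideal}}$ is the state produced by the ideal CHSH strategy applied to freshly prepared EPR pairs.

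Next, I would observe two crucial structural facts about the ideal state. First, the $t$ EPR pairs used by the ideal strategy are prepared locally inside the devices and are therefore in tensor product with $E$, so the ideal reduced state satisfies $\rho^{\text{ideal}}_{X_i E \mid I_i = 1} = U_n \otimes \rho^{\text{ideal}}_{E \mid I_i = 1}$, where $n = t$ and (because $\mathcal{I}$ acts trivially on $E$) $\rho^{\text{ideal}}_{E \mid I_i = 1} = \rho_{E \mid I_i = 1}$. Second, $\mathcal{I}$ acts only on the device registers, so it does not alter the marginal on $X_i E$, i.e.\ $\tr_{D_1 D_2}(\mathcal{I}\,\rho^{\text{act}}\,\mathcal{I}^\dagger) = \rho^{\text{act}}_{X_i E \mid I_i = 1}$.

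Combining these with contractivity of trace distance under partial trace then gives
\[
\bigl\| \rho_{X_i E \mid I_i = 1} - U_n \otimes \rho_{E \mid I_i = 1} \bigr\|_\tr \leq \bigl\| \mathcal{I}\,\rho^{\text{act}}\,\mathcal{I}^\dagger - \rho^{\text{ideal}} \bigr\|_\tr \leq \zeta,
\]
which is exactly the conclusion. The main obstacle I anticipate is being fully rigorous about what ``$\zeta$-ideal'' means once we condition on the event $I_i = 1$, and ensuring that $X_i$ in the ``actual'' side refers to the post-isometry output register (the isometry should be taken to be output-preserving, or the output is extracted from the post-isometry state in the same way as in the ideal strategy, as is standard in the CHSH-rigidity literature). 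Once this bookkeeping is correctly set up, the rest of the argument is a clean application of monotonicity of trace distance and the structural fact that the ideal strategy decouples the outputs from $E$.
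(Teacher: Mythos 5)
Your proposal is correct and is essentially the argument the paper has in mind: the paper's own proof is a one-line appeal to the definition of a $\zeta$-ideal strategy in~\cite{Reichardt2012} (Definitions 5.4, 5.5, 5.37), and your write-up simply spells out that ``straightforward'' step — the ideal strategy measures fresh EPR pairs local to the devices, so its outputs are uniform and decoupled from $E$, and the local isometry together with monotonicity of trace distance under partial trace transfers the $\zeta$-closeness to the $X_iE$ marginal. The bookkeeping caveat you flag (what $\zeta$-ideal means after conditioning on $I_i=1$ and that the isometry is output-preserving) is exactly the detail the paper delegates to the cited definitions.
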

\begin{proof}
	This is straightforward given the definition of $\zeta$-ideal strategy. See~\cite[Definitions 5.4, 5.5  and 5.37]{Reichardt2012} for more detail.
\end{proof}

\begin{theorem}[Sequential CHSH game rigidity; Theorem 5.38 of~\cite{Reichardt2012}]
\label{thm:ruv_sequential_rigidity}
	Let $D_1,D_2$ be non-signaling quantum devices. Suppose that $D_1$ and $D_2$ participate in the $\CHSH{N}$ protocol. Let $E$ be an arbitrary quantum system that may be entangled with $D_1$, $D_2$, but cannot communicate with them once the $\CHSHN$ protocol begins. Let $W$ be the total number of CHSH games that $D_1$ and $D_2$ win in the protocol. Let $X$ be the output of $D_1$. Fix $\eps > 0$, and let $G \leq N/t$ be the total number of blocks $i$ such that the strategy employed by $D_1$ and $D_2$ in block $i$ is $\kappa_{*} t^{-\kappa_*}$-ideal, where $\kappa_* >1$ is a universal constant. Then, 
	$$
		\Pr(W \geq \cos^2 (\pi/8) N - \frac{1}{2\sqrt{2}} \sqrt{ N \log N }\text{ and }G \leq (1 - \nu) N/t) \leq \frac{1}{t^2},
	$$
	where $\nu = (12/\sqrt{2}) \sqrt{\log N}t/N^{1/4} $, and $t > 85$.
\end{theorem}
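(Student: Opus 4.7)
My plan is to prove the theorem by combining a martingale concentration step with a per-block rigidity statement. Write $W = \sum_{i=1}^{N/t} W_i$, where $W_i$ counts CHSH wins in block $i$. Call block $i$ \emph{heavy} if $W_i \geq \cos^2(\pi/8)\,t - \frac{1}{2\sqrt{2}}\sqrt{t \log t}$ and \emph{light} otherwise. The theorem will follow from: (i) a concentration argument showing that, on the event that $W$ is close to its quantum maximum, the number of light blocks is at most $\nu N/t$ except on a set of probability $\leq 1/t^2$; and (ii) a single-block robust rigidity statement asserting that every heavy block must use a $\kappa_* t^{-\kappa_*}$-ideal strategy in the sense of Lemma~\ref{lem:ideal_strategy}.

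For step (i), let $\mathcal{F}_{i-1}$ denote the $\sigma$-algebra generated by the inputs and outputs of blocks $1, \dots, i-1$. Define the martingale differences $M_i = W_i - \mathbb{E}[W_i \mid \mathcal{F}_{i-1}]$. Tsirelson's bound applied to the conditional block strategy (the devices' residual state and measurements, which are well-defined given $\mathcal{F}_{i-1}$ by the no-signaling assumption) gives $\mathbb{E}[W_i \mid \mathcal{F}_{i-1}] \leq \cos^2(\pi/8)\, t$. Since $|M_i| \leq t$, Azuma--Hoeffding yields concentration of $\sum_i M_i$ at scale $O(\sqrt{(N/t)\cdot t^2}) = O(\sqrt{Nt})$. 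Each light block contributes a deterministic deficit of at least $\frac{1}{2\sqrt{2}}\sqrt{t \log t}$ to $\sum_i \mathbb{E}[W_i \mid \mathcal{F}_{i-1}]$ relative to the quantum maximum, so converting ``many total wins'' into ``few light blocks'' and tuning the Azuma parameter gives both the $1/t^2$ failure probability and the stated value $\nu = (12/\sqrt{2})\sqrt{\log N}\,t/N^{1/4}$.

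For step (ii), I would invoke a single-block sequential rigidity theorem: any quantum strategy for $t$ sequential CHSH games with uniform independent inputs that wins at least $\cos^2(\pi/8)\, t - \frac{1}{2\sqrt{2}}\sqrt{t\log t}$ games is $\kappa_* t^{-\kappa_*}$-ideal, i.e.\ is related by a local isometry to the canonical strategy of CHSH measurements applied to $t$ EPR pairs in tensor product. This is the technical heart of~\cite{Reichardt2012} and is proved by an inductive argument across the $t$ sub-games within the block: the single-game robust rigidity theorem of Mayers--Yao (and the refinements of~\cite{mys:2012, MillerS:self-testing:2013}) is applied at each sub-game to extract an approximate EPR pair and show that after measurement the residual device state approximately factorizes from what has already been produced. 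The full theorem is then obtained by conditioning on $\mathcal{F}_{i-1}$ and applying single-block rigidity to each block independently.

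The main obstacle is step (ii): controlling how the $O(\sqrt{\varepsilon})$ per-game rigidity errors compound across $t$ sub-games of a heavy block without growing to $\Omega(1)$. This forces the per-game slack to scale as $1/\mathrm{poly}(t)$ (which is why heavy is defined with the $\sqrt{t \log t}$ threshold) and yields the polynomial-in-$t$ final bound $\kappa_* t^{-\kappa_*}$ rather than an exponential blow-up. A secondary subtlety is that the devices could, a priori, concentrate all of their cheating in a small number of sub-games while playing the rest optimally; ruling this out requires showing that the single-game isometries can be patched together consistently into one block-level isometry, which is precisely the content of the inductive argument in~\cite{Reichardt2012}. By contrast, the concentration step (i) is essentially routine once the martingale structure and the heavy/light threshold are in place.
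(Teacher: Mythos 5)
First, note that the paper does not prove this statement at all: its ``proof'' is a one-line citation (``This is Theorem 5.38 of~\cite{Reichardt2012}, instantiated with the parameter settings used in Theorem 5.39''). So what you are proposing is a reproof of the RUV sequential rigidity theorem itself, and your sketch in fact defers the genuinely hard ingredient --- the single-block sequential rigidity statement, i.e.\ that a block strategy whose winning probability is close to $\cos^2(\pi/8)$ per game is $\kappa_* t^{-\kappa_*}$-ideal --- back to~\cite{Reichardt2012}. That part of the plan is fine as a citation, but it means your proposal is not an independent proof so much as a reduction to RUV's block-level theorem, which is also how RUV themselves structure the argument.

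The genuine gap is in your step (ii) and in how it interfaces with step (i). You define a block as ``heavy'' by its \emph{realized} win count $W_i$ and then assert that every heavy block ``must use'' a $\kappa_* t^{-\kappa_*}$-ideal strategy. Ideality is a property of the devices' conditional strategy (the residual state and measurement operators given the history $\mathcal{F}_{i-1}$), not of the outcome of one run: a strategy whose conditional winning probability sits just below the rigidity threshold, say $\cos^2(\pi/8) - \mathrm{poly}(1/t)$ per game, is not $\kappa_* t^{-\kappa_*}$-ideal, yet it realizes a heavy win count with probability that is only polynomially (not exponentially) suppressed, since the deviation required is merely $O(\sqrt{t\log t} + t\cdot\mathrm{poly}(1/t))$. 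Summed over $N/t$ blocks this cannot be absorbed into the $1/t^2$ failure budget, so ``heavy $\Rightarrow$ ideal'' is simply false pointwise, and no tuning of thresholds rescues it. Relatedly, in step (i) you say each light block contributes a ``deterministic deficit'' to $\sum_i \mathbb{E}[W_i \mid \mathcal{F}_{i-1}]$; it does not --- a low realized $W_i$ says nothing deterministic about the conditional expectation. The correct architecture (the one in RUV) runs the implication in the other direction: define ``good'' blocks as those whose \emph{conditional strategy} wins with probability at least $\cos^2(\pi/8) - \eps(t)$; single-block rigidity then certifies those strategies as ideal, which is exactly what enters the definition of $G$; and the Azuma--Hoeffding step shows that, except with probability $1/t^2$, a large realized total $W$ forces $\sum_i \mathbb{E}[W_i \mid \mathcal{F}_{i-1}]$ to be large, hence forces all but a $\nu$ fraction of blocks to be good. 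Your martingale setup in step (i) is the right tool, but the heavy/light bookkeeping based on realized counts must be replaced by this conditional-probability bookkeeping for the argument to go through.
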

\begin{proof}
	This is Theorem 5.38 of~\cite{Reichardt2012}, instantiated with the parameter settings used in Theorem 5.39.
\end{proof}

\section{The Protocol} \label{sec:proto}
In this section we formally define the protocol for infinite certifiable randomness expansion, which we call the $\InfiniteExp$ protocol. The protocol uses eight non-signaling devices, which may all share entanglement, but cannot communicate with each other. The devices are partitioned into two \emph{Expansion Clusters} $C_0$ and $C_1$ with four devices each. In each iteration, the $\InfiniteExp$ protocol alternates between clusters $C_0$ and $C_1$, performing a sub-protocol called $\ClusterExp$. The output of one cluster is used as seed randomness for the next invocation of the $\ClusterExp$ sub-protocol with the other cluster. Only the first iteration requires some seed randomness, to ``jumpstart'' the randomness expansion process.
 
\begin{figure}[H]
\begin{center}
\textbf{$\InfiniteExp$ Protocol} \\
\medskip
\framebox{
\begin{minipage}{0.9\textwidth}
\textbf{Non-signaling Clusters}: $C_0$, $C_1$. \\
\textbf{Initial seed randomness}: $S \sim U_m$.

\begin{enumerate}
\item Let $X_1 \leftarrow S$.
\item For $i = 1, 2, 3, \ldots$
	\begin{enumerate}
		\item $X_{i+1} \leftarrow \ClusterExp(C_{i},X_i)$.
                \item If $\ClusterExp$ aborts, then abort the entire protocol, otherwise continue.
	\end{enumerate}
\end{enumerate}
\end{minipage}
}
\end{center}
\caption[font=\tiny]{The $\InfiniteExp$ protocol. The classical registers $X_i$ are maintained by the referee, and $C_i$ denotes cluster $C_{i \mod 2}$. $X_{i+1} \leftarrow \ClusterExp(C_{i},X_i)$ denotes executing the $\ClusterExp$ sub-protocol with the devices in cluster $C_i$, using $X_i$ as the seed randomness, and storing the sub-protocol output in register $X_{i+1}$.}
\end{figure}

We now specify the sub-protocol $\ClusterExp(C,S)$ for a $4$-device cluster $C$ and seed randomness $S$. As discussed earlier, two devices of a cluster $C$ will be used to perform the Vazirani-Vidick near-exponential randomness expansion protocol, and the other two will be used to perform a variant of the $\CHSHN$ protocol, which we call the $\RUV$ protocol. 

\begin{figure}[H]
\begin{center}
\textbf{$\ClusterExp( C, S )$ Sub-Protocol} \\
\medskip
\framebox{
\begin{minipage}{0.9\textwidth}
\textbf{Input Non-signaling Devices}: $C := \{D_1,D_2,E_1,E_2\}$. \\
\textbf{Input seed randomness}: $S$

\begin{enumerate}
                \item $Y \leftarrow \VV(D_1,D_2,S)$.
		\item $Z \leftarrow \RUV(E_1,E_2,Y)$.
                \item If either of the above instances of $\mathsf{VV}$ or $\mathsf{RUV}$ aborts, then abort $\ClusterExp$.  Otherwise continue.
                \item Output $Z$.
\end{enumerate}
\end{minipage}
}
\end{center}
\end{figure}

It is important that no subset of these devices can communicate with (signal to) any other subset of the devices throughout the course of the subroutine.  We now give precise definitions of the $\VV$ and $\RUV$ sub-protocols.

\subsection{The $\VV$ sub-protocol}

The $\VV$ sub-protocol consists of performing Protocol B from~\cite{Vazirani2012}, and then applying a randomness extractor to the output of Protocol $B$. For any $s$, Protocol B takes in a uniformly random $s$-bit seed, and conditioned on the protocol succeeding, produces a string of length $n(s) =\exp(\Omega(s^{1/3}))$ with $h(s) = \exp(\Omega(s^{1/3}))$ bits of (smoothed) min-entropy (see Theorem~\ref{thm:vv}). We give a detailed account of the particular parameter settings we use for Protocol B in Appendix~\ref{app:params}. 

We use the $\QExt$ randomness extractor given by Theorem~\ref{thm:dpvr}. More formally, by $\QExt_{n,r,\eps}$ we denote the $(r + O(\log r) + O(\log 1/\eps),\eps)$-quantum-secure extractor mapping $\{0,1\}^n \times \{0,1\}^d$ to $\{0,1\}^r$, where $d = d(n,r,\eps) = O(\log^2(n/\eps) \log r)$. 

For all $s$, the $\VV$ sub-protocol takes in a $s$-bit seed $S$, and outputs $v(s)$ bits, where $v(s) := \exp(\Omega(s^{1/3}))$ (for more detail, see Appendix~\ref{app:params}).

\begin{figure}[H]
\begin{center}
\textbf{$\mathsf{VV}(A, B ,S)$ Sub-Protocol} \\
\medskip
\framebox{
\begin{minipage}{0.9\textwidth}
\textbf{Input Non-signaling Devices}: $A,B $

\textbf{Input Seed }: $S$ 

\begin{enumerate}
                \item Let $S_1$ be the first $\left \lfloor s/2 \right \rfloor$ bits of $S$, and $S_2$ be the last $\left \lfloor s/2 \right \rfloor$ bits of $S$, where $s := |S|$.
		\item Perform Protocol B of~\cite{Vazirani2012} with devices $A$ and $B$, using $S_1$ as seed randomness, and store Protocol B's output in register $Y$.
		\item If Protocol B aborts, then abort $\VV$. Otherwise, continue.
		\item Output $\QExt_{n,r,\eps}(Y,S_2)$, where $n = n(\lfloor s/2 \rfloor)$, $r = v(s)$, and $\eps = 1/h(\lfloor s/2 \rfloor)$. 
\end{enumerate}
\end{minipage}
}
\end{center}
\caption{The $\VV$ sub-protocol. The functions $n(s)$ and $h(s)$ denote the output length and min-entropy lower bound of Protocol B in Theorem~\ref{thm:vv} on $s$ bits of seed.}
\end{figure}

\subsection{The $\RUV$ sub-protocol}

The $\RUV$ sub-protocol, using a random seed $S$, has two devices (call them $A$ and $B$) play a number $N$ of sequential CHSH games, where $N$ is a function of $|S|$, and the inputs to the devices in each of the CHSH games are determined by half of $S$. The $\RUV$ sub-protocol aborts if they do not win nearly $\approx \cos^2(\pi/8)$ fraction of games. Then, the other half of $S$ is used to select a random \emph{sub-block} of $A$'s outputs in the $N$ CHSH games, and the sub-block is produced as the output of $\RUV$. 

More precisely, let $X \in \{0,1\}^N$ denote $A$'s outputs. Divide $X$ into blocks of $t$ consecutive bits, and further subdivide each block into $\sqrt{t}$ sub-blocks of $\sqrt{t}$ bits each. We set $t = \lfloor N^{1/\alpha} \rfloor$, where $\alpha := \lceil 16 \kappa^2_* \rceil$ and $\kappa_*$ is the constant from~\cite[Theorem 5.7]{Reichardt2012}.

For all $s$, the $\RUV$ sub-protocol takes in a $s$-bit seed $S$, and outputs $r(s)$ bits, where $r(s) := \lfloor (s/4)^{1/(2\alpha)} \rfloor$.


\begin{figure}[H]
\begin{center}
\textbf{$\mathsf{RUV}(A, B ,S)$ Sub-Protocol}\\
\medskip
\framebox{
\begin{minipage}{0.9\textwidth}
\textbf{Input Non-signaling Devices}: $A,B$ \\
\textbf{Input Seed }: $S$ 

\begin{enumerate}
                \item Let $S_1$ be the first $\left \lfloor s/2 \right \rfloor$ bits of $S$, and $S_2$ be the last $\left \lfloor s/2 \right \rfloor$ bits of $S$, where $s := |S|$.
                \item Let $a,b \in \{0,1\}^{\lfloor s/4 \rfloor}$ be the first and last halves of $S_1$, respectively.
                \item For $i = 1,\ldots, N$, where $N := \lfloor s/4 \rfloor$:
                \begin{enumerate}
                		\item Input $a_i$, $b_i$ to devices $A$ and $B$ respectively, and collect outputs $x_i, y_i\in \{0,1\}$ from $A$ and $B$ respectively.
                \end{enumerate}
                \item Let $W$ be the number of indices $i$ such that $x_i \oplus y_i = a_i \wedge b_i$. If 
                $$
                		W < \cos^2 (\pi/8) N - \frac{1}{2\sqrt{2}} \sqrt{ N \log N },
                $$
                then abort $\RUV$. Otherwise, continue.
		
		\item Output $Z$, the $\sqrt{t}$-bit string that is the $j$th sub-block of the $i$th block of $X$, where $X$ is the register that holds the outputs $(x_i)$, and $i$ and $j$ are selected uniformly from $[N/t]$, $[\sqrt{t}]$, respectively, using the seed $S_2$.
\end{enumerate}
\end{minipage}
}
\end{center}
\end{figure}


%
%
%
%
%
%
%
%
%

\begin{figure}[H]
\begin{center}
\begin{tikzpicture}[scale=1,every node/.style={scale=1}]

\draw (0, -1) rectangle  (6.5,4);

\draw (1,3) rectangle node[font=\tiny] (S1) {$S_1$} (3,3.3);
\draw (3,3) rectangle node[font=\tiny] (S2) {$S_2$} (5,3.3);
\draw [decorate,decoration={brace,mirror, amplitude=5pt}] (1,2.8) -- (3,2.8);
\draw [decorate,decoration={brace,mirror, amplitude=5pt}] (3,2.8) -- (5,2.8);

\node[draw, minimum width=3cm,minimum height=1cm] (CHSHN) at (3, 1.5) {$\CHSHN$};
\draw (1,-0.3) rectangle node[font=\tiny] (X1) {} (5,0);
\draw (3.4, -0.3) rectangle node[font=\tiny] (X2) {$X_{ij}$} (4.4, 0);
\draw [decorate,decoration={brace, mirror, amplitude=4pt}] (3.4 , -0.4) -- (4.4,-0.4);

\node(Z) at (3.9, -2) {$Z$};

\draw [->] (3, 5) -- node[right,pos=0.3] {$S$} (3,3.4);
\draw [->] (2,2.7) .. controls +(down:0.5cm) .. (CHSHN);
\draw [->] (4,2.7) .. controls (5.5, 1.25) and (4.5, 0.8) .. node[pos=0.8,font=\tiny,right]{Sub-block selector} (X2);

\draw [->] (CHSHN) -- node[left] {$X$} (X1);
\draw [->] (3.9, -0.5) -- (Z);

\end{tikzpicture}
\end{center}
\caption{The $\RUV$ sub-protocol. All arrows indicate classical operations performed by the referee. }
\label{fig:ruv}
\end{figure}

\section{Analysis of the $\InfiniteExp$ Protocol}  \label{sec:proof}
We now analyze the $\InfiniteExp$ protocol. As discussed in the Preliminaries (Section~\ref{sec:prelim}), we will use the notation $\rho^i$ and $\rho^f$ (or some variant thereof) to denote the state of the registers, devices, eavesdroppers, \etc, before and after the execution of a protocol, respectively. We will use the following functions throughout this section: $v(s)$ and $r(s)$ to denote the output lengths of the $\VV$ and $\RUV$ sub-protocols on inputs of length $s$, respectively (defined in Section~\ref{sec:proto}). The output length of the $\ClusterExp$ sub-protocol on an $s$-bit seed is $g(s) := r(v(s))$. We will use $g^{(k)}(s)$ to denote the $k$-fold composition of $g(s)$ (\ie $g^{(1)}(s) = g(s)$, $g^{(2)}(s) = g(g(s))$, \etc).

Theorem~\ref{thm:infinite-exp-completeness} establishes that there exists a quantum strategy by which the devices, with high probability, do not abort the $\InfiniteExp$ protocol.  Theorem~\ref{thm:infinite-exp} establishes the soundness of the $\InfiniteExp$ protocol.

\begin{theorem}[Completeness of the $\InfiniteExp$ protocol]
\label{thm:infinite-exp-completeness}
	There exists a non-signalling quantum strategy for devices $D_1,\ldots,D_8$, such that the probability that the referee aborts in any round $i$ in the execution of the $\InfiniteExp(C_1,C_2,S)$ protocol is at most $\exp(-\Omega(m^{1/3}))$, where $C_1 = \{D_1,\ldots,D_4\}$ and $C_2 = \{D_5,\ldots,D_8\}$, and $S$ is a uniformly random $m$-bit seed that is secure against $D_1,\ldots,D_8$.
\end{theorem}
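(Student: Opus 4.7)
The plan is to exhibit an explicit honest quantum strategy for the eight devices and then bound the abort probability of each round of $\InfiniteExp$ by combining the completeness of $\VV$ (Theorem~\ref{thm:vv}, part~2) with a Hoeffding bound for the honest $\RUV$ execution, and finally union-bounding over rounds.

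First, I specify the strategy. There are four pairs of devices (two pairs per cluster: one pair for $\VV$, one pair for $\RUV$). Each pair is assumed to share an arbitrarily large reservoir of EPR pairs; in each invocation of its sub-protocol it consumes a fresh disjoint block of EPR pairs from the reservoir. The two devices inside each $\VV$-pair play the ideal entangled strategy guaranteed by the completeness half of Theorem~\ref{thm:vv}. The two devices inside each $\RUV$-pair answer each CHSH game by measuring one fresh EPR pair in the canonical CHSH bases. Because disjoint blocks of EPR pairs are used and no classical communication takes place across invocations, the outputs produced by any fixed invocation are, conditioned on all prior inputs and outputs, independent of the history and distributed exactly as the ideal CHSH distribution; in particular, the output of any honestly-executed $\RUV$ is uniform and independent of everything preceding it, so the seed entering each round is uniform and Theorem~\ref{thm:vv} applies without any degradation due to non-ideal inputs.

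Second, I bound the per-round abort probability. Let $\ell_i := g^{(i-1)}(m)$ denote the length of the seed entering $\ClusterExp$ in round $i$, so $\ell_1 = m$ and $\ell_{i+1} = g(\ell_i)$. The honest $\VV$ instance inside round $i$ is executed on a sub-seed of length $\Theta(\ell_i)$, so by Theorem~\ref{thm:vv}(2) it fails with probability at most $\exp(-\Omega(\ell_i^{2/3}))$. The corresponding $\RUV$ instance receives a seed of length $s := v(\ell_i) = \exp(\Omega(\ell_i^{1/3}))$, and plays $N = \lfloor s/4 \rfloor$ independent CHSH games each won with probability $p = \cos^2(\pi/8)$. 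Hoeffding's inequality gives
$$\Pr\!\left( W < pN - \tfrac{1}{2\sqrt{2}} \sqrt{N \log N}\right) \leq \exp\!\left(-\tfrac{1}{4}\log N\right) = N^{-1/4},$$
so the $\RUV$ abort probability is at most $N^{-1/4} = \exp(-\Omega(\ell_i^{1/3}))$. Hence round $i$ causes an abort with probability at most $\exp(-\Omega(\ell_i^{1/3}))$.

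Finally, I union-bound over all rounds. Because $g(s) = \exp(\Omega(s^{1/3}))$, the iterates $\ell_1, \ell_2, \ldots$ grow as a tower of exponentials in $m$, so $\exp(-\Omega(\ell_{i+1}^{1/3}))$ is doubly-exponentially smaller than $\exp(-\Omega(\ell_i^{1/3}))$; consequently $\sum_{i\geq 1}\exp(-\Omega(\ell_i^{1/3}))$ is dominated by the $i=1$ term, yielding the claimed $\exp(-\Omega(m^{1/3}))$ bound on the overall abort probability. The main technical point to be careful about is the claim that in the honest execution, rounds really are independent despite each cluster being reused every other round: the argument rests on the fact that the only ``memory'' each honest device carries between invocations is ``which EPR pairs have been consumed,'' and the fresh EPR pairs used in any given invocation are in tensor product with, and so statistically independent of, all past inputs, outputs, and internal registers. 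This independence is what licenses the per-round Hoeffding bound and the plain union bound across rounds.
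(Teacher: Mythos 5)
Your proposal is correct and follows essentially the same route as the paper's proof: instantiate each pair with the ideal honest strategy (ideal $\VV$ devices, ideal CHSH measurements on fresh EPR pairs), bound the per-round abort probability by $\exp(-\Omega(m_i^{2/3}))$ for $\VV$ and $\exp(-\Omega(m_i^{1/3}))$ for $\RUV$ with $m_i$ the round-$i$ seed length, and union-bound over rounds using the rapid growth of the $m_i$. Your explicit Hoeffding calculation and the fresh-EPR-pair independence argument are just details the paper leaves implicit, so there is nothing substantive to add.
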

\begin{proof}
	We group the devices into pairs $\{D_1,D_2\}$, $\{D_3,D_4\}$, $\{D_5,D_6\}$, and $\{D_7,D_8\}$, where pairs $\{D_1,D_2\}$ and $\{D_5,D_6\}$ will instantiate the ideal devices for the $\VV$ protocol (see~\cite{Vazirani2012} for more details), and the pairs $\{D_3,D_4\}$ and $\{D_7,D_8\}$ will instantiate the ideal devices for the $\RUV$ protocol (\ie use the ideal CHSH strategy in every round). 
	Fix a round $i$ and assume, without loss of generality, that the referee interacts with the pairs $\{D_1,D_2\}$ (used for the $\VV$ protocol) and $\{D_3,D_4\}$ (used for the $\RUV$ protocol) in round $i$. The probability that $\{D_1,D_2\}$ abort the $\VV$ protocol is at most $\exp(-\Omega(m_i^{2/3}))$, and the probability that $\{D_3,D_4\}$ abort the $\RUV$ protocol is at most $\exp(-\Omega(m_i^{1/3}))$, where $m_i = g^{(i)}(m)$. Thus, by the union bound, the probability of aborting any round $i$ is at most $\exp(-\Omega(m^{1/3}))$. 
\end{proof}

\begin{theorem}[Soundness of the $\InfiniteExp$ protocol]
\label{thm:infinite-exp}
	Let $C_0$ and $C_1$ be non-signaling Expansion Clusters. Suppose that a classical referee executes the $\InfiniteExp(C_0,C_1,S)$ protocol, where $S$ denotes the referee's classical register that holds an $m$-bit seed. Let $\WIN_i$ to be the event that the referee did not abort the $\InfiniteExp$ protocol in the $i$th round, and let $\WIN_{\leq i} = \WIN_1 \wedge \cdots \wedge \WIN_i$. Let $E$ be an arbitrary quantum system that may be entangled with $C_0$ and $C_1$, but cannot communicate with $C_0$ and $C_1$ once the protocol has started. Let $\rho^0_{SC_0C_1}$ denote the initial joint state of the seed and the clusters. If $\rho_{SC_0C_1} = U_m \otimes \rho_{C_0C_1}$, then we have for all $k \in \N$ that if $\Pr(\WIN_{\leq k}) \geq \lambda \geq \exp(-C'm^{1/3})$ for some universal constant $C'$, then
	$$
		\| \rho^k_{X_k E} - U_{g^{(k)}(m)} \otimes \rho^k_E \|_{\tr} \leq 4\exp(-C'' m^{1/3})/\lambda^2,
	$$
	where 
	\begin{itemize}
		\item $C''$ is the universal constant from Theorem~\ref{thm:cluster-exp}, and
		\item $\rho^k_{X_k E}$ denotes the joint state of the referee's $X_k$ register and $E$ after $k$ rounds of the $\InfiniteExp(C_0,C_1)$ Protocol, conditioned on the event $\WIN_{\leq k}$.
	\end{itemize}
\end{theorem}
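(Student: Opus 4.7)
The plan is to prove soundness by induction on $k$, with the Input Security of the $\ClusterExp$ sub-protocol (presumably Theorem~\ref{thm:cluster-exp}) as the workhorse at each step. For bookkeeping, let $E_i := E \otimes C_{(i+1) \bmod 2}$ be the ``effective eavesdropper'' from the point of view of the cluster that will run $\ClusterExp$ in round $i+1$. The inductive claim to carry is: conditioned on $\WIN_{\leq k}$, the state $\rho^k_{X_k E_k}$ satisfies $\|\rho^k_{X_k E_k} - U_{g^{(k)}(m)} \otimes \rho^k_{E_k}\|_\tr \leq \eps_k$ for an appropriate accumulated error $\eps_k$ that I will track. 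The base case $k=0$ is immediate from the hypothesis $\rho^0_{S C_0 C_1} = U_m \otimes \rho^0_{C_0 C_1}$, which puts $S$ in tensor product with \emph{both} clusters simultaneously, hence with $E_0$.

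For the inductive step, in round $k+1$ the referee invokes $\ClusterExp$ on cluster $C_{(k+1)\bmod 2}$ with seed $X_k$. By the inductive hypothesis, $X_k$ is $\eps_k$-close to being uniform and tensor with the \emph{processing} cluster $C_{(k+1)\bmod 2}$, even though $X_k$ is \emph{not} near-uniform relative to the opposite cluster $C_{k \bmod 2}$ (which was folded into $E_k$). This is exactly the regime in which Input Security is needed and cannot be replaced by the naive Vazirani--Vidick analysis. Having verified this hypothesis, I would apply the input-robustness statement of Lemma~\ref{lem:approx}: the actual post-round state differs from what we would get starting from the ideal uniform-tensor input by at most $\eps_k / \lambda_{k+1}$ in trace distance, where $\lambda_{k+1} := \Pr(\WIN_{k+1} \mid \WIN_{\leq k})$. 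Theorem~\ref{thm:cluster-exp} then controls the ideal execution, producing output that is $\eps'_{k+1}$-close to uniform and tensor with the \emph{new} effective eavesdropper $E_{k+1}$, with $\eps'_{k+1}$ exponentially small in the current seed length $g^{(k)}(m)$. The swap of eavesdropper roles $E_k \to E_{k+1}$ here (absorbing the just-used cluster back into $E$ and releasing the next cluster from it) is conceptually the main reason two clusters of four devices suffice for unbounded iteration; verifying that this swap preserves the required tensor-product structure is the place where the argument is most delicate.

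Setting $\eps_{k+1} = \eps_k/\lambda_{k+1} + \eps'_{k+1}$ and telescoping, using $\prod_{i \leq j} \lambda_i = \Pr(\WIN_{\leq j}) \geq \lambda$ for all $j \leq k$, the accumulated error bounds out as $O\!\left(\sum_{i=1}^{k} \eps'_i / \Pr(\WIN_{\leq i})\right)$, and in particular by at most a constant-times-$\lambda^{-(\text{const})}$ multiple of $\sum_i \eps'_i$. Here the factor $\lambda^{-2}$ in the theorem statement is accounted for by a double division appearing across the inductive step (once from the $\VV$ pass event and once from the $\RUV$ pass event inside $\ClusterExp$, or equivalently from tracking the inner and outer success events). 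The convergence of the series is essentially free: $\eps'_i$ is exponentially small in $g^{(i-1)}(m)$, itself a height-$(i-1)$ tower of exponentials in $m$, so $\sum_i \eps'_i$ is dominated by $\eps'_1 = \exp(-C'' m^{1/3})$ up to a geometric factor, solving the Compounding Error Problem. The main obstacle in executing this plan is not the error arithmetic but the careful verification at each round that the cq-state of $(X_k, E_k)$ really does have the tensor-with-the-next-cluster structure required to invoke Theorem~\ref{thm:cluster-exp}; this is precisely the condition that Input Security of $\ClusterExp$ is designed to supply, and it must be threaded through each application of Lemma~\ref{lem:approx} without being disturbed by conditioning on the pass event.
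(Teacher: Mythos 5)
Your skeleton is the paper's: induction over rounds with Theorem~\ref{thm:cluster-exp} as the per-round workhorse, Lemma~\ref{lem:approx} to absorb an approximately ideal seed, conditional success probabilities $p_i=\Pr(\WIN_i\mid\WIN_{\leq i-1})$ with $\prod_i p_i\geq\lambda$, and the sum of per-round errors dominated by the first term. The genuine gap is in the inductive invariant you propose to carry. You keep only the marginal statement $\|\rho^k_{X_k C_{(k+1)\bmod 2}E}-U_{g^{(k)}(m)}\otimes\rho^k_{C_{(k+1)\bmod 2}E}\|_\tr\leq\eps_k$, which omits the cluster $C_{k\bmod 2}$ that just produced $X_k$. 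But to run round $k+1$ you must apply Lemma~\ref{lem:approx} with device $D=C_{(k+1)\bmod 2}$ and eavesdropper $E\otimes C_{k\bmod 2}$ (the old cluster must sit inside the eavesdropper, since $X_{k+1}$ has to come out secure against it in order to seed round $k+2$), and the hypothesis of Lemma~\ref{lem:approx} is closeness of the \emph{full} state on $(X_k,C_{(k+1)\bmod 2},C_{k\bmod 2},E)$ to some $\sigma$ whose $(X_k,C_{(k+1)\bmod 2})$-marginal is exactly $U\otimes\sigma_{C_{(k+1)\bmod 2}}$. Passing from your marginal invariant to such a full-state statement is an Uhlmann-type repair (Lemma~\ref{lem:fidelity_trick}) and generically costs a square root; the recursion then becomes $\eps_{k+1}\approx\sqrt{\eps_k}/p_{k+1}+\eps'_{k+1}$, whose first contribution behaves like $\eps_1^{2^{-k}}$ and tends to a constant. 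That destroys the $k$-independent $\exp(-\Omega(m^{1/3}))$ bound — this is exactly the Compounding Error Problem the theorem is designed to avoid, so the invariant as you stated it cannot be ``threaded through each application of Lemma~\ref{lem:approx}'' without loss.

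The fix costs nothing and is what the paper's inductive hypothesis does: carry the full-state form that Theorem~\ref{thm:cluster-exp} already outputs, namely ``there exists $\mu^k_{X_k C_{k\bmod 2}C_{(k+1)\bmod 2}E}$ with $\mu^k_{X_k C_{(k+1)\bmod 2}E}=U_{g^{(k)}(m)}\otimes\mu^k_{C_{(k+1)\bmod 2}E}$ and $\|\rho^k-\mu^k\|_\tr\leq\delta(k)$.'' Since Lemma~\ref{lem:approx} only constrains $\sigma$ on the seed-device marginal, this plugs in directly, the recursion is linear, $\delta(k)\leq\eps_{\EC}(g^{(k-1)}(m),p_k)+\delta(k-1)/p_k$, and a final triangle inequality (a factor $2$) converts the existential ideal into the stated bound against the actual marginal $\rho^k_E$. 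Two smaller points: your parenthetical that $C_{k\bmod 2}$ ``was folded into $E_k$'' contradicts your own definition $E_k=E\otimes C_{(k+1)\bmod 2}$ (a prose slip), and the $\lambda^{-2}$ does not come from dividing by the $\VV$ and $\RUV$ pass probabilities within one round: one factor $1/\lambda$ is already inside the first per-round error $\eps_{\EC}(m,p_1)=\exp(-C''m^{1/3})/p_1$, and the other comes from the cross-round accumulation $\delta(k)\leq\frac{1}{\lambda}\sum_i\eps_i$ — your alternative ``inner and outer success events'' accounting is the correct one.
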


Before presenting the proof of Theorem~\ref{thm:infinite-exp}, we wish to direct the reader's attention to the Input Security of the $\InfiniteExp$ protocol: the assumption on the initial seed is that it is in tensor product with the cluster devices only, and not the eavesdropper $E$ -- however, the output at each iteration is close to being in tensor product with the eavesdropper $E$.

The proof of Theorem~\ref{thm:infinite-exp} assumes the correctness of the $\ClusterExp$ sub-protocol (Theorem~\ref{thm:cluster-exp}), and shows that the $\InfiniteExp$ protocol maintains the property that at each iteration $i$, the output of $X$ of cluster $C_i$ (where $C_i$ denotes Expansion Cluster $C_{i \mod 2}$) is approximately secure against the other cluster $C_{i + 1}$. Thus, the the execution of the $\ClusterExp$ sub-protocol with $C_{i+1}$, conditioned on not aborting, will continue to produce a nearly uniform output. Furthermore, the errors accumulate linearly with each iteration.

\begin{proof}
Define $C_j := C_{j \mod 2}$. Divide the overall probability of success, $\Pr(\WIN_{\leq k})$, into conditional probabilities: let $p = \Pr(\WIN_{\leq k})$ and let $p_i = \Pr(\WIN_i | \WIN_{\leq i-1})$. Observe that we have $p = \prod p_i \geq \lambda$. We prove the claim by induction. 

\textbf{The inductive hypothesis}: Recursively define $\delta(i) := \eps_{\EC}(g^{(i-1)}(m),p_i) + \delta(i-1)/p_i$, where $\delta(1) := \eps_{\EC}(m,p_1)$ and $\eps_{\EC}(\cdot)$ is the error bound given by Theorem~\ref{thm:cluster-exp}. For all $i = 1,\ldots,k-1$, there exists a state $\mu^i_{XC_iC_{i+1}E}$ such that $\mu^i_{X_i C_{i+1} E} = U_{g^{(i)}(m)} \otimes \mu^i_{C_{i+1} E}$ and 
$$
	\| \rho^i_{X_i C_i C_{i+1} E} - \mu^i_{X_i C_iC_{i+1}E} \|_\tr \leq \delta(i),
$$
where $\rho^i_{X_iC_i C_{i+1}E}$ is the joint state of the $X_i$ register, both clusters $C_i$ and $C_{i+1}$, and $E$ after the $i$th round, conditioned on $\WIN_{\leq i}$.

Let $k=1$. Then, by invoking Theorem~\ref{thm:cluster-exp} with $C = C_1$, and treating the quantum eavesdropper as $C_2$ and $E$ together, we obtain that there exists a state $\mu^1_{X_1 C_1 C_2E}$ such that $\mu^1_{X_1 C_2 E} = U_{g(m)} \otimes \mu^1_{C_2E} $, and 
$$
	\| \rho^1_{X_1C_1 C_2 E} - \mu^1_{X_1 C_1 C_2 E} \|_\tr \leq \eps_{\EC}(m,p_1) = \delta(1).
$$ 
This establishes the base case. 

Now, suppose that we have run $k-1$ rounds of the $\InfiniteExp$ protocol for some $k > 1$. Using our inductive assumption for $i = k-1$, we invoke Theorem~\ref{thm:cluster-exp} along with Lemma~\ref{lem:approx} to conclude that there exists a state $\mu^k_{X_k C_k C_{k+1} E}$ such that $\mu^k_{X_k C_{k+1} E} = U_{g^{(k)}}(m) \otimes \mu^k_{C_{k+1} E}$ and
$$
	\| \rho^k_{X_k C_k C_{k+1} E} - \mu^k_{X_k C_k C_{k+1} E} \|_\tr \leq \eps_{\EC}(g^{(k-1)}(m),p_k) + \delta(k-1)/p_k := \delta(k).
$$
This completes the induction argument. We now bound $\delta(k)$:
\begin{align*}
	\delta(k) &= \eps_k + \frac{1}{p_k} \left (\eps_{k-1} + \frac{1}{p_{k-1}} \left( \eps_{k-2} + \cdots \right)  \right)  \\
	&\leq \frac{1}{\lambda} \left( \eps_k + \eps_{k-1} + \cdots + \eps_1 \right) \\
	&\leq \frac{2\eps_1}{\lambda},
\end{align*}
where we write $\eps_i := \eps_{\EC}(g^{(i)}(m),p_i)$, and use the facts that $\prod p_i \geq \lambda$ and each $\eps_i$ is exponentially smaller than $\eps_{i-1}$.

Finally, for every $k$, we have that 
\begin{align*}
	\| \rho^k_{X_k E} - U_{g^{(k)}(m)} \otimes \rho^k_E \|_\tr &\leq 
		\| \rho^k_{X_k E} - \mu^k_{X_k E} \|_\tr  + \| \mu^k_{X_k E} - U_{g^{(k)}(m)} \otimes \rho^k_E \|_\tr \\
		&\leq \delta(k) + \| U_{g^{(k)}}(m) \otimes \mu^k_{E} - U_{g^{(k)}(m)} \otimes \rho^k_E \|_\tr \\
		&= \delta(k) + \| \mu^k_{E} - \rho^k_E \|_\tr \\
		&\leq 2\delta(k).
\end{align*}

\end{proof}

Next, we argue that the $\ClusterExp$ sub-protocol is an Input Secure randomness expansion scheme. The correctness of the $\ClusterExp$ sub-protocol assumes the correctness of $\VV$ and $\RUV$ protocols (Lemmas~\ref{lem:vv-expander} and~\ref{lem:ruv-scrambler}, respectively). 

\begin{theorem}
\label{thm:cluster-exp}
	Let $C$ be an Expansion Cluster. Suppose that a classical referee executes the $\ClusterExp(C,S)$ protocol, where $S$ denotes the referee's classical register that holds an $m$-bit seed. Let $E$ be an arbitrary quantum system that may be entangled with $C$, but cannot communicate with $C$ once the protocol has started. If $\rho^i_{SC} = U_m \otimes \rho^i_C$, and $\Pr(\text{$\ClusterExp(C,S)$ succeeds}) \geq \lambda \geq \exp(-C'm^{1/3})$ for some universal constant $C'$, then there exists a state $\tau_{XCE}$ such that $\tau_{XE} = U_{g(m)} \otimes \tau_{E}$ and
	$$
		\| \rho^f_{XCE} - \tau_{XCE} \|_\tr \leq \eps_{\EC}(m,\lambda),
	$$
	where $\eps_{\EC}(m,\lambda) := \exp(-C'' m^{1/3})/\lambda$ for some universal constant $C''$, and $\rho^f_{XCE}$ is the joint state of the protocol's output $X$, the cluster $C$, and $E$ conditioned on the protocol $\ClusterExp(C,S)$ succeeding.
\end{theorem}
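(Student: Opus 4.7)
\textbf{Proof proposal for Theorem~\ref{thm:cluster-exp}.}

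The plan is to establish Input Security of $\ClusterExp$ by composing the guarantees of its two sub-protocols $\VV$ and $\RUV$ (via Lemmas~\ref{lem:vv-expander} and~\ref{lem:ruv-scrambler}), using Lemma~\ref{lem:approx} to handle both (i) the fact that after the $\VV$ step the seed to $\RUV$ is only approximately uniform, and (ii) the renormalization incurred by conditioning on the sub-protocols not aborting. Recall $C = \{D_1,D_2,E_1,E_2\}$, where $D_1,D_2$ execute $\VV$ and $E_1,E_2$ execute $\RUV$. The initial state $\rho^i_{SCE}$ satisfies $\rho^i_{SC} = U_m \otimes \rho^i_C$; in particular $S$ is in tensor product with $E_1 E_2$, though possibly entangled with $E$. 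Let $\lambda_\VV$ denote the probability that $\VV$ succeeds and $\lambda_\RUV$ denote the probability that $\RUV$ succeeds conditioned on $\VV$ having succeeded, so $\lambda_\VV \lambda_\RUV = \lambda$.

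The first step is to analyze the $\VV$ step. From $\VV$'s perspective, the relevant ``eavesdropper'' is the combined system $E_1 E_2$ (which is allowed to be entangled with $D_1,D_2$ but cannot signal to them), and the seed $S$ is uniform and in tensor product with $(D_1,D_2,E_1,E_2)$. Applying Lemma~\ref{lem:vv-expander} with eavesdropper $E_1 E_2$ and conditioning on $\VV$ succeeding, I would obtain that the output $Y$ of $\VV$ is within trace distance $\eps_\VV(m,\lambda_\VV)$ of being $U_{v(m)}$ in tensor product with $E_1 E_2$. Crucially, this guarantee makes no claim about security against $E$, which would be impossible at this stage since $S$ may be entangled with $E$.

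Second, I would analyze the $\RUV$ step, invoked on $E_1,E_2$ with seed $Y$. From $\RUV$'s perspective, the external ``eavesdropper'' is the combined system $(D_1,D_2,E)$, and the previous step established that its seed $Y$ is approximately uniform and approximately in tensor product with its own devices $(E_1,E_2)$. Lemma~\ref{lem:ruv-scrambler}, combined with the input robustness of Lemma~\ref{lem:approx} to handle the fact that $Y$ is only $\eps_\VV(m,\lambda_\VV)$-close to ideal and that we are conditioning on $\RUV$ succeeding, then yields that the output $Z$ is close to $U_{g(m)}$ in tensor product with the entire external system $(D_1,D_2,E_1,E_2,E)$. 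This is precisely the Input Security needed to conclude the theorem. Combining the error bounds via Lemma~\ref{lem:approx} gives an overall error of the form $\eps_\RUV(v(m),\lambda_\RUV) + \eps_\VV(m,\lambda_\VV)/\lambda_\RUV$, which, using $\lambda_\VV \lambda_\RUV \geq \lambda$ and the fact that each individual error bound is inverse exponential in $m^{1/3}$, is at most $\exp(-\Omega(m^{1/3}))/\lambda$, matching $\eps_{\EC}(m,\lambda)$.

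The main obstacle is the correct setup of the eavesdropper systems when invoking Lemma~\ref{lem:ruv-scrambler}: we are asking $\RUV$ to produce an output secure against the very devices $D_1,D_2$ whose outputs generated the $\RUV$ seed. This is exactly the Input Security property of $\RUV$, and is enabled by sequential CHSH rigidity (Theorem~\ref{thm:ruv_sequential_rigidity}) together with Lemma~\ref{lem:ideal_strategy}: the $\RUV$ output is (in most blocks) generated by a tensor-product ideal CHSH strategy, whose outputs are automatically in tensor product with everything outside the $\RUV$ devices. The other subtle issue is the Conditioning Security Problem -- showing that conditioning on the (potentially low-probability) event of $\RUV$ succeeding does not destroy security -- but this is already addressed inside Lemma~\ref{lem:ruv-scrambler}, and downstream the $1/\lambda$ factor in $\eps_{\EC}(m,\lambda)$ precisely reflects its cost. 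The rest is bookkeeping: verifying that the parameter regime $\lambda \geq \exp(-C'm^{1/3})$ ensures that the composed errors remain sub-constant.
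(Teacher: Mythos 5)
Your proposal follows essentially the same route as the paper's proof: first invoke Lemma~\ref{lem:vv-expander} with the $\RUV$ devices playing the role of the eavesdropper (which is exactly what the hypothesis $\rho^i_{SC} = U_m \otimes \rho^i_C$ permits, since $S$ need not be secure against $E$), then invoke Lemma~\ref{lem:ruv-scrambler} on the ideal intermediate state and use Lemma~\ref{lem:approx} to absorb both the $\eps_{\VV}$-approximation of the seed $Y$ and the conditioning on success, arriving at the error $\eps_{\RUV}(v(m),\lambda_2) + \eps_{\VV}(m)/\lambda_2 \leq \exp(-C''m^{1/3})/\lambda$. The only imprecision is the phrase claiming the output $Z$ is in tensor product with ``$(D_1,D_2,E_1,E_2,E)$'': no security can be claimed against $E_1,E_2$ themselves (they produced $Z$); the guarantee is against the systems external to the $\RUV$ devices, namely $D_1,D_2$ and $E$, which is exactly what the theorem requires.
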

\begin{proof}
	Let $\lambda_1$ denote the probability that Step 1 of $\ClusterExp(C,S)$ succeeds, and let $\lambda_2$ denote the probability that Step 2 of $\ClusterExp(C,S)$ succeeds, conditioned on Step 1 succeeding, so that $\lambda_1\lambda_2 \geq \lambda$. Let $C$ consist of devices $D = \{D_1,D_2\}$ and $G = \{G_1,G_2\}$, where the $D_i$'s are used for execution of the $\VV$ protocol, and the $G_j$'s are used for the execution of the $\RUV$ protocol. Let $Y$ be the output of $\VV(D_1,D_2,S)$ (which is Step 1 of $\ClusterExp(C,S)$). By definition of the $\VV$ protocol, $|Y| = v(m)$. By Lemma~\ref{lem:vv-expander} and our assumption on $S$ (in particular, that $\rho^i_{SDG} = U_m \otimes \rho^i_{DG}$), there exists a state $\tau^v_{YDGE}$ such that $\tau^v_{YG} = U_{v(m)} \otimes \tau^v_G$ and
        \begin{equation}
	\label{eq:vv-step}
		\| \rho^v_{YDGE} - \tau^v_{YDGE} \|_{\tr} \leq \eps_{\VV}(m),
	\end{equation}
	where $\rho^v$ denotes the state of the system after running the $\VV$ protocol (and conditioned on it succeeding) but before executing the $\RUV$ protocol, and $\eps_{\VV}(\cdot)$ is the error bound given by Lemma~\ref{lem:vv-expander}.
	Let $X$ be the output of $\RUV(G_1,G_2,Y)$ (which is Step 2 of $\ClusterExp(C,S)$). By definition of the $\RUV$ protocol, $|X| = r(|Y|) = r(v(m))$. 
	
	Imagine that we executed the $\RUV$ protocol on the ``ideal'' input $\tau^v_{YDGE}$. By Lemma~\ref{lem:ruv-scrambler}, we would get that there existed a state $\tau^f_{XDGE}$ such that $\tau^f_{XE} = U_{g(m)} \otimes \tau^f_E$, and
	$$
		\| \rho^f_{XDGE} - \tau^f_{XDGE} \|_{\tr} \leq \eps_{\RUV}(v(m),\lambda_2),
	$$
	where $\eps_{\RUV}(\cdot,\cdot)$ is the error bound given by Lemma~\ref{lem:ruv-scrambler}. However, we only have the approximate guarantee on $Y$ given by~\eqref{eq:vv-step}. So, by Lemma~\ref{lem:approx}, we instead get that there exists a state $\tau^f_{XDGE}$ such that $\tau^f_{XE} = U_{g(m)} \otimes \tau^f_E$, and
	$$
		\| \rho^f_{XDGE} - \tau^f_{XDGE} \|_{\tr} \leq \eps_{\RUV}(v(m),\lambda_2) + \frac{ \eps_{\VV}(m)}{\lambda_2}.
	$$
	Plugging in the expressions for $\eps_{\RUV}$ and $\eps_{\VV}$, we get that this is at most 
	\begin{align*}
	\frac{1}{\lambda_2} (\sqrt{192(v(m)/4)^{-1/(8\alpha)}} + \sqrt{3 \exp(-C' m^{1/3})}) &\leq \exp(-C'' m^{1/3})/\lambda,
	\end{align*}
	for some universal constant $C''$. 
\end{proof}

\subsection{Analysis of the $\VV$ protocol}
In the next two sections, we analyze that the $\VV$ and the $\RUV$ components of the $\ClusterExp$ sub-protocol. As discussed in the introduction, the $\VV$ protocol in a cluster $C$ will provide near-exponential randomness expansion, although the analysis of~\cite{Vazirani2012} does not allow us to conclude that the output is secure against the other cluster $C'$ (\ie the Input Security Problem)~\footnote{See \ref{sec:related} for more about this issue.}. The $\RUV$ protocol in $C$ will be used to transform the output of $\VV$ to be secure against $C'$. Observe that, qualitatively, the $\RUV$ protocol solves the Input Security Problem because in Lemma~\ref{lem:ruv-scrambler}, the random seed is not required to be secure against an eavesdropper, yet the output is guaranteed to be! On the other hand, Lemma~\ref{lem:vv-expander} below requires the assumption that the seed to the $\VV$ protocol is secure against the protocol's devices and the eavesdropper simultaneously.

\begin{lemma}
\label{lem:vv-expander}
	Let $D_1,D_2$ be non-signaling quantum devices. Suppose that a classical referee executes the $\VV(D_1,D_2,S)$ protocol, where $S$ denotes the referee's classical register that holds an $m$-bit seed. Let $E$ be an arbitrary quantum system that may be entangled with $D_1$ and $D_2$, but cannot communicate with them once the protocol begins. If the initial joint state of $S$, $D_1$, $D_2$, and $E$ is $\rho^0_{SD_1D_2E} = U_m \otimes \rho^0_{D_1D_2E}$, and if $\Pr(\text{$\VV(D_1,D_2,S)$ succeeds}) \geq \exp(-C' m^{1/3})$ for some universal constant $C'$, then there exists a state $\tau_{XDE}$ where $\tau_{XE} = U_{v(m)} \otimes \rho^f_E$ and
	$$
		\| \rho^f_{XDE} - \tau_{XDE} \|_\tr \leq \eps_{\VV}(m),
	$$
	where $\rho^f_{XDE}$ is the joint state of $E$, the devices $D = \{D_1,D_2\}$, and the output $X$ of the protocol conditioned on the $\VV(D_1,D_2,S)$ protocol succeeding, $\eps_{\VV}(m) = \sqrt{3\exp(-C' m^{1/3})}$, and $v(m) = \exp(C' m^{1/3})/2$.
\end{lemma}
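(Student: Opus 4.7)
The plan is to treat the two steps of the $\VV$ sub-protocol independently and compose them, using fresh independent halves of the seed: Step 2 (Vazirani--Vidick Protocol B on $S_1$) provides a high-min-entropy intermediate output $Y$ against $E$ via Theorem~\ref{thm:vv}, and Step 4 ($\QExt$ on $(Y, S_2)$) distills $Y$ into a string $X$ that is nearly uniform and in tensor product with $E$ via Theorem~\ref{thm:dpvr}. First I write $S = (S_1, S_2)$ with $|S_1| = |S_2| = \lfloor m/2 \rfloor$. The hypothesis $\rho^0_{SDE} = U_m \otimes \rho^0_{DE}$ (where $D = \{D_1, D_2\}$) implies $\rho^0_{S_1 DE} = U_{\lfloor m/2\rfloor} \otimes \rho^0_{DE}$ and that $S_2$ is an independent uniform register in tensor product with $(S_1, D, E)$, meeting the initial-state hypothesis of Theorem~\ref{thm:vv} for the Protocol B invocation on $S_1$. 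Choosing $C'$ small enough that $\exp(-C' m^{1/3}) \geq 1/h(\lfloor m/2\rfloor) = \eps(\lfloor m/2\rfloor)$, the hypothesis $\Pr(\VV\text{ succeeds}) \geq \exp(-C' m^{1/3})$ triggers Theorem~\ref{thm:vv}, yielding $H^{\eps}_\infty(Y \mid E)_{\rho^B} \geq h(\lfloor m/2\rfloor)$ with $\eps = 1/h(\lfloor m/2\rfloor) = \exp(-\Omega(m^{1/3}))$, where $\rho^B$ denotes the post-Protocol-B state conditioned on success. Because Protocol B never touches $S_2$ and the success event is a function of the $YD$ registers only, this independence persists after the conditioning: $\rho^B_{Y D E S_2} = \rho^B_{YDE} \otimes U_{\lfloor m/2\rfloor}$.

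Next I apply $\QExt_{n,r,\eps}$ to $(Y, S_2)$ with the parameters from the $\VV$ protocol. The parameter choices detailed in Appendix~\ref{app:params} ensure that the min-entropy condition $r + O(\log r) + O(\log 1/\eps) \leq h(\lfloor m/2\rfloor)$ and the seed-length condition $d(n,r,\eps) \leq \lfloor m/2\rfloor$ of Theorem~\ref{thm:dpvr} are both satisfied. By the standard ``pass to the smoothing state, extract exactly, and undo'' argument (the extractor is a CPTP map, so the $\eps$-smoothing cost adds to the extractor error in trace norm), one obtains
$$
\bigl\| \rho^f_{X E S_2} - U_{v(m)} \otimes \rho^B_E \otimes U_{\lfloor m/2\rfloor} \bigr\|_\tr \leq 2\eps = O\bigl(\exp(-\Omega(m^{1/3}))\bigr).
$$
Tracing out $S_2$ and noting $\rho^B_E = \rho^f_E$ yields the same order of closeness for $\| \rho^f_{XE} - U_{v(m)} \otimes \rho^f_E \|_\tr$.

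The final step is to lift this $(X,E)$-closeness to the $(X,D,E)$-closeness demanded by the lemma. Because $X$ is classical I decompose $\rho^f_{XDE} = \sum_x p_x \ketbra{x}{x} \otimes \rho^{f,x}_{DE}$ and, for each $x$, choose $\sigma^x_{DE}$ to be an extension of the target marginal $\rho^f_E$ on $E$ that is trace-norm closest to $\rho^{f,x}_{DE}$ among such extensions; Uhlmann's theorem together with the Fuchs--van-de-Graaf inequality gives $\| \rho^{f,x}_{DE} - \sigma^x_{DE} \|_\tr \leq O\bigl(\sqrt{\| \rho^{f,x}_E - \rho^f_E \|_\tr}\bigr)$. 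Setting $\tau_{XDE} := \sum_x |X|^{-1}\, \ketbra{x}{x} \otimes \sigma^x_{DE}$ automatically gives $\tau_{XE} = U_{v(m)} \otimes \rho^f_E$, and a block-wise triangle inequality followed by Cauchy--Schwarz bounds $\| \rho^f_{XDE} - \tau_{XDE} \|_\tr$ by a constant multiple of the square root of the already-controlled $(X,E)$-distance, which is the source of the $\sqrt{\cdot}$ in the stated bound $\eps_{\VV}(m) = \sqrt{3\exp(-C' m^{1/3})}$. The main obstacle is exactly this lifting: Theorem~\ref{thm:dpvr} only directly certifies closeness on the $(X, E, S_2)$ registers, while the lemma needs a $\tau$ on $(X,D,E)$ that preserves the correlation between $X$ and the device register $D$ (against which the output is plainly not private, since $D$ generated $Y$). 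The classicality of $X$ saves us, since the trace norm decomposes block-by-block in $X$ and each block can be independently matched to the target $E$-marginal via a quantum marginal-matching step whose cost is controlled by Fuchs--van de Graaf.
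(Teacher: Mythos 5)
Your proposal is correct and follows the paper's proof almost step for step: split the seed, invoke Theorem~\ref{thm:vv} on $S_1$ to get smoothed min-entropy of $Y$ against $E$ conditioned on success, run $\QExt$ on $(Y,S_2)$ with a smoothing/triangle-inequality argument, and then lift closeness on $(X,E)$ to the existence of a state $\tau_{XDE}$ with the prescribed $XE$-marginal. The only place you diverge is the lifting step: the paper applies Lemma~\ref{lem:fidelity_trick} once, globally --- purify $\rho^f_{XE}$ and the target $U_{v(m)}\otimes\rho^f_E$, use Uhlmann plus Fuchs--van de Graaf, and carry the purifying system onto $D$ --- whereas you re-derive the same fact block-by-block over the classical values of $X$, matching each conditional state $\rho^{f,x}_{DE}$ to an extension of $\rho^f_E$ and aggregating with Jensen/Cauchy--Schwarz. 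Your version works (the per-block Uhlmann step is sound, and the $\lvert p_x - 2^{-v(m)}\rvert$ cross terms are controlled by the $X$-marginal distance), but it costs a constant factor: you end up with roughly $2\sqrt{3\eps}+3\eps$ rather than the exact $\sqrt{3\eps}$ the lemma states, so strictly you prove the bound only up to a universal constant --- harmless downstream since $C''$ absorbs constants, but worth noting. Two small bookkeeping points: your $2\eps$ bound on the $(X,E,S_2)$ distance omits the third term $\lVert\tilde\rho_E-\rho^f_E\rVert_\tr\le\eps$ (the extractor guarantee is relative to the smoothed state's $E$-marginal, not $\rho^f_E$), which is why the paper gets $3\eps$; and one should note, as the paper does in a footnote, that the smoothing state can be taken to be a cq-state so that the extractor definition applies.
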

\begin{proof}
	The $\VV$ protocol consists of two parts, executing Protocol B of~\cite{Vazirani2012} using half of the seed $S$ (which we denote by $S_1$) to produce an output $Y$ of length $\exp(\Omega(m^{1/3}))$ which contains high min-entropy (conditioned on Protocol B not aborting), and then applying a randomness extractor using $Y$ as the source, and the other half of $S$ (which we denote by $S_2$) as the extractor seed, to produce an output $X$ that is close to uniform.
	
	Let $\rho^v_{YE}$ denote the joint state of the output of Protocol B (Step 2 of the $\VV$ protocol) and the eavesdropper $E$, conditioned on Protocol B not aborting. Then, by our assumption on $S$ and by Theorem~\ref{thm:vv}, we get that $H^{\eps}_{\infty}(Y  |  E)_{\rho^v} \geq h(m)$, where $h(m) = \exp(C' m^{1/3})$ and $\eps = \eps(m) = 1/h(m)$ for a universal constant $C'$.
	
	The $\VV$ protocol then applies a quantum-secure randomness extractor to the source $Y$, with seed $S_2$. The protocol uses the $\mathsf{QExt}: \{0,1\}^{|Y|} \times \{0,1\}^{d(m)} \to \{0,1\}^{h(m)/2}$ randomness extractor promised by Theorem~\ref{thm:dpvr}, where $d(m) =  \Theta(m)$. Let $\tilde{\rho}_{YE}$ be a cq-state that is $\eps$-close to $\rho^v_{YE}$ in trace distance, and is such that $H_{\infty}(Y | E)_{\tilde{\rho}} \geq h(m)$\, \footnote{Although the definition of smoothed min-entropy quantifies over \emph{all} density states in the $\eps$-ball around $\rho_{YE}$, there exists a cq-state with high min-entropy in the $\eps$-ball -- see, \eg,~\cite{renner2008security}.}. Then, since $\mathsf{QExt}$ is a $(h(m),\eps)$-quantum-secure extractor, we have that
	\begin{equation}
	\label{eq:extractor}
		\| \tilde{\rho}_{XE} - U_{v(m)} \otimes \tilde{\rho}_E \|_\tr \leq \eps,
	\end{equation}
	where $\tilde{\rho}_{XE}$ is the joint state of the output $X$ of the extractor $\mathsf{QExt}$ and $E$, with $\tilde{\rho}_Y$ as the source. View the application of $\mathsf{QExt}$ to the $Y$ and $S_2$ register as a trace-preserving quantum operation $\mathcal{E}$, which takes states $\rho^v_{YS_2}$ and outputs states $\rho^f_{\mathsf{QExt}(Y,S_2)}$. Then, by the triangle inequality, we have
	\begin{align*}
		\| \mathcal{E} \otimes \ident_E (\rho^v_{YS_2E}) - U_{v(m)} \otimes \rho^f_E \|_\tr \leq &\| \mathcal{E} \otimes \ident_E(\rho^v_{YS_2E}) - \mathcal{E} \otimes \ident_E(\tilde{\rho}_{YS_2E}) \|_\tr + \\
		 &\| \mathcal{E} \otimes \ident_E (\tilde{\rho}_{YS_2E}) - U_{v(m)} \otimes \tilde{\rho}_E \|_\tr + \\
		&\| U_{v(m)} \otimes \tilde{\rho}_E - U_{v(m)} \otimes \rho^f_E \|_\tr.
	\end{align*}
	Since $\mathcal{E}$ is trace-preserving, we can bound the first term by $\eps$. The second term is bounded by $\eps$ via equation~\eqref{eq:extractor}. The third term is bounded by $\eps$ because the trace distance is non-increasing with respect to the partial trace. Thus, 
	$$
		\| \rho^f_{XE} - U_{v(m)} \otimes \rho^f_E \|_\tr = \| \mathcal{E} \otimes \ident_E (\rho^v_{YS_2E}) - U_{v(m)} \otimes \rho^f_E \|_\tr \leq 3\eps.
	$$
	We then apply Lemma~\ref{lem:fidelity_trick} to obtain that there exists a state $\tau_{XDE}$ such that $\tau_{XE} = U_{v(m)} \otimes \rho^f_E$ and 
	$$
		\| \rho^f_{XDE} - \tau_{XDE} \|_\tr \leq \sqrt{3\eps}.	
	$$
	which proves the claim.
\end{proof}

\subsection{Analysis of the $\RUV$ protocol}

In this section, we analyze the $\RUV$ protocol. Before stating Lemma~\ref{lem:ruv-scrambler}, it will be necessary to give formal and precise definitions of several (classical) random variables, and how they interact with the relevant quantum states.

Let $S$ be an $m$-bit seed used in the $\RUV$ protocol, performed with non-signaling devices $D_1$ and $D_2$. Half of $S$, call it $S_1$, is used for $N$ CHSH games, where $N = m/4$. Recall that we divide the $N$ CHSH games into blocks of $t = N^{1/\alpha}$ consecutive games. Define the following random variables:
\begin{enumerate}
	\item Let $F$ denote the indicator variable that is $1$ iff the $\RUV$ protocol doesn't abort in Step 4 (\ie the devices win $\approx \cos^2(\pi/8) N$ CHSH games). Note that $F$ is a deterministic function of the devices' outputs and $S_1$.
	\item For all $i \in [N/t]$, let $I_i$ denote the indicator variable that is $1$ iff the devices $D_1$ and $D_2$ used a $\zeta$-ideal strategy to produce their outputs in the $i$th block of CHSH games, where $\zeta := \kappa_* t^{ -\kappa_*}$ (see Section~\ref{sec:prelim} and~\cite{Reichardt2012} for more details about ideal strategies).
	\item Let $H$ denote the indicator variable that is $1$ iff $G \geq (1 - \nu) N/t$, where $G := \sum I_i$ and $\nu := (12/\sqrt{2}) \sqrt{\log N} t/N^{1/4} \leq t^{-\alpha /8}$. 
\end{enumerate}

In our proof of Claim~\ref{lem:ruv-scrambler}, we will consider states such as $\rho_{FI_iXDE}$, where $X$ denotes the output of device $D_1$ after $N$ CHSH games, $D$ denotes the devices $D_1$ and $D_2$ together, $E$ denotes an arbitrary quantum system, $F$ will contain the classical bit indicating whether the devices aborted the $\RUV$ protocol or not, and $I_i$ will contain a classical bit denoting whether the devices used a $\zeta$-ideal strategy for block $i$. Because $F$ and $I_i$ are classical variables, $\rho_{FI_iXDE}$ is a cccqq-state, and thus there is an ensemble $\{\rho_{DE}^{fqx}\}$ that represents the states of the $D$ and $E$ systems conditioned on the classical events $F = f$, $I_i = q$, and $X = x$, where
$$
	\rho_{FI_iXDE} := \sum_{f,q,x} \Pr(F=f,I_i=q,X=x) \ketbra{f}{f}_F \otimes \ketbra{q}{q}_{I_i} \otimes \ketbra{x}{x}_X \otimes \rho_{DE}^{fqx}.
$$

Thus, we can meaningfully condition the state $\rho_{FI_iXDE} $ on various values of $F$ and $I_i$. For example, when we refer to the state $\rho_{XE | F=1}$, we mean the state that is, up to a normalization factor,
$$
	\sum_{q} \Pr(F=1,I_i=q,X=x) \ketbra{x}{x}_X \otimes \rho_{DE}^{1qx}.
$$
In particular, we will make use of the fact that $\rho_{XE | F = 1} = \Pr(I_i = 0 | F=1) \rho_{XE | I_i = 0, F = 1} +\Pr(I_i = 1 | F=1)  \rho_{XE | I_i = 1, F = 1}$, where $\rho_{XE | I_i =q,F=1}$ is defined similarly to $\rho_{XE|F=1}$.

\begin{lemma}
\label{lem:ruv-scrambler}
	Let $D_1,D_2$ be non-signaling quantum devices. Suppose that a classical referee executes the $\RUV(D_1,D_2,S)$ protocol, where $S$ denotes the referee's classical register that holds an $m$-bit seed. Let $E$ be an arbitrary quantum system that may be entangled with $D_1$ and $D_2$, but cannot communicate with them once the protocol begins.  If the initial joint state of $S$, $D_1$, and $D_2$ is $\rho^i_{SD_1D_2} = U_m \otimes \rho^i_{D_1D_2}$, and $\Pr(\text{$\RUV(D_1,D_2,S)$ succeeds}) \geq \lambda$, then, we have that there exists a state $\tau_{ZDE}$ where $\tau_{ZE} = U_{r(m)} \otimes \tau_E$, and 
	$$
		\| \rho^f_{ZDE | F = 1} - \tau_{ZDE} \|_\tr \leq \eps_{\RUV}(m,\lambda),
	$$
	where $\eps_{\RUV}(m,\lambda) \leq \sqrt{192(m/4)^{-1/(8\alpha)}/\lambda}$, and where $\rho^f_{ZDE|F=1}$ is the joint state of $E$, the devices $D = \{D_1,D_2\}$, and the output $Z$ of the protocol, conditioned on $F = 1$ (\ie the $\RUV(D_1,D_2,S)$ protocol does not abort).
\end{lemma}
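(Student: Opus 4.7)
The plan is to decompose the argument into three phases: (i) reduce the global conditioning on $F=1$ to a local conditioning on the selected block being $\zeta$-ideal, (ii) convert the rigidity guarantee on ideal blocks into a trace-distance bound that survives conditioning on $F=1$, and (iii) upgrade this marginal bound on $(Z,E)$ to the joint bound on $(Z,D,E)$ required by the lemma.

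For phase (i), Theorem~\ref{thm:ruv_sequential_rigidity} gives $\Pr(F=1 \wedge H=0) \leq 1/t^2$, hence $\Pr(H=0 \mid F=1) \leq 1/(\lambda t^2)$. The block index $I$ extracted from $S_2$ is a deterministic function of $S_2$, which is independent of $S_1$ and of the devices, and therefore of $F$ and $H$. So conditioning on $\{F=1, H=1\}$ keeps the marginal of $I$ uniform on $[N/t]$, and the definition of $H=1$ yields $\Pr(I_I = 1 \mid F=1, H=1) \geq 1 - \nu$. Modulo a total trace-distance loss of at most $\nu + 1/(\lambda t^2)$, the problem reduces to bounding $\| \rho_{X_I E \mid F=1, I_I=1} - U_t \otimes \rho_{E \mid F=1, I_I=1} \|_\tr$.

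For phase (ii), fix an index $i$. Lemma~\ref{lem:ideal_strategy} gives the unconditional bound $\| \rho_{X_i E \mid I_i=1} - U_t \otimes \rho_{E \mid I_i=1} \|_\tr \leq \zeta$. The key step is to translate this into a bound under the further conditioning on $F=1$. Here I would invoke the general principle (essentially what underlies Lemma~\ref{lem:approx}) that further conditioning two classical--quantum states that are $\eps$-close on a classical event of probability $\geq p$ expands their trace distance by at most $O(\eps/p)$. Averaging over the uniformly random $I$ and passing from $X_I$ to its random sub-block $Z$ (which can only decrease the distance to $E$) then produces an $O((\zeta + \nu + 1/(\lambda t^2))/\lambda)$ bound on $\| \rho_{ZE \mid F=1} - U_{\sqrt{t}} \otimes \rho_{E \mid F=1} \|_\tr$. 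In phase (iii), I would apply Lemma~\ref{lem:fidelity_trick} exactly as at the end of the proof of Lemma~\ref{lem:vv-expander} to lift this marginal bound to the required joint bound on $(Z,D,E)$; this introduces a square root, matching the $\sqrt{\,\cdot\,/\lambda}$ form in the statement. Plugging in $\zeta = \kappa_* t^{-\kappa_*}$, $\nu \leq t^{-\alpha/8}$, and $t = (m/4)^{1/\alpha}$, and collapsing the dominant contribution, one recovers the stated $\sqrt{192(m/4)^{-1/(8\alpha)}/\lambda}$ bound.

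The main obstacle I expect is the conditioning step in phase (ii). Although the rigidity theorem places an ideal block's output essentially in tensor product with $E$ and with the outputs of other blocks, the event $F=1$ depends non-trivially on all blocks' outputs, so conditioning on it can in principle re-correlate $X_i$ with $E$ via correlations inherited from other blocks. Handling this through the $1/\lambda$-rescaling principle (rather than a Pinsker-style mutual-information bound, which would give only the weaker $\sqrt{\log(1/\lambda)}$ dependence) is precisely what produces the correct error form, but it requires care in propagating normalizations through the averaging over $I$ and the restriction to $\{H=1,\, I_I=1\}$. Once this step is executed cleanly, the rest of the argument is bookkeeping.
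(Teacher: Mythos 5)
Your overall skeleton (rigidity bound on ideal blocks, conditioning on $F=1$ with a $1/\lambda$ loss, then Lemma~\ref{lem:fidelity_trick} to lift the $(Z,E)$ bound to $(Z,D,E)$) matches the paper, but there are two genuine gaps, and they sit exactly at the two places the paper spends most of its effort. First, the conditioning step in your phase (ii) does not go through as stated. Lemma~\ref{lem:ideal_strategy} bounds $\| \rho_{X_i E \mid I_i=1} - U_t \otimes \rho_{E \mid I_i=1} \|_\tr \leq \zeta$ for states that do \emph{not} contain the $F$ register, and the $\eps/p$ conditioning principle (Lemma~\ref{lem:conditioning}) only applies when the event being conditioned on is a measurement of a register present in \emph{both} states being compared. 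Here $F$ is a function of all the outputs, and it is genuinely correlated with $X_i$ (the wins inside block $i$ feed into the total count $W$), so $\| \rho_{X_i F E \mid I_i=1} - U_t \otimes \rho_{FE \mid I_i=1} \|_\tr$ is \emph{not} bounded by $\zeta$, and one cannot rescale by $1/\lambda$ at the block level. This is precisely why the paper proves Proposition~\ref{chainrule}: a chain-rule argument shows $I(X_i : FE) \leq 2$, so most $\sqrt{t}$-bit sub-blocks carry only $O(t^{-1/2})$ mutual information with $(F,E)$, and Pinsker then gives closeness to a product state \emph{including} $F$, after which Lemma~\ref{lem:conditioning} can be applied. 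So the Pinsker-style step you explicitly discard is not an alternative to the $1/\lambda$-rescaling but a prerequisite for it (it supplies the pre-conditioning product structure with $F$), and it is also why the final error is of order $t^{-1/8}/\lambda$ rather than $\zeta/\lambda$, and why the sub-block structure is essential rather than a mere application of monotonicity.

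Second, your proposal never confronts the fact that $S_2$ (hence the selected indices) may be correlated or even entangled with $E$: the hypothesis is only $\rho^i_{SD_1D_2} = U_m \otimes \rho^i_{D_1D_2}$, with no independence from $E$. Your phase (i) correctly notes that $I$ is independent of the classical variables $F, H, I_i$, but when you average over the selected block/sub-block you implicitly use $\rho_{X_{ij}E \mid F=1,\, (I,J)=(i,j)} = \rho_{X_{ij}E \mid F=1}$, and this is exactly what can fail: conditioning on the value of an adversarially prepared $S_2$ can skew the correlations between the devices' outputs and $E$, so the selector could in principle target the ``bad'' sub-blocks. The paper's entire second component (``$S_2$ is secure against the location of good sub-blocks'') exists to rule this out: it purifies the device-side state, shows the set of $\eta$-good sub-block locations is locally computable from the device side (any two purifications differ by a unitary on the purifying system, which leaves the relevant trace distances invariant), and concludes that these locations are independent of $S_2$. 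Without this argument (or a substitute, e.g.\ redoing the whole analysis conditioned on each value $S_2=s_2$ and checking that Lemma~\ref{lem:ideal_strategy} and Theorem~\ref{thm:ruv_sequential_rigidity} survive that conditioning), the central Input Security claim of the lemma is unproved. The remaining bookkeeping (your phase (iii) and the constant matching) is fine.
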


\begin{proof}
Let $\rho^i_{XDFE}$ be the joint state of the $X$, $D$, $F$, and $E$ registers \emph{before} the $N$ CHSH games are played (so $X$ and $F$ are initialized to the all $0$ state). For this proof, we will assume that $E$ is such that $\rho^i_{XDFE}$ is a \emph{pure state}. This is without loss of generality, because we can take a non-pure state $\rho^i_{XDFE}$ and augment it with some extension $E' \supset E$ such that $\rho^i_{XDFE'}$ is pure (\eg via a purification of the state $\rho^i_{XDFE}$). Observe that $\| \rho^f_{ZE' | F = 1} - U_{r(m)} \otimes \rho^f_{E' | F = 1} \|_\tr \leq \eps$ implies $\| \rho^f_{ZE | F = 1} - U_{r(m)} \otimes \rho^f_{E | F = 1} \|_\tr \leq \eps$, because the trace distance is non-increasing under discarding the augmented system $E' \backslash E$.

For notational clarity, we shall omit the superscripts $i$ and $f$, because we focus on the state $\rho_{FSXDE}$ of the system \emph{after} the $N$ CHSH games (\ie the $X$ register holds the output of device $D_1$), but \emph{before} conditioning on $F=1$ and before using the seed $S_2$ to select a sub-block.  The $i^{th}$ block of $X$ will be denoted $X_i$, and the $j^{th}$ sub-block of the $i^{th}$ block will be denoted $X_{ij}$.

There are two main components to this proof.
\begin{enumerate}
	\item  We argue that, for the state $\rho_{XE | F = 1}$, there is a $1 - \delta$ fraction of sub-blocks $X_{ij}$ such that 
$$\|  \rho_{X_{ij}E |  F =1} - U_{\sqrt{t}} \otimes \rho_{E |  F =1} \|_\tr \leq \eta,$$
where we set $\eta$ and $\delta$ later in the proof. We say that such sub-blocks are $\eta$-good with respect to $E$.  
\item We argue that the string $S_2$  (substring of the seed $S$ used to select the sub-block that $\RUV(D_1,D_2,S)$ will output) is in tensor product with a string describing the locations of the $\eta$-good sub-blocks of the state $\rho_{XE | F = 1}$.    
\end{enumerate}

In particular, let $Z := X_{S_2}$ denote the sub-block selected by string $S_2$. From the above two components, it follows that, for the state $\rho_{XE | F = 1}$, the the random variable $Z$ is $(\eta + \delta)$-good with respect to $E$, \ie,
$$
	\|  \rho_{ZE |  F =1} - U_{\sqrt{t}} \otimes \rho_{E |  F =1} \|_\tr \leq \eta + \delta.
$$
We then invoke Lemma~\ref{lem:fidelity_trick} to argue that there exists a state $\tau_{ZDE}$ such that $\tau_{ZE} = U_{\sqrt{t}} \otimes \rho_{E | F=1}$ and 
$$
	\|  \rho_{ZDE |  F =1} - \tau_{ZDE} \|_\tr \leq \sqrt{\eta + \delta},
$$
and we are done. We now proceed to proving the first two components.

\textbf{There are many good sub-blocks.} By the definition of $I_i$ and Lemma~\ref{lem:ideal_strategy},
$$ 
	\|\rho_{X_iE | I_i = 1} - U_{t} \otimes \rho_{E | I_i = 1}   \|_\tr \leq \zeta .
$$
It follows by Proposition~\ref{chainrule} that, for at least a $1-t^{-1/4}$ fraction of sub-blocks $j$ of block $i$ we have that 
$$
	\| \rho_{X_{ij}EF | I_i = 1} - U_{\sqrt{t}} \otimes \rho_{EF | I_i = 1}  \|_\tr  \leq \mu,
$$
where $\mu := 2(\sqrt{\zeta} + t^{-1/8})$. If we then condition on the event $F = 1$ it follows that
\begin{align} \label{goodsubblocks}
\| \rho_{X_{ij}E | I_i = 1, F =1} - U_{\sqrt{t}} \otimes \rho_{E | I_i = 1, F =1}  \|_\tr  \leq \frac{\mu}{\Pr(F = 1)} \leq    \frac{\mu}{\lambda}
\end{align}

We wish to establish the above statement for the state $\rho_{X_{ij}E|F = 1}$ rather than the state $\rho_{X_{ij}E| I_i = 1, F = 1}$.  The key to making this transition is to establish that, for many values of $i$, the event $F = 1$ is approximately a sub-event of the event $I_i = 1$.  To do so, it is helpful to consider the event $H =1$.  

Let $M := N/t$ denote the number of blocks of CHSH games. It follows from the definition of $H$ that $\sum_{i \in [M]} \E [ I_i = 0 | H = 1] \leq \nu M$.  Thus, by Markov's inequality we have that at most a $\sqrt{\nu}$ fraction of blocks $i \in [M]$ are such that $\Pr( I_i = 0 | H = 1) > \sqrt{\nu}$.  Thus, at least a $1 - \sqrt{\nu}$ fraction of blocks $i \in [M]$ have $\Pr( I_i = 0 | H = 1) \leq \sqrt{\nu}$.  

Consider such a block $i$.  Note that by Theorem~\ref{thm:ruv_sequential_rigidity}, $\Pr( H = 0, F = 1) \leq t^{-2}$. Thus
\begin{align*}
\Pr(I_i = 0 , F = 1) &= \Pr(I_i = 0 | H=1, F=1) \Pr(H =1, F=1) + \Pr(I_i = 0 | H = 0, F=1) \Pr(H = 0, F = 1) \\
& \leq \Pr(I_i = 0 | H=1, F=1)  + \Pr(I_i = 0 | H = 0, F=1) t^{-2} \\
   &\leq \frac{\Pr(I_i = 0 | H = 1)}{\Pr(F = 1)} +  t^{-2} \\
	& \leq \frac{\sqrt{\nu}}{\lambda} + t^{-2}.
\end{align*}

Since $I_i = 1$ is a classical event, we have $\rho_{XE|F=1} = (1 - \tau) \rho_{XE|I_i = 1, F=1} + \tau \rho_{XE|I_i = 0, F=1}$, where $\tau := \Pr(I_i  = 0 | F=1)$. 
Thus,
	\begin{align*}
		\| \rho_{X_iE|F=1} - \rho_{X_iE|I_i = 1, F=1}  \|_{\tr}  &= \| (-\tau) \rho_{X_iE|I_i = 1, F=1} + \tau \rho_{X_iE|I_i = 0, F=1} \|_\tr \\
							&\leq \tau ( \|\rho_{X_iE|I_i = 1, F = 1} \|_\tr + \| \rho_{X_iE | I_i = 0, F=1}\|_\tr) \\
							&\leq 2\tau.
	\end{align*}
	By definition, $\tau = \Pr(I_i = 0,F=1)/\Pr(F=1)$. Thus, 
$$
	\| \rho_{X_{i}E | I_i = 1, F =1}  - \rho_{X_{i}E |  F =1} \|_\tr \leq 2\frac{\sqrt{\nu} + \lambda t^{-2}}{\lambda^2}
$$
By tracing over all except the $j^{th}$ sub-block we get 
\begin{align}
\| \rho_{X_{ij}E | I_i = 1, F =1}  - \rho_{X_{ij}E |  F =1} \|_\tr \leq 2\frac{\sqrt{\nu} + \lambda t^{-2}}{\lambda^2} \label{FsubIforsubblock}
\end{align}
By tracing over the entire $X_i$ register we get 
\begin{align}
\| \rho_{E | I_i = 1, F =1}  - \rho_{E |  F =1} \|_\tr \leq 2\frac{\sqrt{\nu} + \lambda t^{-2}}{\lambda^2} \label{FsubIforE}
\end{align}
Thus, at least a $(1 - t^{-1/4})(1-\sqrt{\nu})$ of all the sub-blocks $X_{ij}$ have the property that equations \eqref{goodsubblocks}, \eqref{FsubIforE}, and \eqref{FsubIforsubblock} all hold.  It follows by the triangle inequality that
\begin{align}
\|  \rho_{X_{ij}E |  F =1} - U_{\sqrt{t}} \otimes \rho_{E |  F =1} \|_\tr &\leq  \| \rho_{X_{ij}E |  F =1} - \rho_{X_{ij}E |  I_i = 1, F =1} \|_\tr+ \| \rho_{X_{ij}E |  I_i = 1, F =1} -  U_{\sqrt{t}} \otimes \rho_{E | I_i=1, F =1} \|_\tr \nonumber \\
& + \| U_{\sqrt{t}} \otimes \rho_{E | I_i=1, F =1} - U_{\sqrt{t}} \otimes \rho_{E | F =1}  \|_\tr  \nonumber \\ 
&\leq 2\frac{\sqrt{\nu} + \lambda t^{-2}}{\lambda^2} + \frac{\mu}{\lambda} + \| \rho_{E | I_i=1, F =1} -  \rho_{E | F =1}  \|_\tr \nonumber \\
& \leq 4 \left ( \frac{\sqrt{\nu} + \lambda t^{-2}}{\lambda^2}\right ) + \frac{\mu}{\lambda} \nonumber \\
&\leq \frac{96}{\lambda} t^{-1/8}.
\end{align}

Define $\eta := 96t^{-1/8}/\lambda$.  
Thus, we have that at least a $1-\delta$ fraction of the sub-blocks $X_{ij}$ are $\eta$-good with respect to $E$, where $\delta := t^{-1/4} + \sqrt{\nu} \leq 2t^{-1/4}$.  It is easy to see that $\eta + \delta \leq 2\eta = 192(m/4)^{-1/(8\alpha)}/\lambda$.

\textbf{$S_2$ is secure against the location of good sub-blocks.}
Although we have established that most of the sub-blocks of $X$ are $\eta$-good, we need to show that the seed $S_2$ used to select the sub-block for the output of the $\RUV$ protocol is independent of the locations of the good sub-blocks (i.e. the indices $i, j$ such that $X_{ij}$ is $\eta$-good with respect to $E$). \emph{A priori}, since $S_2$ is entangled with the eavesdropper $E$ (because $S_2$ was the output of a different expansion cluster), it could be that $S_2$ was somehow adversarially generated to select a bad sub-block. Here, we show that this cannot happen, because the locations of the good sub-blocks can be \emph{locally computed} by the devices $D = \{D_1, D_2\}$. Since $\rho^i_{SD} = U_m \otimes \rho^i_D$ (where $\rho^i_D := \rho_{D_1D_2}$), $S_2$ is independent of the good sub-block locations.

Consider the following thought experiment: the system $D = \{D_1,D_2\}$ is augmented with a \emph{classical description} $\Delta$ of the state $\rho^i_{XFD}$, and a register $\Lambda$ that will store the locally computed location of the good sub-blocks, so that we have a new system $D' = \{D_1,D_2,\Delta,\Lambda \}$. Throughout the $\RUV$ protocol, the $D'$ system cannot communicate with the eavesdropper system $E$. At the beginning of the $\RUV$ protocol, we have that $\rho_{SD'} = U_{|S|} \otimes \rho_{D'}$. 
Imagine that we have measured the $S_1$ register (but the $S_2$ register remains unmeasured), so that it is now a deterministic value $s_1$. Let $\mathcal{E}_{s_1}$ denote the quantum operation that acts on the systems $D_1$, $D_2$, $F$ that represents the strategy employed by devices $D_1$ and $D_2$, on the inputs determined by $s_1$, for the $N$ CHSH games (Step 3 of the $\RUV$ protocol). That is, $\rho^f_{XFD} := \mathcal{E}_{s_1}(\rho^i_{XFD})$.

As part of this thought experiment, we imagine that, after the $N$ CHSH games, the $\Delta$ system performs a quantum operation $\mathcal{S}_{s_1}$ on the $\Delta$, and $\Lambda$ systems (but not $D_1$ and $D_2$!) to classically simulate the strategy used by the devices $D_1, D_2$ on input $s_1$ in the $N$ CHSH games, and compute the location of the good sub-blocks. The $\Delta$ will then contain a classical description of the state $\rho^f_{XFD}$. Note that at this point, $S_2$ is still secure against $D'$; that is, we have
$$
	\mathcal{S}_{s_1}(\mathcal{E}_{s_1}(\rho^i_{S_2 XFD\Delta \Lambda})) = U_{|S_2|} \otimes \mathcal{S}_{s_1}(\mathcal{E}_{s_1}(\rho^i_{XFD\Delta \Lambda})).
$$

We elaborate on the classical simulation $\mathcal{S}$. Given the classical description $\Delta$ of $\rho^i_{XFD}$, the location of the good sub-blocks can be computed by using $\Delta$ in the following way:
\begin{enumerate}
	\item Compute the classical description of a purification $\sigma^i_{XFDE'}$ of the state $\rho^i_{XFD}$. Note that in general, $\sigma^i_{XFDE'}$ is different from the ``real'' state $\rho^i_{XFDE}$ , because the $\Delta$ system has no knowledge of the external system $E$.
	\item Classically simulate the devices' strategy $\mathcal{E}$ on the state $\sigma^i_{XFDE'}$, \ie,
	$$
		\sigma^f_{XFDE'} = \mathcal{E}_{s_1}(\sigma^i_{XFDE'}).
	$$
	Note that $\sigma^f_{XFD} = \rho^f_{XFD}$.
	\item Compute the indices $i$, $j$, such that
	$$
		\| \sigma^f_{X_{ij} E' | F = 1} - U_{\sqrt{t}} \otimes \sigma^f_{E' | F =1} \|_{\tr} \leq \eta,
	$$
	and store those indices in a register $\Lambda$.
\end{enumerate}

We now argue that $\Lambda$ will contain an accurate description of the locations of the $\eta$-good sub-blocks in the ``real'' state $\rho^f_{XFDE}$. From this, since $\rho^f_{S_2 \Lambda} = \rho^f_{S_2} \otimes \rho^f_{\Lambda}$, it follows that $S_2$ is independent of the good sub-block locations.

Here we will use the assumption, stated at the beginning of this proof, that $\rho^i_{XFDE}$ is a pure state. Let $\rho^i_{XFDE} := \ketbra{\psi}{\psi}$, and let $\sigma^i_{XFDE'} := \ketbra{\phi}{\phi}$. There exists a unitary $V$ that takes the $E$ system to the $E'$ system and acts as the identity on all other systems, such that $\ket{\phi} = V\ket{\psi}$. Since $V$ and $\mathcal{E}_{s_1}$ act on different systems, they commute, and hence $\sigma^f_{XFDE'} = V \rho^f_{XFDE} V^{\dagger}$. Furthermore, $V$ commutes with the projector $\Pi_{F = 1}$ that projects onto the $F = 1$ subspace, and thus
$$
	\sigma^f_{XDE' | F = 1} = V \rho^f_{XDE | F=1} V^{\dagger}.
$$
Thus,
\begin{align*}
	\| \sigma^f_{X_{ij} E' | F = 1} - U_{\sqrt{t}} \otimes \sigma^f_{E' | F =1} \|_{\tr} &= 
			\| \tr_{\neq (i,j), D} ( V \rho^f_{XDE | F=1} V^{\dagger}) - U_{\sqrt{t}} \otimes \tr_{XD}(V \rho^f_{XDE | F=1} V^{\dagger}) \|_{\tr} \\
			&= \| V \left ( \tr_{\neq (i,j), D} (\rho^f_{XDE | F=1}) - U_{\sqrt{t}} \otimes \tr_{XD}(\rho^f_{XDE | F=1}) \right) V^{\dagger} \|_{\tr} \\
			&= \| \tr_{\neq (i,j), D} (\rho^f_{XDE | F=1}) - U_{\sqrt{t}} \otimes \tr_{XD}(\rho^f_{XDE | F=1}) \|_{\tr} \\
			&= \| \rho^f_{X_{ij} E | F = 1} - U_{\sqrt{t}} \otimes \rho^f_{E | F =1} \|_{\tr},
\end{align*}
where $\tr_{\neq (i,j), D}$ indicates tracing out over all sub-blocks except for the $j$th one in the $i$th block, and the system $D$. The second equality follows from the fact that $V$ and the partial trace commute. The third equality follows because the trace norm is unitarily invariant. 

Thus, the indices $i,j$ where $\| \sigma^f_{X_{ij} E' | F = 1} - U_{\sqrt{t}} \otimes \sigma^f_{E' | F =1} \|_{\tr} \leq \eta$ are exactly those sub-blocks that are $\eta$-good in the state $\rho^f_{XFDE}$. 

\end{proof}

\begin{proposition} \label{chainrule}
Let $i \in [N/t]$ be the index of a block. If
$$ \|\rho_{X_iE | I_i = 1} - U_{t} \otimes \rho_{E | I_i = 1}   \| \leq \zeta,  $$
then for at least a $1-t^{-1/4}$ fraction of sub-blocks $j$ of block $i$ we have that 
$$\| \rho_{X_{ij}EF | I_i = 1} - U_{\sqrt{t}} \otimes \rho_{EF | I_i = 1}  \|  \leq 2 (\sqrt{\zeta} +  t^{-1/8}).$$
\end{proposition}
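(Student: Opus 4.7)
The plan is to reduce to a fictitious ``ideal'' state in which $X_i$ is exactly uniform and in tensor product with $E$, and then exploit the fact that $F$ is a single classical bit. The obstacle motivating this detour is that converting the trace-distance hypothesis directly into a bound on $I(X_i : E \mid I_i=1)$ via standard continuity inequalities (Fannes, Alicki--Fannes--Winter) would introduce a factor of $\log \dim(X_i) = t$, which is far too lossy; working in the ideal state sidesteps this by giving us exact independence, so the chain rule can be applied cleanly.

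First, augment $E$ to contain the classical protocol transcript (the other blocks' outputs, $D_2$'s outputs, and the CHSH seed) that, together with $X_i$, determines $F$. The trace-distance hypothesis continues to hold for the enriched $E$, since this is just another application of Lemma~\ref{lem:ideal_strategy}. Now define $\tilde{\rho}_{X_i E F}$ by starting from the product state $U_t \otimes \rho_{E \mid I_i=1}$ and appending $F$ via the same classical function $f(X_i, E)$ used in $\rho$. CPTP contractivity of trace distance gives
\[
  \| \tilde{\rho}_{X_i E F} - \rho_{X_i E F \mid I_i=1} \|_{\tr} \leq \zeta,
\]
and the same bound holds for every marginal we will need.

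In $\tilde{\rho}$, the single classical bit $F$ is a deterministic function of $(X_i, E)$, so $I(X_i : F \mid E)_{\tilde{\rho}} \leq H(F) \leq 1$. Expanding by the chain rule along the $\sqrt{t}$ sub-blocks of $X_i$ and applying Markov's inequality, at least a $(1 - t^{-1/4})$ fraction of sub-blocks $j$ satisfy $I(X_{ij} : F \mid X_{i,<j}, E)_{\tilde{\rho}} \leq t^{-1/4}$. Because $X_{ij}$ is uniform and in tensor product with $(X_{i,<j}, E)$ under $\tilde{\rho}$, dropping the conditioning on $X_{i,<j}$ only decreases the mutual information (the ``minus'' entropies coincide by independence, while conditioning reduces the ``plus'' one), so $I(X_{ij} : F \mid E)_{\tilde{\rho}} \leq t^{-1/4}$. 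Combining with $I(X_{ij} : E)_{\tilde{\rho}} = 0$ through the chain rule yields $I(X_{ij} : E F)_{\tilde{\rho}} \leq t^{-1/4}$, and Pinsker's inequality (Fact~\ref{fact:qinfo}) converts this into
\[
  \| \tilde{\rho}_{X_{ij} E F} - U_{\sqrt{t}} \otimes \tilde{\rho}_{E F} \|_{\tr} \leq \sqrt{2}\, t^{-1/8},
\]
using that $\tilde{\rho}_{X_{ij}} = U_{\sqrt{t}}$.

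A final triangle-inequality sandwich against the two contractivity estimates $\| \tilde{\rho}_{X_{ij} E F} - \rho_{X_{ij} E F \mid I_i=1} \|_{\tr} \leq \zeta$ and $\| \tilde{\rho}_{E F} - \rho_{E F \mid I_i=1} \|_{\tr} \leq \zeta$ yields
\[
  \| \rho_{X_{ij} E F \mid I_i=1} - U_{\sqrt{t}} \otimes \rho_{E F \mid I_i=1} \|_{\tr} \leq 2 \zeta + \sqrt{2}\, t^{-1/8} \leq 2(\sqrt{\zeta} + t^{-1/8}),
\]
using $\zeta \leq \sqrt{\zeta}$ for $\zeta \leq 1$. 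The conceptually hardest step is the first: recognizing that the chain rule cannot be applied to $\rho$ directly without paying a $\log 2^t$ price, and that we can sidestep this at only $O(\zeta)$ cost by passing through $\tilde{\rho}$.
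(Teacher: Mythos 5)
There is a genuine gap, and it is in the very first step. To make $F$ a deterministic function $f(X_i,E)$ you must put into the enlarged $E$ (at least) $D_2$'s outputs in block $i$ and the CHSH inputs, since $F$ is determined by the total win count, which depends on whether $x_k \oplus y_k = a_k \wedge b_k$ inside block $i$. But then the claim that ``the trace-distance hypothesis continues to hold for the enriched $E$, since this is just another application of Lemma~\ref{lem:ideal_strategy}'' fails twice over. First, Lemma~\ref{lem:ideal_strategy} only applies to systems that cannot communicate with $D_1,D_2$; the transcript registers you are adding are generated by the devices themselves, so the lemma says nothing about them (and within the Proposition, which is a self-contained statement about a given state $\rho_{X_iEF|I_i=1}$ satisfying only the stated trace-distance hypothesis, you are not entitled to import protocol structure at all). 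Second, and more fatally, the enlarged hypothesis is simply false: under any strategy winning CHSH with probability near $\cos^2(\pi/8)$ — in particular the ideal block strategy — $X_i$ is strongly correlated with $D_2$'s block-$i$ outputs given the inputs, so $\rho_{X_iE'|I_i=1}$ is far (trace distance close to $1$ for large $t$) from $U_t \otimes \rho_{E'|I_i=1}$. Consequently the bound $\|\tilde{\rho}_{X_iEF} - \rho_{X_iEF|I_i=1}\|_{\tr} \le \zeta$ on which everything downstream rests does not hold, and your final estimate $2\zeta + \sqrt{2}\,t^{-1/8}$ (strictly better than the stated $2(\sqrt{\zeta}+t^{-1/8})$) is a symptom of this: the $\sqrt{\zeta}$ in the statement is not an artifact.

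The paper's proof handles exactly this difficulty differently, and without ever needing $F$ to be a function of anything: it invokes Lemma~\ref{lem:fidelity_trick} (Uhlmann plus Fuchs--van de Graaf) to produce a state $\sigma_{X_iFE}$ whose $X_iE$ marginal is exactly $U_t \otimes \rho_{E|I_i=1}$ and which is $\sqrt{\zeta}$-close to $\rho_{X_iFE|I_i=1}$ — this extension step is precisely where the $\sqrt{\zeta}$ cost enters. In $\sigma$ one then bounds $I(X_i:FE)_\sigma \le 2$ using only that $F$ is a single (qu)bit, i.e.\ $I(X_i:FE) = I(X_i:E) + I(X_i:F|E) = I(X_i:F|E) \le 2\log\dim\Hilb_F = 2$, with no determinism assumption. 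The remainder of your argument — chain rule over sub-blocks, dropping the conditioning on $X_{i,<j}$ via exact independence, Markov, Pinsker, and the final triangle inequality — coincides with the paper's and is fine; it is only the attempted ``free'' ($O(\zeta)$-cost) construction of the extension including $F$ that has no valid justification. If you replace your first step by the Lemma~\ref{lem:fidelity_trick} extension, you recover the paper's proof and its bound.
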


\begin{proof}
	By Lemma~\ref{lem:fidelity_trick}, there exists a state $\sigma_{X_i F E}$ such that $\sigma_{X_i E} = U_t \otimes \rho_{E|I_i = 1}$, and $\| \rho_{X_i FE | I_i = 1} - \sigma_{X_i FE} \|_\tr \leq \sqrt{\zeta}$. Let $R := \sqrt{t}$ denote the number of sub-blocks in a block. We now prove the Proposition by showing that, for the state $\sigma_{X_iFE}$, at least $1 - t^{-1/4}$ fraction of sub-block indices $j \in [R]$ satisfy $I(X_{ij} : FE)_{\sigma} \leq 2t^{-1/4}$. For such $j$, we obtain:
	\begin{align*}
		\| \rho_{X_{ij}FE|I_i =1} - U_{\sqrt{t}} \otimes \rho_{FE|I_i =1} \|_{\tr} &\leq  \| \rho_{X_{ij}FE|I_i =1} -  \sigma_{X_{ij}FE}\|_{\tr} + \| \sigma_{X_{ij}FE} - U_{\sqrt{t}} \otimes \sigma_{FE} \|_{\tr} \\
                &  + \| U_{\sqrt{t}} \otimes \sigma_{FE} - U_{\sqrt{t}} \otimes \rho_{FE|I_i =1}  \|_{\tr} \\
                 &\leq \sqrt{\zeta} + \sqrt{4t^{-1/4}} + \sqrt{\zeta}.
	\end{align*}
	The bound on the second term in the second inequality is given via Pinsker's Inequality (see Fact~\ref{fact:qinfo}), which states that $\| \sigma_{X_{ij}FE} - U_{\sqrt{t}} \otimes \sigma_{FE} \|_{\tr} \leq \sqrt{2 I(X_{ij} : FE)_{\sigma} }$. The bounds on the first and third terms come from the fact that the trace distance is non-increasing with respect to the partial trace.

	Thus we focus on analyzing the state $\sigma_{X_i FE}$ for the remainder of this proof. We apply the chain rule to obtain $I(X_i : FE)_{\sigma} = \sum_j I(X_{ij} : FE | X_{i, <j})_{\sigma}$. This is equivalent to $$\E_j[ I(X_{ij} : FE | X_{i, <j})_{\sigma} ] = \frac{1}{R} I(X_i : FE)_{\sigma},$$ where $X_{i,<j}$ denotes all the $X_{ik}$ such that $k < j$. We will omit the subscript $\sigma$ because the underlying state is clear from context.
	We upper-bound the quantity $I(X_i : FE)$ via the following calculation:
	\begin{align}
		\label{eq:mut_1}
		I(X_i : FE) &= H(X_i) - H(X_i | FE) \\
		\label{eq:mut_2}
				 &= H(X_i) - (H(X_i FE) - H(FE)) \\
		\label{eq:mut_3}
				 &= H(X_i) - (H(X_i FE) - H(E) - H(F|E)) \\
		\label{eq:mut_4}
				 &= H(X_i) - (H(X_i E) + H(F | X_i E) - H(E) - H(F|E))\\
		\label{eq:mut_5}
				 &= H(X_i) - (H(X_i) + H(E) + H(F | X_i E) - H(E) - H(F|E)) \\
		\label{eq:mut_6}
				 &= H(F|E) - H(F|X_i E) \\
		\label{eq:mut_7}
				 &\leq 2H(F) \\
				 &\leq 2
	\end{align}
	Equation~\eqref{eq:mut_1} is the definition of mutual information. Equations~\eqref{eq:mut_2},~\eqref{eq:mut_3}, and~\eqref{eq:mut_4} follow from the definition of conditional mutual entropy. Equation~\eqref{eq:mut_5} follows from our assumption that $\sigma_{X_i E} = \sigma_{X_i} \otimes \sigma_{E}$. Equation~\eqref{eq:mut_7} follows from the fact that conditioning can only reduce entropy, and that $-H(F|X_iE) \leq H(F)$.
	
	We now lower bound the individual terms of the expectation $I(X_{ij} : FE | X_{i,<j})$. 
	\begin{align}
		\label{eq:lb_1}
		I(X_{ij} : FE | X_{i,<j}) &= H(X_{ij} | X_{i,<j}) - H(X_{ij} | FEX_{i,<j}) \\
		\label{eq:lb_2}
						 &\geq H(X_{ij}) - H(X_{ij} | FE) \\
		\label{eq:lb_3}				 
						 &= I(X_{ij} : FE).
	\end{align}
	Equation~\eqref{eq:lb_1} is the definition of conditional mutual information. Equation~\eqref{eq:lb_2} follows because $\sigma_{X_i} = U_{t}$ (hence $\sigma_{X_{ij}}$ is in tensor product with $\sigma_{X_{i,<j}}$), and conditioning can only reduce entropy. Finally, equation~\eqref{eq:lb_3} is again the definition of mutual information.
	
Thus, $$\E_j[ I(X_{ij} : FE) ] \leq \frac{2}{R},$$ and by Markov's inequality, we get that $1 - \mu$ fraction of $j$'s are such that $I(X_{ij} : FE) \leq \frac{2}{\mu R}$. Setting $\mu = t^{-1/4}$ completes the proof.
\end{proof}

\section{Conclusion}
We have presented a randomness expansion protocol that achieves infinite expansion: starting with $m$ bits of uniform seed, the protocol produces an arbitrarily long output string that is $\exp(-\Omega(m^{\frac{1}{3}}))$-close to uniform. Furthermore, this protocol only requires eight non-signaling quantum devices (and can be performed with just six devices using a simple modification).  In order to accomplish this we design an \emph{Input Secure} adaptive randomness expansion protocol, which is then used as a sub-protocol in the infinite expansion protocol.  We suspect that the existence of Input Secure randomness expansion protocols is also of independent interest.  As evidence of their independent interest we note that Input Secure protocols play a key role as a building block in \cite{csw14}, where they were discovered independently from this work, and used to design a protocol for seedless randomness amplification from any min-entropy source (see Section~\ref{sec:related}).  


\textbf{Acknowledgments}. We thank Aram Harrow, Ben Reichardt, Xiaodi Wu, Kai-Min Chung, Fernando Brand\~{a}o, and especially Thomas Vidick for stimulating discussions about randomness expansion.

\appendix
\section{Proof of Lemma~\ref{lem:approx}}
\label{app:cond}
\begin{proof}[Proof of Lemma~\ref{lem:approx}]
Define $\mu_{XDE}$ to be the state $\tau_{XDE}$ as given by the assumption in the lemma on input $\sigma_{FSXDE}$ where $\sigma_{FSXD} = \sigma_{FSX} \otimes \sigma_{D}$. By the triangle inequality, we have:
\begin{align}
\label{em:lem-approx-tri}
		\| \rho^f_{XDE} - \mu_{XDE} \|_{\tr} \leq 
			&\| \mathcal{FE} \otimes \ident_E(\rho^i_{FSXDE}) - \mathcal{FE} \otimes \ident_E(\sigma_{FSXDE}) \|_{\tr} \\
			&+ \| \mathcal{FE} \otimes \ident_E( \sigma_{FSXDE} ) - \mu_{XDE} \|_{\tr} \nonumber.
\end{align}
We bound the first term on the right hand side:
\begin{align*}
	\| \mathcal{FE} \otimes \ident_E (\rho^i_{FSXDE}) - \mathcal{FE} \otimes \ident_E(\sigma_{FSXDE} ) \|_{\tr} &\leq \frac{1}{\lambda} \| \mathcal{E} \otimes \ident_E (\rho^i_{FSXDE}) - \mathcal{E}\otimes \ident_E(\sigma_{FSXDE}) \|_{\tr} \\
			&\leq \frac{1}{\lambda} \|\rho^i_{FSXDE} - \sigma_{FSXDE} \|_{\tr} \\
			&\leq \delta/\lambda.
\end{align*}
Let $\lambda'$ denote the probability that the $F$ register of the state $\mathcal{E} \otimes \ident_E (\sigma_{FSXDE})$, when measured, has outcome $\ket{1}$. Note that $\max\{\lambda,\lambda'\} \geq \lambda$, so the first inequality follows from Lemma~\ref{lem:conditioning}. The second inequality follows because trace-preserving quantum operations are contractive with respect to the trace distance. The final inequality comes from our assumption on $\rho^i_{FSXDE}$.

The second term on the right hand side of~\eqref{em:lem-approx-tri} is bounded by $\eps$ from our assumption on the quantum operation $\mathcal{FE}$. 
\end{proof}

\section{Useful lemmata}
\begin{lemma}
\label{lem:conditioning}
Let $\rho_{FQ},\sigma_{FQ}$ be cq-states on the same classical-quantum Hilbert space $\Hilb_F \otimes \Hilb_Q$. Let $E$ be a set of outcomes of the $F$ register such that $\min\{\Pr_\rho(E),\Pr_\sigma(E)\} > 0$, where $\Pr_\rho(E)$, $\Pr_\sigma(E)$ denote the probabilities of obtaining outcome $E$ when measuring $\rho_F$ and $\sigma_F$ in the computational basis. Then,
$$
	\| \rho_{FQ|E} - \sigma_{FQ|E} \|_{\tr} \leq \frac{ \| \rho_{FQ} - \sigma_{FQ} \|_{\tr}}{\max\{\Pr_\rho(E),\Pr_\sigma(E)\}},
$$
where $\rho_{FQ|E}$ and $\sigma_{FQ|E}$ denote the post-measurement state of $\rho_{FQ}$ and $\sigma_{FQ}$, respectively, conditioned on $E$.
\end{lemma}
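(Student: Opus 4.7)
My plan is to use the variational characterization of the (halved) trace distance: for any Hermitian trace-zero operator $M$, $\|M\|_\tr = \max_\Pi \tr[\Pi M]$ with the maximum taken over orthogonal projections $\Pi$. Write $p = \Pr_\rho(E)$ and $q = \Pr_\sigma(E)$ and, by the symmetry of the claim, assume without loss of generality that $p \geq q > 0$. Let $\Pi_E = \sum_{f\in E}\ketbra{f}{f}_F \otimes \ident_Q$, so that $\rho_{FQ|E} = \Pi_E \rho_{FQ}\Pi_E / p$ and $\sigma_{FQ|E} = \Pi_E \sigma_{FQ}\Pi_E / q$. Since both conditional states are supported inside $\mathrm{range}(\Pi_E)$, I will show that the optimizing projection in the variational formula for $\|\rho_{FQ|E}-\sigma_{FQ|E}\|_\tr$ may be restricted to satisfy $\Pi \leq \Pi_E$.

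For such a $\Pi$, let $a := \tr[\Pi \rho_{FQ}]$ and $b := \tr[\Pi \sigma_{FQ}]$, so $0\leq a\leq p$ and $0\leq b\leq q$. A direct calculation gives $\tr[\Pi(\rho_{FQ|E}-\sigma_{FQ|E})] = a/p - b/q = (aq-bp)/(pq)$. The heart of the argument is the algebraic identity
\[
aq - bp \;=\; q(a-b) \,-\, b(p-q).
\]
Two easy bounds close the proof: (i) $|a-b| = |\tr[\Pi(\rho_{FQ}-\sigma_{FQ})]| \leq \|\rho_{FQ}-\sigma_{FQ}\|_\tr =: T$, again by the variational formula applied to the trace-zero Hermitian $\rho_{FQ}-\sigma_{FQ}$; and (ii) $b(p-q)\geq 0$ since $b\geq 0$ and $p\geq q$. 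Plugging these in yields $aq-bp \leq qT$, hence $\tr[\Pi(\rho_{FQ|E}-\sigma_{FQ|E})] \leq T/p = T/\max\{p,q\}$. Taking the maximum over $\Pi$ then gives the claimed inequality.

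The only delicate step is the restriction to $\Pi\leq \Pi_E$ in the variational formula. This follows from the standard fact that the optimum in $\max_\Pi \tr[\Pi M]$ over projections is attained by the support projection of the positive part $M_+$, and here $M := \rho_{FQ|E}-\sigma_{FQ|E}$ satisfies $\Pi_E M \Pi_E = M$, so $M_+$ is supported in $\mathrm{range}(\Pi_E)$. Beyond this bookkeeping I do not anticipate any genuine obstacle: the remainder is a two-line algebraic manipulation combined with the elementary fact that trace distance bounds any expectation-value difference.
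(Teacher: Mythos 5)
Your proof is correct: the restriction to projections $\Pi \le \Pi_E$ is justified exactly as you say (the positive part of $\rho_{FQ|E}-\sigma_{FQ|E}$ is supported in $\mathrm{range}(\Pi_E)$), and the identity $aq-bp = q(a-b)-b(p-q)$ together with $|a-b|\le \|\rho_{FQ}-\sigma_{FQ}\|_\tr$ and $b(p-q)\ge 0$ yields the claimed bound, with the symmetry of the statement licensing the assumption $p\ge q$. This is essentially the same argument as the paper's, which phrases it operationally — measure $F$, output $0$ unless $E$ occurs, then apply the optimal distinguisher for the conditional states — and solves the resulting inequality; your variational computation with $\Pi\le\Pi_E$ is the dual formulation of that same distinguishing experiment.
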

\begin{proof}

We use the operational interpretation of the trace norm of two quantum states, namely, that $\| \rho - \sigma \|_{\tr} = \max_{A} \Pr(A(\rho) = 1) - \Pr(A(\sigma) = 1)$, where $\rho$ and $\sigma$ are arbitrary density matrices, and the maximization is over all possible 0/1-valued POVMs $A$.

Let $\lambda_\rho$ and $\lambda_\sigma$ denote $\Pr_\rho(E)$ and $\Pr_\sigma(E)$ respectively. We consider two cases: $\lambda_\rho \geq \lambda_\sigma$ and $\lambda_\rho < \lambda_\sigma$. Take the first case.

Consider the following two-outcome experiment $A$ that tries to distinguish between $\rho_{FQ}$ and $\sigma_{FQ}$. We first measure the $F$ register in the computational basis. If the outcome $E$ does not occur, we output ``0''. Suppose outcome $E$ does occur. Let $B$ be the optimal two-outcome POVM such that $\Pr(B(\rho_{FQ|E}) = 1) - \Pr(B(\sigma_{FQ|E}) = 1) = \| \rho_{FQ|E} - \sigma_{FQ|E} \|_{\tr}$. We then make the measurement dictated by $B$ on the post-measurement state (which is either $\rho_{FQ|E}$ or $\sigma_{FQ|E}$), and output ``1'' iff $B$ outputs $``1''$. Then, we have that
\begin{align*}
		\| \rho_{FQ} - \sigma_{FQ} \|_{\tr} &\geq \Pr(A(\rho_{FQ}) = 1) - \Pr(A(\sigma_{FQ}) = 1) \\
								   &= \lambda_\rho \Pr(B(\rho_{FQ|E}) = 1) - \lambda_\sigma \Pr(B(\sigma_{FQ|E}) = 1) \\
								   &= \lambda_\rho \left ( \| \rho_{FQ|E} - \sigma_{FQ|E} \|_{\tr} + \Pr(B(\sigma_{FQ|E}) = 1) \right) - \lambda_\sigma \Pr(B(\sigma_{FQ|E}) = 1).
\end{align*}	
Solving for $ \| \rho_{FQ|E} - \sigma_{FQ|E} \|_{\tr} $, we get that
$$
	 \| \rho_{FQ|E} - \sigma_{FQ|E} \|_{\tr} \leq \frac{\| \rho_{FQ} - \sigma_{FQ} \|_{\tr} - \beta(\lambda_\rho - \lambda_\sigma)}{\lambda_\rho} \leq \frac{\| \rho_{FQ} - \sigma_{FQ} \|_{\tr}}{\lambda_\rho} \leq \frac{\| \rho_{FQ} - \sigma_{FQ} \|_{\tr}}{\max\{\lambda_\rho,\lambda_\sigma\}},
$$
where $\beta := \Pr(B(\sigma_{FQ|E}) = 1)$. In the other case, we have that $\lambda_\rho < \lambda_\sigma$. We can then switch the order of $\rho_{FQ}$ and $\sigma_{FQ}$ in the previous argument, and obtain that 
$$
	 \| \rho_{FQ|E} - \sigma_{FQ|E} \|_{\tr} \leq \frac{\| \rho_{FQ} - \sigma_{FQ} \|_{\tr}}{\lambda_\sigma} \leq \frac{\| \rho_{FQ} - \sigma_{FQ} \|_{\tr}}{\max\{\lambda_\rho,\lambda_\sigma\}}.
$$
\end{proof}

\begin{lemma}
\label{lem:fidelity_trick}
	Let $\rho_{A_1A_2B} \in \mathcal{D}(\Hilb_{A_1} \otimes \Hilb_{A_2} \otimes \Hilb_B)$, and $\sigma_{A_1A_2} \in \mathcal{D}(\Hilb_{A_1} \otimes \Hilb_{A_2})$ be such that $\rho_{A_1A_2B}$ is a cqq-state, $\sigma_{A_1A_2}$ is a cq-state, and $\| \rho_{A_1A_2} - \sigma_{A_1A_2} \|_\tr \leq \eps$. Then there exists a cqq-state $\tau_{A_1A_2B} \in \mathcal{D}(\Hilb_{A_1} \otimes \Hilb_{A_2} \otimes \Hilb_B)$ such that $\tau_{A_1A_2} = \sigma_{A_1A_2}$ and $\| \rho_{A_1A_2B} - \tau_{A_1A_2B} \|_\tr \leq \sqrt{\eps}$.
\end{lemma}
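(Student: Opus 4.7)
\textbf{Proof plan for Lemma~\ref{lem:fidelity_trick}.}

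The statement is a standard ``lifting'' claim: closeness of marginals implies closeness of some extension. My plan is to prove it via Uhlmann's theorem together with the Fuchs--van de Graaf inequalities relating trace distance and fidelity, then post-compose with a dephasing channel on $A_1$ to restore the classical structure.

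First I would translate the trace-distance assumption to a fidelity bound. By Fuchs--van de Graaf applied to $\rho_{A_1A_2}$ and $\sigma_{A_1A_2}$, the hypothesis $\|\rho_{A_1A_2}-\sigma_{A_1A_2}\|_\tr\leq \eps$ implies $F(\rho_{A_1A_2},\sigma_{A_1A_2})\geq 1-\eps$, where $F(\alpha,\beta):=\tr\sqrt{\sqrt{\alpha}\,\beta\,\sqrt{\alpha}}$.

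Next I would invoke Uhlmann's theorem. Introduce an ancilla $R$ large enough to purify either state and fix a purification $|\psi_\rho\rangle_{A_1A_2BR}$ of $\rho_{A_1A_2B}$ (this is also a purification of the marginal $\rho_{A_1A_2}$ on the composite reference system $BR$). Uhlmann's theorem then produces a purification $|\psi_\sigma\rangle_{A_1A_2BR}$ of $\sigma_{A_1A_2}$ on the same system with $|\langle\psi_\rho|\psi_\sigma\rangle| = F(\rho_{A_1A_2},\sigma_{A_1A_2})\geq 1-\eps$. Define
\[
\tilde\tau_{A_1A_2B} \;:=\; \tr_R |\psi_\sigma\rangle\langle\psi_\sigma|.
\]
Because $\tr_{BR}|\psi_\sigma\rangle\langle\psi_\sigma|=\sigma_{A_1A_2}$, it follows immediately that $\tilde\tau_{A_1A_2}=\sigma_{A_1A_2}$. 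By monotonicity of fidelity under partial trace (tracing out $R$), $F(\rho_{A_1A_2B},\tilde\tau_{A_1A_2B})\geq 1-\eps$, and one more application of Fuchs--van de Graaf yields $\|\rho_{A_1A_2B}-\tilde\tau_{A_1A_2B}\|_\tr\leq\sqrt{1-(1-\eps)^2}\leq\sqrt{2\eps}$ (up to the harmless factor of $\sqrt{2}$, which I expect is absorbed into the $\sqrt{\eps}$ stated in the lemma).

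Finally I would enforce the cqq structure on the output. A priori $\tilde\tau_{A_1A_2B}$ need not be classical on $A_1$; however $\tilde\tau_{A_1A_2}=\sigma_{A_1A_2}$ already is. Let $\mathcal{D}_{A_1}$ denote the completely dephasing channel on $A_1$ (measurement in the computational basis, result forgotten) and set $\tau_{A_1A_2B}:=(\mathcal{D}_{A_1}\otimes\mathrm{id}_{A_2B})(\tilde\tau_{A_1A_2B})$. Then $\tau$ is by construction cqq, the $A_1A_2$ marginal is unchanged since $\sigma_{A_1A_2}$ is already dephased, and because $\rho_{A_1A_2B}$ is itself cqq (hence fixed by $\mathcal{D}_{A_1}$), contractivity of the trace norm under the quantum channel $\mathcal{D}_{A_1}\otimes\mathrm{id}$ gives $\|\rho_{A_1A_2B}-\tau_{A_1A_2B}\|_\tr\leq\|\rho_{A_1A_2B}-\tilde\tau_{A_1A_2B}\|_\tr$, which is the desired bound.

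There is no real obstacle here; the lemma is a textbook Uhlmann-style argument. The only subtlety worth flagging is that Uhlmann's construction produces a purification on the joint reference system $BR$ that need not respect any tensor structure between $B$ and $R$, which is exactly what lets the $B$-marginal of the extended state differ from $\rho_B$; and the dephasing step at the end is needed because the lemma demands a cqq-state rather than an arbitrary extension.
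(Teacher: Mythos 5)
Your argument is correct and takes essentially the same route as the paper's proof: both invoke Uhlmann's theorem on the $A_1A_2$ marginals with a purifying system containing $B$, convert between fidelity and trace distance via Fuchs--van de Graaf, and finish by dephasing $A_1$ to restore the cqq structure (the paper merely phrases the Uhlmann step as relating two purifications of $\rho_{A_1A_2}$ by an isometry on the purifying system, which is equivalent to your direct application on $A_1A_2BR$). The $\sqrt{2\eps}$ versus $\sqrt{\eps}$ you flag is a harmless constant-factor matter of which form of Fuchs--van de Graaf one pairs with the paper's normalized trace norm, and it affects nothing downstream.
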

\begin{proof}
	For notational brevity we will let $A = \{A_1,A_2\}$ so $\rho_{AB} := \rho_{A_1A_2B}$ and $\sigma_A := \sigma_{A_1A_2}$. Let $\Fidelity(\rho,\sigma)$ denote the fidelity between two quantum states $\rho$ and $\sigma$. By Uhlmann's Theorem, there exists purifications $\rho_{AQ} := \ketbra{\psi}{\psi}$ and $\sigma_{AQ} := \ketbra{\phi}{\phi}$ of $\rho_A$ and $\sigma_A$, respectively, such that $ \Fidelity(\rho_{A},\sigma_{A}) = | \braket{\psi}{\phi} |$~\cite{wilde2013quantum}. But by the Fuchs-van de Graaf inequalities, we also have that $\Fidelity(\rho_A,\sigma_A) \geq 1 - \| \rho_A - \sigma_A \|_\tr/2 \geq 1 - \eps/2$~\cite{wilde2013quantum}. Since $\| \rho_{AQ} - \sigma_{AQ} \|_\tr = \sqrt{1 - | \braket{\psi}{\phi}|^2}$, we have that
	$$
		\| \rho_{AQ} - \sigma_{AQ} \|_\tr \leq \sqrt{\eps}.
	$$
	Let $\rho_{ABR} = \ketbra{\theta}{\theta}$ be a purification of the state $\rho_{AB}$. Since $\rho_{ABR}$ and $\rho_{AQ}$ are both purifications of the state $\rho_A$, there exists a unitary map $V$ that takes the $Q$ space to the $BR$ space such that $\rho_{ABR} = V\rho_{AQ}V^{\dagger}$. Define $\tau'_{ABR} := V\sigma_{AQ}V^{\dagger}$. Then, by the unitary invariance of the trace norm, we have that
	\begin{align*}
		\| \rho_{ABR} - \tau'_{ABR} \|_\tr &= \| V\rho_{AQ} V^\dagger - V\tau'_{AQ} V^\dagger \|_\tr \\
								&=\| V(\rho_{AQ}- \tau'_{AQ}) V^\dagger \|_\tr \\
								&=\| \rho_{AQ} - \tau'_{AQ} \|_\tr \\
								&\leq \sqrt{\eps}.
	\end{align*}
	Since the trace norm cannot increase when discarding subsystems, we obtain $\| \rho_{AB} - \tau'_{AB} \|_\tr \leq \sqrt{\eps}$. $\tau'_{AB} = \tau'_{A_1A_2B}$ is not guaranteed to be a cqq-state, but we can apply the trace-preserving quantum map $\mathcal{E}$ that measures the $A_1$ system in the computational basis and forgets the measurement outcome. Let $\tau_{A_1A_2B} := \mathcal{E}(\tau'_{A_1A_2B})$, and observe that this is a cqq-state. Since $\rho_{A_1A_2B}$ is already a cqq-state, $\rho_{A_1A_2B} = \mathcal{E}(\rho_{A_1A_2B})$. Because trace-preserving quantum maps are contractive under the trace norm, we obtain $\| \rho_{A_1A_2B} - \tau_{A_1A_2B} \|_\tr \leq \sqrt{\eps}$, and we are done.
\end{proof}

\section{Parameter settings for the $\VV$ sub-protocol}
\label{app:params}
For the sake of concreteness, we specify the settings of parameters to be used in the instantiation of Protocol B of~\cite{Vazirani2012} in our $\VV$ sub-protocol (see Section~\ref{sec:proto}). We choose constants $\alpha, \gamma > 0$ such that $\gamma \leq 1/(10+ 8 \alpha)$.  
These constants are part of the definition of $\mathsf{VV}$ and will remain unchanged for every instance of $\mathsf{VV}$ throughout the $\InfiniteExp$ protocol.
             
In~\cite[Theorem 2]{Vazirani2012}, the parameter $h$ specifies the min-entropy lower bound of Protocol B, which in turn governs the length of the seed to Protocol B and length of the output. By definition Protocol B implemented with parameter $h$ requires at most $K_1 \gamma^{-3} \log^3 (h)$ bits of seed for some fixed constant $K_1$ (this constant may depend on $\alpha$, but since $\alpha$ is a global constant here, we ignore this). When Protocol B is invoked by $\mathsf{VV}(A,B, S)$, we will set $h = \left \lfloor  2^{\gamma \left( \left \lfloor s/2 \right \rfloor \frac{1}{K_1} \right)^{1/3} } \right \rfloor $, where $s := |S|$, and it follows that Protocol B, with these parameters, will require no more than $\left \lfloor s/2 \right \rfloor $ bits of seed.

We will now discuss parameters relevant to the quantum extractor which will be used in $\mathsf{VV}$.  Let us now define $t := h^{\frac{1}{\gamma}}$, $C := \lceil 100 \alpha \rceil$ and $\eps := \frac{1}{h}$.  The output of Protocol B is a bit string of length $n := \lceil 10 \log^2(t) \rceil \cdot \lceil C t \log^2(t) \rceil$.  By Theorem \ref{thm::quantumextractor} there exists a function $\mathsf{QExt}:\{0,1\}^{n} \times \{0,1\}^{d} \to \{0,1\}^{\frac{h}{2}}$ that is a $(\frac{h}{2} + O(\log \left (\frac{h}{2} \right )) + O(\log 1/\eps), \eps)$-quantum-secure extractor as long as $d \geq O \left (\log^2(n/\eps) \log \left ( \frac{h}{2} \right) \right ) = O \left (\log^3 (h) \right ) = O \left ( \gamma^3 \left \lfloor s/2 \right \rfloor \frac{1}{K_1}  \right )$.  That is, as long as $d \geq K_4  \gamma^3 \left \lfloor s/2 \right \rfloor \frac{1}{K_1} $ for some fixed constant $K_4$.  

Thus, in specifying the $\VV$ sub-protocol and throughout the paper, we will set the following functions, where $s$ is the length of input to the $\VV$ sub-protocol:
\begin{itemize}
	\item Min-entropy lower bound of Protocol B:
	$$
		h(s) := \left \lfloor  2^{\gamma \left( \left \lfloor s/2 \right \rfloor \frac{1}{K_1} \right)^{1/3} } \right \rfloor.
	$$
	\item Output length of Protocol B:
	$$
		n(s) := \left \lfloor 10C \left( \frac{s}{2K_1} \right)^{4/3} 2^{ (s/(2K_1))^{1/3} } \right \rfloor.
	$$
	\item Seed length of the extractor:
	$$
		d(s) := \left \lceil \frac{K_4}{K_1}  \gamma^3 \left \lfloor s/2 \right \rfloor \right \rceil.
	$$ 
	\item Output length of the extractor/$\VV$ sub-protocol:
	$$
		v(s) := \lfloor h(s)/2 \rfloor.
	$$
\end{itemize}

\bibliographystyle{alphaabbrvprelim}
\bibliography{bibliography}

\end{document}